\renewcommand{\bibname}{References}%
\let\ElseVierBibliography\bibliography%
\renewcommand{\bibliography}[1]{%
\section*{\bibname}%
\ElseVierBibliography{#1}%
}%
\newtheorem{theorem}{Theorem}[section]
\newtheorem{lemma}[theorem]{Lemma}
\newtheorem{proposition}[theorem]{Proposition}
\newtheorem{definition}[theorem]{Definition}
\newtheorem{example}[theorem]{Example}
\newcommand{\KM}[1]{{\color{green} \textbf{KM:} #1}}
\newcommand{\RM}[1]{\textcolor{blue}{\textbf{RM:} #1}}
\newcommand{\AKS}[1]{\textcolor{violet}{\textbf{AKS:} #1}}
\newcommand{\new}[1]{{\color{cyan} #1}}
\newcommand{\todo}[1]{\textcolor{orange}{\textit{TODO:} #1}}
\newcommand{\Sadegh}{\textcolor{magenta}}
\newtheorem{problem}{Problem}
\newcommand{\Ker}{T_{\mathfrak s}}
\newcommand{\pref}{\mathrm{pref}}
\newcommand{\ineven}{\in_{\mathit{even}}}
\newcommand{\inodd}{\in_{\mathit{odd}}}
\newcommand{\p}[1]{\mathit{Player}~#1}
\newcommand{\game}{\mathcal{G}}
\newcommand{\Bh}{\widehat{B}}
\newcommand{\psiS}{\psi_S}
\newcommand{\psihS}{\widehat{\psi}_S}
\newcommand{\psiR}{\psi_R}
\newcommand{\psihR}{\widehat{\psi}_R}
\newtheorem{remark}{Remark}
\newcommand{\supp}{\mathsf{supp}\;}
\newcommand{\dist}{\mathit{Dist}}
\newcommand{\DM}{\mathrm{DM}}
\newcommand{\IntVer}{\mathit{InterimVertices}}
\newcommand{\Sys}{\mathfrak S}
\newcommand{\Xs}{\mathcal S}
\newcommand{\xs}{s}
\newcommand{\Us}{\mathcal U}
\newcommand{\tker}{T_{\mathfrak s}}
\newcommand{\Xh}{\widehat{\Xs}}
\newcommand{\xh}{\widehat{\xs}}
\newcommand{\Fo}{\overline{F}}
\newcommand{\Fu}{\underline{F}}
\newcommand{\RS}{\mathit{ReachSet}}
\newcommand{\cpre}{\mathit{Cpre}}
\newcommand{\apre}{\mathit{Apre}}
\newcommand{\Cont}{\rho}
\newcommand{\W}{\mathsf{WinDom}}
\newcommand{\set}[1]{\lbrace #1 \rbrace}
\newcommand{\tup}[1]{\langle #1 \rangle}
\newcommand{\dom}{\mathsf{dom}\;}
\newcommand{\REFlem}[1]{\text{Lem.~\ref{#1}}}
\newcommand{\REFthm}[1]{\text{Thm.~\ref{#1}}}
\newcommand{\REFdef}[1]{Def.~\ref{#1}}
\newcommand{\REFsec}[1]{Sec.~\ref{#1}}
\newcommand{\REFprop}[1]{Prop.~\ref{#1}}
\newcommand{\REFfig}[1]{Fig.~\ref{#1}}
\newcommand{\REFprob}[1]{Prob.~\ref{#1}}
\newcommand{\nat}{\mathbb N}
\newcommand{\incone}[2]{[#1;#2]}
\newcommand{\inctwo}[2]{\incone{#1}{#2}}
\newcommand{\parity}{\mathit{Parity}(\mathcal{P})}
\newcommand{\paritya}{\mathit{Parity}(\widehat{\mathcal{P}})}
\newcommand{\Policy}{\Pi}
\newcommand{\twohalf}{2\sfrac{1}{2}}
\begin{document}

\begin{frontmatter}

\title{Symbolic Control for Stochastic Systems\\ via Finite Parity Games}

\author[mpisws]{Rupak Majumdar}
\ead{rupak@mpi-sws.org}

\author[mpisws]{Kaushik Mallik\corref{mycorrespondingauthor}}
\ead{kmallik@mpi-sws.org}

\author[mpisws]{Anne-Kathrin Schmuck\corref{mycorrespondingauthor}}
\cortext[mycorrespondingauthor]{Corresponding authors}
\ead{akschmuck@mpi-sws.org}

\author[newcastle]{Sadegh Soudjani}
\ead{sadegh.soudjani@ncl.ac.uk}

\address[mpisws]{Max Planck Institute for Software Systems, Kaiserslautern, 67663 Germany}
\address[newcastle]{School of Computing, Newcastle University, NE4 5TG United Kingdom}


\begin{abstract}
We consider the problem of computing the maximal probability of satisfying an $\omega$-regular specification
for stochastic, continuous-state, nonlinear systems evolving in discrete time.
The problem reduces, after automata-theoretic constructions, to finding the maximal probability of satisfying
a parity condition on a (possibly hybrid) state space.
While characterizing the exact satisfaction probability is open, we show that a lower bound on this probability can be obtained by \textbf{(I)} computing an under-approximation of the \emph{qualitative winning region}, i.e., states from which the parity
condition can be enforced almost surely, and \textbf{(II)} computing the maximal probability of reaching this qualitative winning region.

The heart of our approach is a technique to \emph{symbolically} compute the under-approximation of the qualitative winning region in step \textbf{(I)} via a \emph{finite-state
abstraction} of the original system as a \emph{$\twohalf$-player parity game}.
Our abstraction procedure uses only the support of the probabilistic evolution; it does not use precise numerical transition probabilities.
We prove that the winning set in the abstract $\twohalf$-player game induces an under-approximation of the qualitative winning region in the original synthesis problem, along with a policy to solve it.
By combining these contributions with (a) a  symbolic fixpoint algorithm to solve $\twohalf$-player games and (b) existing techniques for reachability policy synthesis in stochastic nonlinear systems, 
we get an \emph{abstraction-based algorithm} for finding a lower bound on the maximal satisfaction probability.

We have implemented the abstraction-based algorithm in \textsf{Mascot-SDS} \citep{majumdar2020symbolic}, where we combined the outlined abstraction step with our recent tool \textsc{FairSyn} \citep{banerjee2021fast} that solves $\twohalf$-player parity games more efficiently then existing algoritms. 
We evaluated our implementation on the nonlinear model of a perturbed bistable switch from the literature. 
We show empirically that the lower bound on the winning region computed by our approach is precise, by comparing against an over-approximation of the qualitative winning region.
Moreover, thanks to \textsc{FairSyn}, we outperform a recently proposed tool for solving this problem by a large margin.
In fact, in many cases the existing tool crashed after consuming too much memory on a standard laptop, whereas our tool consumed small amount of memory and produced results within reasonable amount of time.
\end{abstract}


\begin{keyword}
Correct-by-design controller synthesis\sep controlled Markov processes\sep stochastic dynamical systems\sep $\omega$-regular specifications
\end{keyword}

\end{frontmatter}


\section{Introduction}

Controlled Markov processes (CMPs) over continuous state spaces and evolving in discrete time 
form a general model for temporal decision making under stochastic uncertainty.
In recent years, the problem of finding or approximating optimal policies in CMPs for specifications given in temporal logics
or automata has received a lot of attention.
While there is a steady progression towards more expressive models and properties \citep{Belta17,HS_TAC19,Coogan19,majumdar2020symbolic,Coogan20,lavaei2021automated},
a satisfactory solution that can handle \emph{nonlinear} models for general \emph{$\omega$-regular specifications} in a \emph{symbolic} way
is still open.
In this paper, we make progress toward a solution to this general problem. 

For \emph{finite-state} Markov decision processes (MDP), one can find optimal policies for $\omega$-regular specifications by decomposing the problem
into two parts \citep{CourcoubetisYannakakis,BdA95,Luca98,BK08}.
\begin{inparaenum}[(I)]
\item[\textbf{(I)}] Using graph-theoretic techniques that ignore the actual transition probabilities, one can find the set of states that
ensures the satisfaction of the specification almost surely. Further, for any state in this \emph{almost sure winning region}, an optimal policy for \emph{almost sure} satisfaction of the specification can be derived. 
\item[\textbf{(II)}] One finds an optimal policy to \emph{reach} the almost sure winning region using linear programming or traditional dynamic programming approaches.
Combining both policies returns an optimal policy for the overall synthesis problem.
\end{inparaenum}

Unfortunately, this two-step solution approach does not carry over to  optimal policy synthesis for \emph{all} $\omega$-regular 
specifications given a \emph{continuous-state} CMP.
First, we do not have characterizations of optimal policies for almost sure satisfaction in this case---such as whether randomization or memory is necessary.
Second, in contrast to finite-sate MDPs, it is possible that the almost sure winning region of a CMP is empty, even if there is a policy that satisfies the specification with positive probability \citep{majumdar2020symbolic}.

However, we show in this paper, that the same decomposition can be used to compute an \emph{under-approximation} for the optimal policy instead: that is, the resulting policy gives a lower bound on the probability of satisfying a given $\omega$-regular specification from every state.
While existing techniques \citep{SA13,TMKA17,HS_TAC19} can be used in step \textbf{(II)} to compute the reachability probability with any given precision, this paper provides a new technique to under-approximate the set of states of a CMP that almost surely satisfies a \emph{parity} specification in step \textbf{(I)} of the decomposition. 
A parity specification is a canonical representation for all $\omega$-regular properties \citep{EJ91,Thomas95};
thus, our approach provides a way to under-approximate any $\omega$-regular specification.

The main contribution of our paper is to show that the approximate solution to step \textbf{(I)} can be computed by a \emph{symbolic algorithm} over a finite state \emph{abstraction} of the underlying CMP that is using only the support of the probabilistic evolution of the system. This \emph{abstraction-based} policy synthesis technique is inspired by abstraction-based controller design (ABCD) for non-stochastic systems \citep{ReissigWeberRungger_2017_FRR,tabuada09,nilsson2017augmented}.
In ABCD, a nonlinear dynamical system is abstracted into a discrete two-player game over a finite discrete state space obtained by partitioning the 
continuous state space into a finite set of cells. 
The resulting abstract two-player game is then used to synthesize a discrete controller which is then refined into a continuous controller for the original system.

In ABCD, the abstract two-player game models the interplay between the controller ($\p{0}$) and the dynamics ($\p{1}$) such that the resulting abstract controller (i.e., the winning strategy of $\p{0}$) can be correctly refined to the original system. This requires a very powerful $\p{1}$; in every instance of the play (corresponding to the system being in one particular abstract cell $\xh$) and for any input $u$ chosen by $\p{0}$, $\p{1}$ can adversarially choose both (a) the actual continuous state $s$ within $\xh$ to which $u$ is applied and (b) any continuous disturbance affecting the system at this state~$s$.

The key insight in our work is that the stochastic nature of the underlying CMP does not require a fully adversarial treatment of continuous disturbances by $\p{1}$ in the abstract game to allow for controller refinement. Intuitively, disturbances need to be handled in a \emph{fair} way: In the long run, all transitions with positive probability will eventually occur.
We show, that the resulting fairness assumption on the behavior of $\p{1}$ can be modeled by an additional \emph{random} player (also called $\sfrac{1}{2}$ player) resulting in a so called $\twohalf$-player game \citep{Condon92,CJH03,CHSurvey} as the abstraction. 

This provides a conceptually very appealing result of our paper. Using $\twohalf$-player games as abstractions of CMPs allows to utilize the machinery of symbolic game solving, analogous to ABCD techniques for non-stochastic systems, while 
capturing the intuitive differences between the problem instances by the use of a random player in the abstract game. Most interestingly, the stochastic nature of the resulting abstract game \emph{eases} the abstract synthesis problem compared to standard ABCD, in the sense that controllers for a given specification are more likely to be found. 
The price to be paied, however, is that solving $\twohalf$-player games is computationally harder than solving two-player games. To midigate this problem we utilize the recent symbolic algorithm for solving $\twohalf$-player games by \cite{banerjee2022tacas} which has almost the same worst-case complexity as the usual algorithm for (non-stochastic) two player games. With this, our approach becomes computatinally tracktable and vastly outperforms recent tools using different solution approaches.

In conclusion, we obtain a \emph{symbolic algorithm} to compute an under-approximation of the almost sure winning region in a \emph{continuous-state}  CMP for all $\omega$-regular specifications. Moreover, similar to the results for finite-state MDPs, this shows that the (approximate) solution to step \textbf{(I)} does not need to handle the actual transition probabilities. They are only needed in step \textbf{(II)}, where existing techniques can be used. In addition, we yield an efficient implementation of our appraoch by utilizing a recent symbolic game solving algorithm for $\twohalf$-player parity games.

A preliminary version of our paper appeared in ADHS '21 \citep{majumdar2021adhs}.
In this manuscript we have made substantial improvement over the conference version; the main additional contributions are as follows:
\begin{enumerate}
\item Most importantly, we have completed the exposition of our solution to the stated synthesis problem by
\begin{enumerate}
 \item additionally \emph{addressing the quantitative aspect} of the optimal policy synthesis problem in \REFthm{thm:the problem can be broken down} and \REFthm{thm:absorbing}, while the conference version of the paper only dealt with the qualitative aspect of the problem, and
 \item including the new symbolic algorithm for solving the resulting $\twohalf$-player games adapted from \cite{banerjee2022tacas}, while the conference version utilized the exising algorithm by \cite{CJH03}.
\end{enumerate}
	\item We have \emph{added the proof} of \REFthm{thm:main}, which is the core theorem of this paper and which was omitted in the conference version.
	\item We have substantially \emph{improved the experimental evaluation} (see \REFsec{sec:case_study}) of our approach by considering benchmark examples proposed by other authors, and comparing the performance of our tool against the available tool from the literature.
	\item We have provided examples and extensive explanation of the required steps throughout the manuscript.
\end{enumerate}

\smallskip


\noindent\textbf{Related Work.} Our paper extends the recent results of \citet{majumdar2020symbolic} from B\"uchi specifications to parity specifications.
Seen through the lens of $\twohalf$-player games, the algorithm of \citet{majumdar2020symbolic} can be seen as directly solving
a B\"uchi game symbolically on a non-probabilistic abstraction, by implicitly reducing the $\twohalf$-player games
to two player games on graphs with extreme fairness assumptions \citep{Pnueli83}.
While it may be possible to present a similar ``direct'' symbolic algorithm for parity games, the details of handling fairness
symbolically get difficult.
Our exposition in this paper helps separate out the different combinatorial aspects: the representation of the abstraction
and the solution of the game on the abstraction, leading to a clean proof of correctness.

$\twohalf$-player games have been used as abstractions of probabilistic systems, both in the finite case \citep{Prism-games}
and for stochastic \emph{linear} systems \citep{Belta17}. 
Our paper subsumes the result of both these cases by showing a computational procedure to abstract a general, nonlinear CMP into a finite-state $\twohalf$-player game. 
Further the existing approach for stochastic linear systems \citep{Belta17} only considers specifications in the GR(1) fragment of linear temporal logic, whereas we can handle any $\omega$-regular specification.
Another difference is that the linearity assumption enables the use of symbolic algorithms based on polyhedral manipulations, when the specification is given using polytopic predicates on the state space.
Instead, our abstractions are based on gridding the state space, as in ABCD for non-stochastic nonlinear systems.

Stochastic nonlinear systems were abstracted to finite-state bounded-parameter Markov decision processes (BMDP) by \citet{Coogan20} for the purpose of controller synthesis.
By using algorithms for finding controllers on BMDPs against deterministic Rabin automata, they provide an alternative approach for approximating the optimal controller against $\omega$-regular specifications.
Their method is conceptually very different. 
It explicitly computes lower and upper bounds of all involved transition probabilities for constructing the BMDP, and computes winning regions by an enumerative algorithm taking these probability bounds into account. 
On the other hand, our approach shows a clean separation between step \textbf{(II)}, which requires knowing explicit 
transition probabilities but can be solved by existing techniques, and step \textbf{(I)}, for which this knowledge is not needed. 
This allows us to provide a conceptually simpler and computationally superior \emph{symbolic algorithm} approximately solving \textbf{(I)} via abstract $\twohalf$-player games.

In principle, one can alternatively obtain a similar $\twohalf$-player game, like us, by first reducing a CMP to a BMDP using the method of \citet{Coogan20} and then reducing the BMDP to a $2\sfrac{1}{2}$-player game using the results of \citet{weininger2019satisfiability}.
However, this would still require going through the expensive construction of the BMDP, whereas we present a direct and computationally efficient way to obtain a $\twohalf$-player game through reachable set computations for the CMP.
The computational benefit of our approach over the BMDP based method is enormous:
On a couple of benchmark examples taken from the paper of \citet{Coogan20} itself, our uniform-grid based implementation was up to around $150$ times faster and used up to around $150$ times smaller amount of memory than their adaptive refinement-based implementation.
The details of the experiments can be found in \REFsec{sec:case_study}.

\section{Stochastic Nonlinear Systems}
\label{sec:stochastic_nonlinear}

\subsection{Preliminaries}
For any set $A$, a sigma-algebra on $A$ comprises subsets of $A$ as events that includes $A$ itself and is closed under complement and countable unions. 
We consider a probability space $(\Omega,\mathcal F_{\Omega},P_{\Omega})$, where $\Omega$ is the sample space,
$\mathcal F_{\Omega}$ is a sigma-algebra on $\Omega$,
and $P_{\Omega}$ is a probability measure that assigns probabilities to events.
An ($(S,\mathcal F_S)$-valued) random variable $X$ is a measurable function of the form $X:(\Omega,\mathcal F_{\Omega})\rightarrow (S,\mathcal F_S)$, where $S$ is the 
codomain of $X$ and $\mathcal F_S$ is a sigma-algebra on $S$.
Any random variable $X$ induces a probability measure on its space $(S,\mathcal F_S)$ as $P(\{A\}) = P_{\Omega}\{X^{-1}(A)\}$ for any $A\in \mathcal F_S$.
We often directly discuss the probability measure on $(S,\mathcal F_S)$ without explicitly mentioning the underlying probability space $(\Omega,\mathcal F_{\Omega},P_{\Omega})$  and the function $X$ itself.

A topological space $S$ is called a Borel space if it is homeomorphic to a Borel subset of a Polish space (i.e., a separable and completely metrizable space).
Examples of a Borel space are the Euclidean spaces $\mathbb R^n$, its Borel subsets endowed with a subspace topology, as well as hybrid spaces.
Any Borel space $S$ is assumed to be endowed with a Borel sigma-algebra (i.e., the one generated by the open sets in the topology), which is
denoted by $\mathcal B(S)$. We say that a map $f : S\rightarrow Y$ is measurable whenever it is Borel measurable.

Given an alphabet $A$, we use the notation $A^*$ and $ A^\omega$ to denote respectively the set of all finite words, the set of all infinite words formed using the letters of the alphabet $A$, and use $A^\infty$ to denote the set $A^*\cup A^\omega$.
Let $X$ be a set and $R\subseteq X\times X$ be a relation.
For simplicity, let us assume that $\dom{R} \coloneqq \set{x\in X\mid \exists y\in X\;.\;(x,y)\in R} = X$.
For any given $x\in X$, we use the notation $R(x)$ to denote the set $\set{y\in X\mid (x,y)\in R}$. 
We extend this notation to sets: For any given $Z\subseteq X$, we use the notation $R(Z)$ to denote the set $\cup_{z\in Z}R(z)$.
Given a set $A$, we use the notation $\dist(A)$ to denote the set of all probability distributions over $A$.

We denote the set of nonnegative integers by $\nat \coloneqq \{0,1,2,\ldots\}$ and the set of integers in an interval by $\incone{a}{b} \coloneqq \set{ a+k \mid k\in\nat,\, k\le b-a}$.
We also use the symbols ``$\ineven$'' and ``$\inodd$'' to denote memberships in the set of even and odd integers within a given set of integers:
For example, for a given set of natural numbers $M\subseteq \mathbb{N}$, the notation $n \ineven M$ is equivalent to $n\in M\cap \set{0,2,4,\ldots}$, and the notation $n\inodd M$ is equivalent to $n\in M\cap \set{1,3,5,\ldots}$.

\subsection{Controlled Markov Processes}


A \emph{controlled Markov process (CMP)} is a tuple $\mathfrak S =\left(\mathcal S, \mathcal U, \Ker\right),$
where $\mathcal S$ is a Borel space called the \emph{state space},
$\mathcal U$ is a finite set called the \emph{input space}, and 
$\Ker$ is a conditional stochastic kernel $\Ker\colon\mathcal B(\mathcal S)\times \mathcal S\times \mathcal U\to [0,1]$
with $\mathcal B (\mathcal S)$  being  the Borel sigma-algebra on the state space and $(\mathcal S, \mathcal B (\mathcal S))$ being the corresponding measurable space.
The kernel $\Ker$ assigns to any $s \in \mathcal S$ and $u\in\mathcal U$ a probability measure $\Ker(\cdot | s,u)$ 
on the measurable space $(\mathcal S,\mathcal B(\mathcal S))$
so that for any set $A \in \mathcal B(\mathcal S), P_{s,u}(A) = \int_A \Ker (ds|s,u)$, 
where $P_{s,u}$ denotes the conditional probability $P(\cdot|s,u)$.

In general, the input space $\mathcal U$ can be any Borel space and the set of valid inputs can be state dependent. 
While our results can be extended to this setting, for ease of exposition, we consider the special case where
$\mathcal U$ is a finite set and any input can be taken at any state.
This choice is motivated by the digital implementation of control policies with a finite number of possible actuations. 

The evolution of a CMP is as follows. For $k\in\nat$, let $X^k$ denote the state at the $k$-th time step and
$A^k$ the input chosen at that time. If $X^k = s \in \mathcal S$ and $A^k = u \in \mathcal U$, then 
the system moves to the next state $X^{k+1}$, according
to the probability distribution $P_{s,u}$. 
Once the transition into the next state has occurred, a new input is chosen, and the process is repeated.

Given a CMP $\mathfrak S$, a \emph{finite path} of length $n+1$ is a sequence
\begin{equation*}
w^n = (s^0,s^1,\ldots,s^n),\quad n\in\nat,
\end{equation*}
where $s^i\in\mathcal S$ are state coordinates of the path.
The space of all paths of length $n+1$ is denoted $\mathcal S^{n+1}$.
An \emph{infinite path} of the CMP $\mathfrak S$ is the sequence
$w = (s^0,s^1,s^2,\ldots),$
where $s^i\in\mathcal S$ for all $i\in\mathbb N$.
The space of all infinite paths is denoted by $\mathcal S^\omega$.
The spaces $\mathcal S^{n+1}$ and $\mathcal S^\omega$ are endowed with their respective product topologies and are Borel spaces.

%


A \emph{stationary control policy} is a universally measurable function $\rho:\mathcal S\rightarrow\mathcal U$ such that at any 
time step $n\in\mathbb N$, the input $u^n$ is taken to be $\rho(s^n)\in\mathcal U$. As we only deal with stationary control policies in this paper, we simply refer to them as \emph{policies} for short. 
We denote the class of all such policies by $\Policy$. 
The function $\rho$ is also called \emph{state feedback controller} in control theory. 

For a CMP $\mathfrak S$, any policy $\rho\in\Policy$ together with an initial probability measure $\alpha:\mathcal{B}(\mathcal S)\rightarrow[0,1]$ on the states 
induces a unique probability measure on the canonical sample space of paths \citep{hll1996}, denoted by $P_\alpha^\rho$ with the expectation $\mathbb E_\alpha^\rho$.
In the case when the initial probability measure is supported on a single state $s\in\Xs$,
i.e., $\alpha({s}) = 1$, we write $P_s^{\rho}$ and $\mathbb E_s^{\rho}$ in place of $P_\alpha^\rho$ and $\mathbb E_\alpha^\rho$, respectively.
We denote the set of probability measures on $(\mathcal S,\mathcal B(\mathcal S))$ by $\mathfrak D$.

Given any $\omega$-regular specification $\varphi$ defined using a set of predicates over the state space $\Xs$ of $\Sys$, we use the notation $\Sys\models \varphi$ to denote the set of all paths of $\Sys$ which satisfy $\varphi$.
Thus, $P_\alpha^\rho(\Sys\models \varphi)$ denotes the probability of satisfaction of $\varphi$ by $\Sys$ under the effect of the control policy $\rho$, when the initial probability measure is given by $\alpha$.
Often we will use Linear Temporal Logic (LTL) notation to express $\omega$-regular properties.
The syntax and semantics of LTL can be found in standard literature \citep{BK08}.

A \emph{stochastic dynamical system} $\Sigma$ is described by a state evolution
	\begin{equation}
	s^{k+1}=f(s^k,u^k,\varsigma^k), \ \ \ k\in\mathbb{N},
	\label{eq:state_evolution}
	\end{equation}
where $s^k\in \mathcal S$ and $u^k \in \mathcal{U}$ are states and inputs for each $k\in\nat$, 
and $(\varsigma^0,\varsigma^1,\ldots)$ is assumed to be a sequence of independent and identically distributed (i.i.d.) random variables
representing a stochastic disturbance.
The map $f$ gives the next state as a function of current state, current input, and the disturbance.
One can construct a CMP over states $\mathcal{S}$ and inputs $\mathcal{U}$ from \eqref{eq:state_evolution} by noticing that for any given state 
$s^k$ and input $u^k$ at time $k$, the next state is a random variable defined as a function of $\varsigma^k$. 
Thus, $\Ker(\cdot|s^k,u^k)$ is exactly the distribution of the random variable $f(s^k,u^k,\varsigma^k)$ and can be 
computed based on the distribution of $\varsigma^k$ and the map $f$ itself \citep{k2002}.

\subsection{Parity Specifications}

Let $\mathfrak S =\left(\mathcal S, \mathcal U, \Ker\right)$ be a CMP and suppose 
$\mathcal{P}=\tup{B_1,\ldots,B_\ell}$ is a partition of $\mathcal S$ with measurable sets $B_1$, $\ldots$, $B_\ell$;
that is, $B_i \cap B_j = \emptyset$ for $i\neq j$ and $\cup_{i=1}^\ell B_i = \mathcal S$.
We allow some $B_i$'s to be empty.
For each $B_i$, we call the integer $i$ its \emph{priority}.

Intuitively, an infinite path $w\in\mathcal S^\omega$ satisfies the \emph{parity specification} with respect to $\mathcal{P}$ 
if the highest subset $B_i$ visited infinitely often has even priority.
This specification is formalized in Linear Temporal Logic (LTL) notation~\citep{BK08} using the following formula:
\begin{equation}\label{eq:definition of parity}
	\parity \coloneqq \bigwedge_{i\inodd\incone{1}{\ell}}\left(\square\lozenge B_{i} \implies \bigvee_{j\ineven \incone{i+1}{\ell}} \square\lozenge B_{j}\right),
\end{equation}
which requires that infinitely many visits to an odd priority subset ($\square\lozenge B_{i}$) must 
imply infinitely many visits to a \emph{higher} even priority subset ($\square\lozenge B_{j}$).
We indicate the set of all infinite paths $w\in \mathcal S^\omega$ of a CMP $\mathfrak{S}$ that satisfy the property $\parity$ by $\Sys \models \parity$.
The proof of measurability of the event $\Sys \models \parity$ goes back to the work by \citet{Vardi85} 
that provides the proof for probabilistic finite state programs. 
The proof for a CMP follows similar principles, using the observation that
$\Sys\models\parity$ can be written as a Boolean combination of events $\Sys\models\square\lozenge A$, where $A$ is a measurable set,
and $\square \lozenge A$ is a canonical $G_\delta$ set in the Borel hierarchy.


It is well-known that every $\omega$-regular specification whose propositions range over measurable subsets of the state space of a CMP
can be modeled as a deterministic parity automaton \cite[Thm.~1.19]{gradel2002automata}.
By taking a synchronized product of this parity automaton with the CMP, we can obtain a product CMP and a parity specification over the product state space
such that every path satisfying the parity specification also satisfies the original $\omega$-regular specification and vice versa.
Moreover, a stationary policy for the parity objective gives a (possibly history-dependent) policy for the original specification.
%
Thus, without loss of generality, we assume that an $\omega$-regular objective is already 
given as a parity condition using a partition of the state space of the system.

\section{Problem Definition}
\label{sec:problem}

We are interested in computing the maximal probability that a given parity specification can be satisfied by paths of a CMP $\mathfrak S$ starting from 
a particular state $s\in\mathcal S$ under stationary policies. 
Given a control policy $\rho\in\Policy $ and an initial state $s\in\mathcal S$, we define the satisfaction probability and the supremum satisfaction probability as
\begin{align}
f(s,\rho) &\coloneqq P_s^{\rho}(\mathfrak S\models \parity)~\text{and}\label{eq:sat_prob}\\
f^\ast(s) &\coloneqq \sup_{\rho\in\Policy} P_s^{\rho}(\mathfrak S\models \parity),\label{eq:optimal_prob}
\end{align}
respectively.
An \emph{optimal control policy} for the parity condition is a policy $\rho^\ast$ such that $f^\ast(s) = f(s,\rho^\ast)$ for all $s\in\mathcal S$.
Note that an optimal policy need not exist, since the supremum may not be achieved by any policy. Our goal is to study the following \emph{optimal policy synthesis} problem.

\begin{problem}[Optimal Policy Synthesis]
	\label{prob:non-trivial-parity}
	Given $\Sys$ and a parity specification  $\parity$, find an optimal control policy $\rho^*$, if it exists, together with $f^\ast(s)$ such that $P_s^{\rho^*}(\mathfrak S\models \parity) = f^\ast(s)$.
\end{problem}

While the satisfaction probability \eqref{eq:sat_prob} and the supremum satisfaction probability \eqref{eq:optimal_prob} are both well-defined, 
we are not aware of any work characterizing necessary or sufficient conditions for existence of optimal control policies on continuous-space CMPs for parity specifications. 
Additionally, we restrict attention to \emph{stationary} policies.
While it is possible to define more general classes of policies, that depend on the entire history and use randomization over $\mathcal U$, we are again
unaware of any work that characterizes the class of policies that are sufficient for optimal control of CMPs for parity specifications.
For finite-state systems, stationary policies are sufficient and we restrict attention to them.

Since we cannot prove existence or computability of optimal policies, in this paper, we focus on providing an 
approximation procedure to compute a possibly sub-optimal control policy and guaranteed lower bounds on the optimal satisfaction probability. 
Our procedure relies on first approximating almost sure winning regions (i.e., where the specification can be satisfied with probability one), 
and then solving a reachability problem, as formalized next.

\begin{definition}[Almost sure winning region]
Given a CMP $\mathfrak S$, a policy $\rho$, and a parity specification $\parity$, 
the state $s\in\mathcal S$ satisfies the specification \emph{almost surely} if $f(s,\rho)=1$. 
The almost sure \emph{winning region}---or simply the winning region---of the policy $\rho$ is defined as
\begin{equation}
	  \W(\Sys,\rho) \coloneqq \set{ s\in\mathcal S \mid f(s,\rho) = 1}.
\end{equation}
We also define the \emph{maximal almost sure winning region}---or simply the maximal winning region---as
\begin{equation}
	 \W^\ast(\Sys) \coloneqq \set{ s\in\mathcal S \mid f^\ast(s) = 1}.
\end{equation}
\end{definition}
Note that $\W(\Sys,\rho)\subseteq\W^\ast(\Sys)$ for any control policy $\rho\in\Policy$. 
%
It is clear by definition of the winning region that for any policy $\rho$, the satisfaction probability $P_s^{\rho}(\mathfrak S\models \parity)$ is equal to $1$ for any $s$ in the winning region $W:=\W(\Sys,\rho)$.
The next theorem states that this satisfaction probability is lower bounded by the probability of reaching the winning region $W$ for any $s\not\in W$. We denote such a reachability by $(\mathfrak S\models \lozenge W) :=\set{w = (s^0,s^1,s^2,\ldots) \mid \exists n\in\nat \,.\, s^n\in W}$.

\begin{theorem}
\label{thm:the problem can be broken down}
For any control policy $\rho\in\Policy$ on CMP $\Sys$, 
and $W \coloneqq \W(\Sys,\rho)$, we have
\begin{equation}
\label{eq:decom}
\begin{array}{l l}
P_s^{\rho}(\mathfrak S\models \parity )= 1 & \text{if } s\in W \text{ and}\\
P_s^{\rho}(\mathfrak S\models \parity )\ge P_s^{\rho}(\mathfrak S\models \lozenge W) & \text{if } s\notin W.
\end{array}
\end{equation}
\end{theorem}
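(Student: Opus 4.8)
The first line of \eqref{eq:decom} is immediate from the definitions: $W=\W(\Sys,\rho)=\set{s\mid f(s,\rho)=1}$ and $f(s,\rho)=P_s^\rho(\Sys\models\parity)$, so $s\in W$ forces $P_s^\rho(\Sys\models\parity)=1$. For the second line, fix $s\notin W$. The plan is to reduce the inequality to the claim that, with probability one, reaching $W$ already implies satisfying the parity condition; formally, $P_s^\rho\big((\Sys\models\lozenge W)\text{ and }\Sys\not\models\parity\big)=0$. Granting this, we immediately obtain
\[
P_s^\rho(\Sys\models\parity)\;\ge\;P_s^\rho\big(\Sys\models\parity \text{ and }\Sys\models\lozenge W\big)\;=\;P_s^\rho(\Sys\models\lozenge W).
\]

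To prove the claim, I would introduce the hitting time $\tau(w)\coloneqq\inf\set{n\in\nat\mid s^n\in W}$, so that $(\Sys\models\lozenge W)=\bigcup_{n\in\nat}\set{\tau=n}$ is a countable disjoint union and each $\set{\tau=n}$ depends only on the prefix $(s^0,\dots,s^n)$. The two ingredients are: (i) the parity condition \eqref{eq:definition of parity} is a \emph{tail} property, being built entirely from $\square\lozenge$ modalities, hence a path satisfies $\parity$ if and only if its $n$-th suffix $(s^n,s^{n+1},\dots)$ does, for every $n$; and (ii) the Markov property of the CMP under the stationary policy $\rho$ (e.g.\ \citep{hll1996}): conditioned on $(s^0,\dots,s^n)$, the law of the suffix $(s^n,s^{n+1},\dots)$ is $P_{s^n}^\rho$. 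Combining (i) and (ii),
\[
P_s^\rho\big(\tau=n \text{ and }\Sys\not\models\parity\big)\;=\;\ee_s^\rho\big[\,\mathbf{1}_{\set{\tau=n}}\,\big(1-f(s^n,\rho)\big)\,\big],
\]
and on $\set{\tau=n}$ we have $s^n\in W$, so $f(s^n,\rho)=1$ and the integrand vanishes. Summing over $n\in\nat$ yields the claim, and hence \eqref{eq:decom}. The conceptual point to flag is that stationarity of $\rho$ is exactly what makes the post-$\tau$ dynamics again governed by the \emph{same} policy, so that ``$s^n\in W$'' may be invoked directly.

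The routine but load-bearing part of the argument is the measurability bookkeeping: that $W$ is a Borel subset of $\Xs$, that $s'\mapsto f(s',\rho)$ is a measurable function (so the conditional-expectation identity above is meaningful), and that $\Sys\models\parity$ and $\Sys\models\lozenge W$ are Borel events of $\Xs^\omega$. These rest on the already-noted fact that $\Sys\models\parity$ is Borel — a Boolean combination of the canonical $G_\delta$ sets $\Sys\models\square\lozenge A$ — together with standard measurable-selection / monotone-class arguments for CMPs. The genuinely probabilistic content beyond that is only the ordinary Markov property invoked at deterministic times, so I do not expect the shift-invariance step or the summation over $n$ to present any difficulty; the measurability of $f(\cdot,\rho)$ is the one place I would be careful to cite the appropriate result.
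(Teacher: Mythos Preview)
Your proposal is correct and follows essentially the same route as the paper: both introduce the first hitting time $\tau$ of $W$, decompose over $\set{\tau=n}$, and use the Markov property together with the tail (shift-invariant) nature of $\parity$ to conclude that on $\set{\tau=n}$ the conditional probability of satisfaction is $f(s^n,\rho)=1$. The only cosmetic difference is that the paper expands $P_s^\rho(\Sys\models\parity)$ by the law of total probability and bounds the $\tau=\infty$ term below by zero, whereas you equivalently show $P_s^\rho(\set{\tau<\infty}\cap\set{\Sys\not\models\parity})=0$; your extra remarks on stationarity of $\rho$ and the measurability bookkeeping are welcome and not spelled out in the paper.
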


The proof can be found in the appendix.
The inequality in the second part of \eqref{eq:decom} is because the 
$\parity$ specification may be satisfied with positive probability even though the path always stays outside of $W$.
When the state space is finite (i.e., for finite Markov decision processes), equality holds \citep{BK08}.
However, equality need not hold for general CMPs: \citet{majumdar2020symbolic} showed an example 
where the maximal almost sure winning region is empty even though a 
B\"uchi specification is satisfied with positive probability.

The next theorem establishes that for any policy $\rho$, the winning region is an absorbing set, i.e., paths starting from this set will stay in the set almost surely. 
\begin{theorem}
\label{thm:absorbing}
For any control policy $\rho$, The set $W = \W(\Sys,\rho)$ is an absorbing set, i.e., $\Ker(W|s,\rho(s) )= 1$ for all $s\in W$. This implies $P_s^{\rho}(\Sys\models\lozenge \mathcal S\backslash W) = 0$ for all $s\in W$.
\end{theorem}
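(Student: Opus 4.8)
The plan is to argue by contradiction using the Markov property. Suppose $W = \W(\Sys,\rho)$ is not absorbing, so there exists $s \in W$ with $\Ker(W \mid s, \rho(s)) = 1 - \delta$ for some $\delta > 0$. Then from $s$, with probability $\delta$ the system moves in one step to a state $s' \in \mathcal S \setminus W$, i.e., a state with $f(s',\rho) < 1$; more precisely, by the definition of $\mathcal S \setminus W$, the set of such ``bad'' successors has positive measure, and on a positive-measure subset of them $f(s',\rho) \le 1 - \epsilon$ for some $\epsilon > 0$ (since $\mathcal S \setminus W = \bigcup_{n} \{s' : f(s',\rho) \le 1 - 1/n\}$, one of these pieces must carry positive measure under $\Ker(\cdot \mid s,\rho(s))$).

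Next I would invoke the law of total probability together with the Markov property for the measure $P_s^\rho$: conditioning on the first transition,
\begin{equation*}
f(s,\rho) = P_s^\rho(\Sys \models \parity) = \int_{\mathcal S} P_{s'}^\rho(\Sys \models \parity)\, \Ker(ds' \mid s, \rho(s)) = \int_{\mathcal S} f(s',\rho)\, \Ker(ds' \mid s,\rho(s)).
\end{equation*}
Here I use that $\parity$ is a tail (shift-invariant) event — satisfaction of the parity condition depends only on the suffix of the path — so that the probability of $\parity$ from the second state onward, under the stationary policy $\rho$, is exactly $f(s',\rho)$; this is the standard one-step / strong-Markov decomposition for CMPs, which is where the measurability discussion of $\Sys \models \parity$ quoted earlier (Vardi, and the $G_\delta$ representation) is needed to make the integrand measurable and the identity rigorous.

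Combining, $1 = f(s,\rho) = \int f(s',\rho)\,\Ker(ds'\mid s,\rho(s)) \le (1-\delta) \cdot 1 + \delta(1-\epsilon) < 1$ when we bound the integral over the positive-measure bad piece by $1-\epsilon$ and over the rest by $1$ — a contradiction. Hence $\Ker(W \mid s,\rho(s)) = 1$ for every $s \in W$, i.e., $W$ is absorbing. Finally, the statement $P_s^\rho(\Sys \models \lozenge(\mathcal S \setminus W)) = 0$ for $s \in W$ follows by a routine induction on the number of steps: using the Markov property and the just-proved one-step invariance, the probability of being outside $W$ at step $n$ is $0$ for every $n$, and then a countable union bound over $n \in \nat$ gives that the probability of ever leaving $W$ is $0$.

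I expect the main obstacle to be the careful justification of the one-step decomposition identity for $f(\cdot,\rho)$ — in particular, verifying that $\parity$ is genuinely a shift-invariant measurable event so that $f(s',\rho)$ is the correct conditional satisfaction probability from the successor state, and that $f(\cdot,\rho)$ is a measurable function of the state so the integral makes sense. The rest (the contradiction arithmetic and the inductive union bound) is routine once that identity is in hand.
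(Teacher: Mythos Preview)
Your proof is correct and follows essentially the same approach as the paper: both hinge on the one-step Markov decomposition $1 = f(s,\rho) = \int_{\mathcal S} f(s',\rho)\,\Ker(ds'\mid s,\rho(s))$ (using that $\parity$ is shift-invariant under a stationary policy) together with the fact that $f(\cdot,\rho)<1$ on $\mathcal S\setminus W$. The only cosmetic difference is that the paper concludes $\Ker(\mathcal S\setminus W\mid s,\rho(s))=0$ directly via Markov's inequality applied to the nonnegative integrand $(1-f(s_1,\rho))\mathbf{1}_{\mathcal S\setminus W}(s_1)$, whereas you argue the equivalent contrapositive by extracting a positive-measure $\epsilon$-bad piece; just be careful in your final inequality to use the measure of that smaller bad piece (call it $\delta'\le\delta$) rather than $\delta$ itself.
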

The proof of this theorem utilizes the fact that $\parity$ is a long-run property and its satisfaction is independent of the prefix of a path. The proof is provided in the appendix.

Notice that \REFthm{thm:the problem can be broken down} and \REFthm{thm:absorbing} are stated for any fixed control policy $\rho$, but these theorems enable us to decompose the maximization of 
$P_s^{\rho}(\mathfrak S\models \parity)$ with respect to $\rho$ into two sub-problems. 
First, find a policy that gives the largest winning region $W$ and employ that policy when the current state is in $W$. 
Then, find a policy that maximizes the reachability probability $P_s^{\rho}(\mathfrak S\models \lozenge W)$ and employ that policy as long as the current state is not in $W$.



Computation of the reachability probability has been studied extensively in the literature for both 
infinite horizon \citep{TA12,TMKA17,HS_TAC19,HNS20_MultiObjective} and 
finite horizon \citep{SA13,SAH12,LAB15,Kariotoglou17,LO17,SAM17,Vinod18,LAVAE19,Jagtap2019} 
using different abstract models and computational methods. 
Together with an algorithm that underapproximates the region of almost sure satisfaction,
these approaches can be used to provide a lower bound on the probability of satisfaction of the parity condition.
In the rest of the paper, we focus on the following problem (the first half of \eqref{eq:decom}).

\begin{problem}[Approximate Maximal Winning Region]
\label{prob:policy}
Given $\Sys$ and a parity specification $\parity$, find a (sub-optimal) control policy $\rho\in \Policy$, its winning region $\W(\Sys,\rho)\neq\emptyset$, and a bound on the absolute volume of the set difference $\W^\ast(\Sys)\backslash\W(\Sys,\rho)$, which we call the \emph{approximation error}.
\end{problem}

In \REFsec{synthesis_game}-\ref{sec:soundness}, we provide a solution for \REFprob{prob:policy} via the paradigm of abstraction-based controller design.
Not surprisingly, we get a tighter (i.e., larger) approximation of $\W^\ast(\Sys)$ if we use a finer discretization of the state space during the abstraction step.
We also provide an over-approximation of $\W^\ast(\Sys)$, and show closeness of the under- and over-approximation of $\W^\ast(\Sys)$ in the numerical example provided in \REFsec{sec:case_study}.

\section{Abstraction-Based Policy Synthesis} 
\label{synthesis_game}

The main result of our paper is a solution to Prob.~\ref{prob:policy} via a \emph{symbolic algorithm} over abstract $\twohalf$-player games in the spirit of abstraction-based controller design (ABCD). 
ABCD is typically used to compute temporal-logic controllers for \emph{non-stochastic} nonlinear dynamical systems 
\citep{ReissigWeberRungger_2017_FRR,tabuada09,nilsson2017augmented} in two steps. 
First, the system is abstracted into an abstract finite-state two-player game. This game is then used to synthesize a discrete controller which is then refined into a continuous controller for the original system.
In standard ABCD techniques, the abstract game has two players: $\p{0}$ simulating the controller and choosing the next control input $u$ based on the currently observed abstract state $\xh$, and $\p{1}$ simulating the adversarial effect of (a) choosing any continuous state $s$ in $\xh$ to which $u$ is applied and (b) choosing any continuous disturbance $d$ that effects the resulting transition.

The key insight in our abstraction step is that the stochastic nature of the underlying CMP allows choosing disturbances in a \emph{fair} random way instead of purely adversarially. 
We model this by introducing an additional \emph{random} player (also called $\frac{1}{2}$ player) resulting in a so called $\twohalf$-player game~\citep{Condon92,CJH03,CHSurvey}. In the resulting abstract game, only the effect of the discretization is handled by $\p{1}$ in an adversarial manner. The random player picks the applied disturbance uniformly at random. 

After introducing necessary preliminaries on $\twohalf$-player games in \REFsec{sec:twohalf:prelim}, we show how a CMP can be abstracted into a 
$\twohalf$-player game in \REFsec{sec:abstraction}. 
We then recall in \REFsec{sec:abstractsynt} a symbolic procedure to find winning regions in $\twohalf$-player games for parity specifications. 
Finally, we state in \REFsec{sec:refine} how an almost-sure winning strategy in the abstract $\twohalf$-player game is refined, 
and that the resulting control policy is almost sure winning for the original CMP and its associated parity specification. 
This establishes soundness of our ABCD technique to solve Problem~\ref{prob:policy}. 

\subsection{Preliminaries: $2\sfrac{1}{2}$-Player Parity Games}\label{sec:twohalf:prelim}
\noindent
\textbf{The game graph.}
A $2\sfrac{1}{2}$-player \emph{game graph} is a tuple $\game = \tup{V,E,\tup{V_0,V_1,V_r}}$, where $V$ is a finite set of vertices, $E$ is a set of directed edges $E\subseteq V\times V$, and the sets $V_0,V_1,V_r$ form a partition of the set $V$.
A \emph{Markov chain} is a particular type of $\twohalf$-player game graph with $V=V_r$ (i.e.\ $V_0=V_1=\emptyset$).

A $2\sfrac{1}{2}$-player \emph{parity game} is a pair $\tup{\game,\mathcal{P}}$, where $\game$ is a $2\sfrac{1}{2}$-player game graph, and $\mathcal{P} = \tup{B_1,\ldots,B_\ell}$ is a tuple of $\ell$ disjoint subsets of $V$, some of which can possibly be empty.
The tuple $\mathcal{P}$ induces the parity specification $\parity$ over the set of vertices $V$ in the natural way.
In order to ensure that $\parity$ is well defined, we impose the restriction that every infinite run must have infinitely many occurrences of vertices from at least one of the nonempty sets in $\mathcal{P}$.
In other words, we require that every set of vertices $U\subseteq V$ for which there is no $i\in [1;\ell]$ with $U\cap B_i\neq \emptyset$ must be ``transient'' vertices.

\smallskip
\noindent
\textbf{The players and their strategies.}
We assume that there are two players $\p{0}$ and $\p{1}$, who are playing a game by moving a token along the edges of the game graph $\game$.
In every step, if the token is located in a vertex in $V_0$ or $V_1$, $\p{0}$ or $\p{1}$ respectively moves the token to one of the successors according to the edge relation $E$.
On the other hand, if the token is located in a vertex $v\in V_r$, then in the next step the token moves to a vertex $v'$ which is chosen \emph{uniformly at random} from the set $E(v)$.
Strategies of $\p{0}$ and $\p{1}$ are respectively the functions $\pi_0\colon V^*V_0\to \dist(V)$ and $\pi_1\colon V^*V_1\to \dist(V)$ such that for all $w\in V^*$, $v_0\in V_0$ and  $v_1\in V_1$, we have $\supp\pi_0(wv_0)\subseteq E(v_0)$ and $\supp\pi_1(wv_1)\subseteq E(v_1)$.
We use the notation $\Pi_0$ and $\Pi_1$ to denote the set of all strategies of $\p{0}$ and $\p{1}$ respectively.
A strategy $\pi_i$ of $\p{i}$, for $i\in \set{0,1}$, is \emph{deterministic memoryless} 
if for every $w_1,w_2\in V^*$ and for every $v\in V_i$, $\pi_i(w_1v) = \pi_i(w_2v)$ holds; we simply write $\pi_i(v)$ in this case.
We use the notation $\Pi^{\mathrm{DM}}_i$ to denote the set of all deterministic memoryless strategies of $\p{i}$.
Observe that $\Pi^{\mathrm{DM}}_i\subseteq \Pi_i$.


\smallskip
\noindent
\textbf{Runs and winning conditions.}
%
An infinite (finite) run of the game graph $\game$, compatible with the strategies $\pi_0\in \Pi_0$ and $\pi_1\in \Pi_1$, is an infinite (a finite) sequence of vertices $r=v^0v^1v^2\ldots$ ($r=v^0\ldots v^n$ for some $n\in \mathbb{N}$) such that for every $k\in\nat$, 
(a) $v^k\in V_0$ implies $v^{k+1}\in \supp\pi_0(v^0\ldots v^k)$,
(b) $v^k\in V_1$ implies $v^{k+1}\in \supp\pi_1(v^0\ldots v^k)$, and
(c) $v^k\in V_r$ implies $v^{k+1}\in E(v^k)$.
Given an initial vertex $v^0$ and a fixed pair of strategies $\pi_0\in \Pi_0$ and $\pi_1\in \Pi_1$, we obtain a probability distribution over the set of infinite runs of the system.
For a measurable set of runs $R\subseteq V^\omega$, we use the notation $P_{v^0}^{\pi_0,\pi_1}(R)$ to denote the probability of obtaining the set of runs $R$ when the initial vertex is $v^0$ and the strategies of $\p{0}$ and $\p{1}$ are fixed to respectively $\pi_0$ and $\pi_1$.
For an $\omega$-regular specification $\varphi$, defined using a predicate over the set of vertices of $\game$, we write $(\game\models \varphi
)$ to denote the set of all infinite runs for all possible strategies of $\p{0}$ and $\p{1}$ which satisfy $\varphi$.
For example, $(\game\models \parity)$ denotes the set of all infinite runs for all possible strategies of $\p{0}$ and $\p{1}$ which satisfy the parity condition $\parity$.
We say that $\p{0}$ wins $\parity$ almost surely from a vertex $v\in V$ (or $v$ is almost sure winning for $\p{0}$) if $\p{0}$ has a strategy $\pi_0\in \Pi_0$ such that for all $\pi_1\in \Pi_1$ we have 
$
P_v^{\pi_0,\pi_1}(\game\models \parity)=1
$.
We collect all vertices for which this is true in the almost sure winning region $\mathcal{W}(\game\models \parity)$.

\subsection{Abstraction: CMPs to $2\sfrac{1}{2}$-Player Games}\label{sec:abstraction}

Given a CMP $\mathfrak{S}=(\mathcal{S},\mathcal{U},\Ker)$ and a parity specification $\mathit{Parity}(\mathcal{P})$ for 
a partition $\mathcal{P}$ of the state space $\mathcal{S}$ we construct an abstract $2\sfrac{1}{2}$-player game. 

\smallskip
\noindent\textbf{State-space abstraction.}
We introduce a finite partition $\Xh\coloneqq \set{\xh_i}_{i\in I}$ such that $\cup_{i\in I} \xh_i = \mathcal{S} $, $\xh_i\neq\emptyset$ and $\xh_i\cap \xh_j=\emptyset$ for every $\xh_i,\xh_j\in \Xh$ with $i\neq j$.
Furthermore, we assume that the partition $\Xh$ is consistent with the given priorities $\mathcal{P}$, i.e., for every partition element $\xh\in \Xh$, and for every $x,y\in \xh$, $x$ and $y$ belong to the same partition element in $\mathcal{P}$ (i.e., $x$ and $y$ are assigned the same priority).
We call the set $\Xh$ the \emph{abstract state space} and each element $\xh\in \Xh$ an \emph{abstract state}.

We introduce the abstraction function $Q\colon \Xs\to \Xh$ as a mapping from the continuous to the abstract states: For every $s\in \Xs$, $Q\colon s\mapsto \xh$ such that $s\in \xh$.
We define the concretization function as the inverse of the abstraction function: $Q^{-1}\colon \Xh\to 2^\Xs$, $Q^{-1}\colon \xh\mapsto \set{s\in \Xs \mid s\in \xh}$.
We generalize the use of $Q$ and $Q^{-1}$ to sets of states:
For every $U\subseteq \Xs$, $Q(U) = \bigcup_{s\in U} Q(s)$, and for every $\widehat{U}\subseteq \Xh$, $Q^{-1}(\widehat{U}) = \bigcup_{\xh\in \widehat{U}} Q^{-1}(\xh)$.

\smallskip
\noindent\textbf{Transition abstraction.}
We also introduce an over- and an under-approximation of the probabilistic transitions of the CMP $\Sys$ using the non-deterministic abstract transition functions $\Fo\colon \Xh\times \mathcal{U} \to 2^{\Xh}$ and $\Fu\colon \Xh\times \mathcal{U} \to 2^{\Xh}$ with the following properties:
\begin{subequations}\label{equ:FuFo}
 \begin{align}
	\Fo(\xh,u) &\supseteq \set{\xh'\in\Xh \mid \exists s\in \xh\;.\;\Ker(\xh'\mid s,u) > 0},\label{eq:def fo}\\
	\Fu(\xh,u) &\subseteq \set{\xh'\in\Xh\mid \exists \varepsilon>0\;.\; \forall s\in \xh\;.\;\Ker(\xh'\mid s,u) \geq \varepsilon}.\label{eq:def fu}
\end{align}
\end{subequations}

To understand the need for both $\Fo$ and $\Fu$ and the way they are constructed, consider the following example. 
Intuitively, given an abstract state $\xh$ and an input $u$, the set $\Fo$ over-approximates the set of abstract states reachable by probabilistic transitions from $\xh$ on input $u$. On the other hand,  $\Fu$ under-approximates those abstract states which can be reached by \emph{every} state in $\xh$
with probability bounded away from zero.

\begin{example}\label{ex:FoFu}
Consider the two CMPs, $\Sys_A$ and $\Sys_B$:

%
 \begin{center}
\begin{tikzpicture}[node distance=1cm]
	\pgfdeclarelayer{bg}    
	\pgfsetlayers{bg,main}  
	\node[] at (-1,0) {$\Sys_A:$};

	\tikzstyle{every node}=[circle,inner sep=2pt]
	\newcommand{\offset}{0.4cm}
	\colorlet{lightgrey}{gray!30}
	\node[state,minimum size=10pt]		(A)	at		(0,0)		{$s_1$};
	\node[state,minimum size=10pt]		(B)	[right of=A]		{$s_2$};
	\node[state,minimum size=10pt]		(C)	[right of=B]		{$s_3$};
	
	\node[right of= A, xshift=-0.5cm, yshift=-0.7cm] (sh1) {$\xh_1$};
	\node[below of= C, yshift=0.3cm] (sh2) {$\xh_2$};
	
	\begin{pgfonlayer}{bg}    
        \draw[draw, dashed, fill=lightgrey,rounded corners]		($(A)-(\offset,\offset)$)		rectangle		($(B)+(\offset,\offset)$);
        \draw[draw, dashed, fill=lightgrey,rounded corners]		($(C)-(\offset,\offset)$)		rectangle		($(C)+(\offset,\offset)$);
    \end{pgfonlayer}

	\path[->]
			(A)	edge[loop above]		()
					edge[bend left]							(C)
			(B)	edge[loop below]		()
					edge[bend right]							(C);
					
	
	\node[right of= C, xshift=0.5cm] {$\Sys_B:$};
	\node[state,minimum size=10pt]		(D)	[right of= C, xshift=1.5cm]		{$s_1$};
	\node[state,minimum size=10pt]		(E)	[right of=D]		{$s_2$};
	\node[state,minimum size=10pt]		(F)	[right of=E]		{$s_3$};
	\node[right of= D, xshift=-0.5cm, yshift=-0.7cm] (sh3) {$\xh_1$};
	\node[below of= F, yshift=0.3cm] (sh4) {$\xh_2$};
	
	\begin{pgfonlayer}{bg}    
        \draw[draw, dashed, fill=lightgrey,rounded corners]		($(D)-(\offset,\offset)$)		rectangle		($(E)+(\offset,\offset)$);
        \draw[draw, dashed, fill=lightgrey,rounded corners]		($(F)-(\offset,\offset)$)		rectangle		($(F)+(\offset,\offset)$);
    \end{pgfonlayer}
	
	\path[->]
			(D)	edge[loop above]		()
			(E)	edge[loop below]		()
					edge[bend right]							(F);
\end{tikzpicture} 
\end{center}

The circles are concrete states $s_i$, the dashed boxes denote abstract states $\xh_i$, and the edges denote transitions with 
positive probability between concrete states $s_i$.
Consider the left abstract state $\xh_1$. Here, the adversary decides which concrete state (i.e., $s_1$ or $s_2$) the game is in.
In both $\Sys_A$ and $\Sys_B$, $\Fo$ says that both $\xh_1$ and $\xh_2$ are reachable from $\xh_1$.
In $\Sys_A$, $\Fu$ contains both $\xh_1$ and $\xh_2$, in $\Sys_B$, $\Fu$ is empty.
An adversary that plays according to $\Fo$ is too strong: it can keep playing the self loop in $s_2$, 
while the stochastic nature of the CMP ensures that eventually $s_2$ will transition to $s_3$.
In order to follow the probabilistic semantics, we must ensure the adversary 
picks a distribution whose support contains both abstract states.

In $\Sys_B$, the probabilistic behavior of the two concrete states $s_1$ and $s_2$ are very different:
$s_1$ stays in $\xh_1$ with probability one and $s_2$
stays in $\xh_1$ or moves probabilistically to $\xh_2$.
To ensure correct behavior, we look at possible supports of distributions induced by the dynamics: these
are the possible subsets of abstract states between $\Fu$ and $\Fo$.
Here, the game either stays in $\xh_1$ or (eventually) moves to $\xh_2$ and, in our reduction, we force the adversary
to commit to one of the two options. 
\end{example}


	The parameter $\varepsilon$ states that there is a uniform lower bound on transition probabilities for all states in an abstract state.
	This ensures that, provided $\xh$ is visited infinitely often and $u$ is applied infinitely often from $\xh$, then
	$\xh'$ will be reached almost surely from $\xh$.
	In the absence of a uniform lower bound, this property need not hold for infinite state systems; for example, if the probability converges to zero,
	the probability of escaping $\xh$ can be strictly less than one.

\smallskip
\noindent\textbf{Algorithmic computation of $\Fo$ and $\Fu$.}
While it is difficult to compute $\Fo$ and $\Fu$ in general, they can be approximated for the important
subclass of stochastic nonlinear systems with \emph{affine} disturbances
\begin{align*}
	\xs^{k+1} = f(\xs^k,u^k) + \varsigma^k,	\quad k\in \mathbb{N},
\end{align*}
where $\varsigma^0, \varsigma^1, \ldots$ are independent and identically distributed random variables from a distribution with a bounded support $D$,
and we assume we are only interested in a compact region $\Xh'$ of the state space.
In this case, for any abstract state $\xh$ and any $u\in U$, one can compute an approximation $\RS(\xh, u)$ with
$\RS(\xh,u) \supseteq \set{\xs' \in \Xs \mid \exists \xs\in \xh\;.\; f(\xs,u)=\xs'}$ 
using standard techniques \citep{ReissigWeberRungger_2017_FRR,coogan2015efficient,althoff2013reachability}.
Define $S_1,S_2\colon 2^{\Xh}\times \Us\to 2^{\Xs}$ such that
\begin{align*}
	& S_1\colon (\xh,u)\mapsto {D}\oplus \RS(\xh,u),\\
	& S_2\colon (\xh,u)\mapsto {D}\ominus (-\RS(\xh,u)),
\end{align*}
where the minus sign ($-\RS(\xh,u)$) is applied to each individual element of $\RS(\xh,u)$ and $\oplus$ and $\ominus$ 
are Minkowski sum and difference, respectively.
Using $S_1$ and $S_2$, the functions $\Fo(\cdot,\cdot)$ and $\Fu(\cdot,\cdot)$ can be computed as \cite[Thm.~6.1]{majumdar2020symbolic}:
(1) $\xh'\in \Fo(\xh,u)$  iff either $S_1(\xh,u) \subseteq \Xs'$ and $\xh'\cap S_1(\xh,u)\neq \emptyset$ or
$S_1(\xh,u) \not\subseteq \Xs'$ and $\xh'$ is a special sink state; and
(2) $\xh'\in\Fu(\xh,u)$ iff either $\lambda(\xh'\cap S_2(\xh, u)) > 0$ or 
$\lambda(S_2(\xh,u) \setminus \Xs') > 0$ and $\xh'$ is a special sink state, and 
where $\lambda(\cdot)$ denotes the Lebesgue measure (generalized volume) of a set.

\begin{remark}
 We remark that standard algorithms \citep{ReissigWeberRungger_2017_FRR, coogan2015efficient} for computing $\RS(\cdot,\cdot)$, 
including the one that is used in our implementation, have the following \emph{monotonicity} property:
For every $\xh,\xh'\in \Xh$ and for every $u\in \Us$, if $\xh\subseteq \xh'$ then
\begin{equation*}
	 \RS(\xh,u) \subseteq \RS(\xh',u).
\end{equation*}
Monotonicity of $\RS(\cdot,\cdot)$ implies $\Fo(\cdot,\cdot)$ and $\Fu(\cdot,\cdot)$ are monotone:
For every $\xh,\xh'\in \Xh$ and for every $u\in \Us$, if $\xh\subseteq \xh'$ then
\begin{align}\label{eq:monotonic fo and fu}
	\Fo(\xh,u) \subseteq \Fo(\xh',u) &&\text{ and}
		&&\Fu(\xh,u) \supseteq \Fu(\xh',u).
\end{align}
This property ensures that  we get a better (i.e., larger) approximation of the optimal almost sure winning domain $\W^\ast(\Sys)$ in problem~\ref{prob:policy} if we use a finer discretization of the state space during the abstraction step. This observation can be empirically confirmed from the experiments in \REFsec{sec:case_study}.
\end{remark}


\smallskip

\noindent\textbf{Abstract $2\sfrac{1}{2}$-player game graph.}
Given the abstract state space $\Xh$ and the over and under-approximations of the transition functions $\Fu$ and $\Fo$, we are ready to construct the abstract  $2\sfrac{1}{2}$-player game graph induced by a CMP.

\begin{definition}
\label{def:reduced game}
	Let $\Sys$ be a given CMP. Then its induced abstract $2\sfrac{1}{2}$-player game graph is given by $\mathcal G = \tup{V,E,\tup{V_0,V_1,V_r}}$ such that
	\begin{itemize}
		\item $V_0 = \Xh$ and $V_1 = \Xh\times \Us$;
		\item $V_r=\bigcup_{v_1\in V_1}V_r(v_1)$, where\\
		$V_r(v_1) \!\coloneqq\! \set{v_r\!\subseteq\! \Xh \mid \Fu(v_1)\!\subseteq\! v_r \!\subseteq\! \Fo(v_1),1\!\le\!|v_r|\le |\Fu(v_1)| \!+\! 1};$
		\item and it holds that
		\begin{compactitem}[$\circ$]
		 \item for all $v_0\in V_0$, $E(v_0)=\set{(v_0,u)\mid u\in\Us}$
		 \item for all $v_1\in V_1$, $E(v_1)=V_r(v_1)$, and
		 \item for all $v_r\in V_r$, $E(v_r)=\{v_0\in V_0 \mid v_0\in v_r\}$.
		\end{compactitem}
	\end{itemize}
\end{definition}

Note that $V_r(v_1)$ contains non-empty subsets of $\Xh$ that includes all the abstract states in $\Fu(v_1)$ and possibly include only one additional element from $\Fo(v_1)$. The construction is illustrated in Fig.~\ref{fig:gadget}.

\begin{figure}
\includegraphics[width=\textwidth]{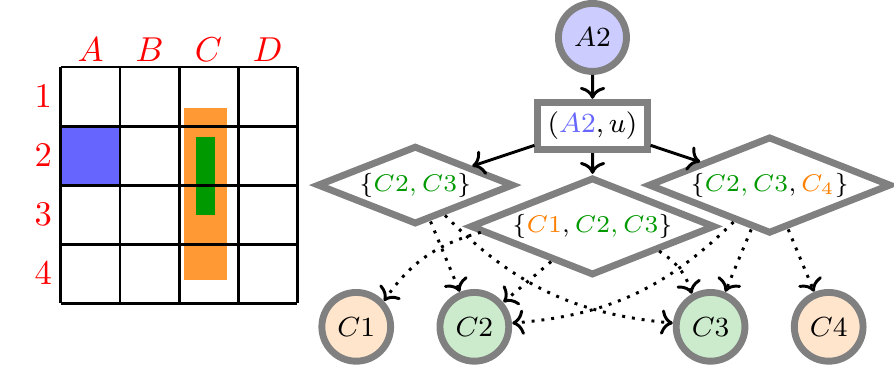}
	\caption{Illustration of the construction of the abstract $\twohalf$-player game (right) from a continuous-state CMP (left). The state space of the CMP is discretized into rectangular abstract states $A1,\hdots,D3$; $\Fu(A2,u) = \set{C2,C3}$ (intersecting the green region), and $\Fo(A2,u) = \set{C1,C2,C3,C4}$ (intersecting orange region). $V_0$, $V_1$ and $V_r$ are indicated by circle, rectangular and diamond-shaped nodes. Random vertices are dashed.
	}
	\label{fig:gadget}
\end{figure}

In the reduced game, $\p{0}$ models the controller, $\p{1}$ models the effect of discretization of the state space of $\mathfrak{S}$, and the random edges from the states in $V_r$ model the stochastic nature of the transitions of $\mathfrak{S}$.
Intuitively, the game graph in \REFdef{def:reduced game} captures the following interplay which is illustrated in \REFfig{fig:gadget}:
At every time step, the control policy for $\Sys$ has to choose a control input $u\in \Us$ based on the current vertex $\xh$ of $\game$.
Since the control policy is oblivious to the precise continuous state $\xs\in \Xs$ of $\Sys$, hence $u$ is required to be an optimal choice for \emph{every} continuous state $\xs\in \xh$.
This requirement is materialized by introducing a fictitious adversary (i.e.\ $\p{1}$) who, given $\xh$ and $u$, picks a continuous state $\xs\in \xh$ from which the control input $u$ is to be applied.
Now, we know that no matter what continuous $\xs$ is chosen by $\p{1}$, $\Ker(\Fu(\xh,u)\mid \xs,u)>\varepsilon$ holds for some $\varepsilon>0$.
This explains why every successor of the $(\xh,u)\in V_1$ states contains the set of vertices $\Fu(\xh,u)$.
Moreover, depending on which exact $\xs\in \xh$ $\p{1}$ chooses, with positive probability the system might go to some state in $\Fo(\xh,u)\setminus \Fu(\xh,u)$.
This is materialized by adding every state in $\Fo(\xh,u)\setminus \Fu(\xh,u)$ at a time to the successors of the states in $V_1$ (see \REFdef{def:reduced game}).
Finally, we assume that the successor from every state in $V_r$ is chosen uniformly at random (indicated by dotted edges in \REFdef{def:reduced game}).
Later, it will be evident that the exact probability values are never used for obtaining the almost sure winning region, and so we could have chosen any other probability distribution.

\smallskip
\noindent\textbf{Abstract parity specification.}
To conclude the abstraction of a given CMP $\Sys$ and its parity specification $\mathcal{P}=\set{B_1,\hdots,B_\ell}$, we have to formally translate the priority sets $B_i$ defined over subsets of states of the CMP into a partition of the vertices of the abstract $\twohalf$-player game graph $\game$ induced by $\Sys$. To this end, recall that we have assumed that the state space abstraction $\Xh$ respects the priority set $\mathcal{P}$. 

\begin{definition}\label{def:abstract parity spec}
 Let $\Sys$ be a CMP with parity specification $\parity$ and $\game$ the abstract $\twohalf$-player game graph induced by $\Sys$. Then the induced abstract parity specification $\widehat{\mathcal{P}}=\set{\Bh_1,\ldots,\Bh_\ell}$ is defined such that
%
  $\Bh_i=\{v_0\in V_0~|~Q^{-1}(v_0)\subseteq B_i\}$ for all $i\in[1;\ell]$.
We denote the resulting $\twohalf$-player parity game by the tuple 
 $\tup{\mathcal G,\widehat{\mathcal{P}}}$. 
\end{definition}

 We note that the choice of the abstract parity set $\widehat{\mathcal{P}}$ does not partition the state space. Indeed, we implicitly assign an \enquote{undefined} color \enquote{$-$} to all nodes $V_1\cup V_r$. Thereby, we only interpret the given parity specification over a projection of a run to its player $0$ nodes. Formally, a run $r$ over the abstract game graph $\game$ starting from a vertex $\xs^0\in V_0$ is of the form:
\[
r\!=\!\xs^0,\!(\xs^0,u^0),\! (\set{\xs^{0,0}\!,\ldots,\!\xs^{0,i_0}}), \!\xs^1\!, (\xs^1\!,\!u^1),\! (\set{\xs^{1,0}\!,\ldots,\!\xs^{1,i_1}}), \ldots
\]
where $\xs^k\in \set{\xs^{k,0},\ldots,\xs^{k,i_k}}$ for all $k\in\nat$. The projection of the run $r$ to the $\p{0}$ states is defined as $\mathrm{Proj}_{V_0}(r)=\xs^0,\xs^1,\ldots$. 
Let $\varphi$ be an $\omega$-regular specification defined using a set of predicates over $V_0$.
We use the convention that $(\game\models \varphi)$ will denote the set of every infinite run $r$ of $\game$, for any arbitrary pair of strategies of $\p{0}$ and $\p{1}$, such that $\mathrm{Proj}_{V_0}(r)$ satisfies $\varphi$.
This convention is well-defined because every infinite run of $\game$ will have infinitely many occurrences of vertices from $V_0$ in it:
This follows from the strict alternation of the vertices in $V_0$, $V_1$, and $V_r$, as per \REFdef{def:reduced game}.

\subsection{Abstract Controller Synthesis}\label{sec:abstractsynt}

Suppose $\tup{\mathcal G,\widehat{\mathcal{P}}}$ is the $\twohalf$-player parity game constructed from the CMP~$\Sys$ according to Def.~\ref{def:reduced game}.
Then solving the abstract controller synthesis problem amounts to computing the almost sure winning strategy $\pi_0$ of $\p{0}$ over $\tup{\mathcal G,\widehat{\mathcal{P}}}$. 
In this section, we present a symbolic algorithm for this computation on $\twohalf$-player parity games.
While the established algorithm \citep{CJH03} solves $\twohalf$-player games through a reduction to usually much larger $2$-player games, our algorithm is direct and much faster---both in theory and in practice \citep{banerjee2021fast,banerjee2022tacas}.
Our symbolic algorithm for $\twohalf$-player parity games has already appeared as a special case of fair adversarial games in our (not peer-reviewed) technical report \citep{banerjee2021fast}, though the proof of correctness is via a reduction to fair adversarial \emph{Rabin} games.
In this section, as one of our main contributions of this paper, we present a direct and simpler proof for the same algorithm.

Let us review the symbolic algorithm for $\twohalf$-player parity games from \citet{banerjee2021fast}, before we give our new proof.
The main ingredients of our algorithm are the following set transformers:
For any given sets of vertices $S,T\subseteq V$ of $\mathcal{G}$, define
\begin{align*}
	\cpre(S) &\coloneqq \set{v\in V_0 \mid E(v)\cap S\neq \emptyset} \cup \set{v\in V_1\cup V_r\mid E(v)\subseteq S},\\
	\apre(S,T)&\coloneqq \cpre(T) \cup \set{v\in V_r \mid E(v)\subseteq S \wedge E(v)\cap T\neq \emptyset }.
\end{align*}
Intuitively, the set $\cpre(S)$ is the set of vertices from which $\p{0}$ can force the game to reach $S$ in one step.
On the other hand, the set $\apre(S,T)$ is a subset of $\cpre(S)$ (i.e.\ the game remains in $S$ after one step) from which additionally there is a positive probability transition to the set $T$.

Without loss of generality, we assume that the highest priority $\ell$ appearing in the abstract parity specification $\widehat{\mathcal{P}} = \set{\Bh_1,\ldots,\Bh_\ell}$ is \emph{even}.
(When this is not the case by default, we add a dummy set $\Bh_\ell=\emptyset$ with the next higher even priority $\ell$.)
With these, the fixpoint algorithm for solving parity games can be expressed using the following $\mu$-calculus expression (cf. \cite[Eq.(21)]{banerjee2021fast}):
\begin{align}\label{equ:2.5 parity}
& \nu Y_{\ell}.~\mu X_{\ell-1}\ldots ~\mu X_3.~\nu Y_{2}.~\mu X_1.\notag \\
                &\quad(\Bh_1 \cap \apre(Y_2,X_1)) \cup (\Bh_2 \cap \cpre(Y_2)) \cup \notag\\
                &\quad\quad (\Bh_3\cap \apre(Y_4, X_3)) \cup (\Bh_4\cap \cpre(Y_4)) \cup \notag\\
                &\quad\quad\quad\ldots\notag\\ 
                &\quad\quad\quad\quad (\Bh_{\ell-1}\cap \apre(Y_\ell,X_{\ell-1}))\cup  (\Bh_{\ell} \cap \cpre(Y_{\ell} )).
\end{align}

In the rest of this section, we make the number of priorities explicit by using the notation $\widehat{\mathcal{P}}_\ell$ to denote the set of priorites $\set{\Bh_1,\ldots,\Bh_\ell}$ for some even $\ell$.

In order to obtain a direct proof of the above parity fixpoint for $\twohalf$-player games in Thm.~\ref{thm:correctness of 2.5 parity fp}.
we utilzie an argument similar to the one used by \citet{EJ91} in the correctness proof of their symbolic algorithm for solving \emph{$2$-player} parity games.
Like them, we use an induction over the number of priorities $\ell$, and inductively decompose the parity specification $\widehat{\mathcal{P}}_\ell$ into two parts $R \coloneqq \Bh_\ell \cup \Bh_{\ell-1}$ and $S\coloneqq \cup_{i\in [1;\ell-2]}\Bh_i$. 
Almost sure satisfaction of $\mathit{Parity}(\widehat{\mathcal{P}}_\ell)$ involves: (I) when in $R$, either visit $\Bh_\ell$ infinitely often or eventually reach the almost sure winning region for $\mathit{Parity}(\widehat{\mathcal{P}}_{\ell-2})$, which is a subset of $S$, and (II) when in $S$, either eventually reach the almost sure winning region for $\mathit{Parity}(\widehat{\mathcal{P}}_{\ell-2})$ or eventually reach the vertices in region $R$ from where (I) can be satisfied.
Clearly, there is a two-way dependence between (I) and (II), which is handled in the nested fixpoint computation by iteratively solving (I) and (II) one after the other until convergence.

To easily reason about this nested dependence in our inductive proof of correctness, we ``flatten out'' the nested fixpoint computation by fixing the solution of (I) using a symbolic variable $T$, that itself gets updated subsequently based on the solution of (II).
A visualization of this decomposition of the fixpoint for $\mathit{Parity}(\widehat{\mathcal{P}}_{\ell})$ (as in \eqref{equ:2.5 parity}) into (I) and (II) using the variable $T$ is as follows:
\begin{center}
\begin{tikzpicture}
	\draw	[fill=red,fill opacity=0.1,thick,red]	(0.5,0.5)	--	(9.4,0.5)	--	(9.4,4.1)	--	(2.2,4.1)	--	(2.2,3.5)	--	(0.5,3.5)	--	cycle;
	\draw[fill=blue, fill opacity=0.1, thick,blue]	(2.0,-0.1)	--	(9.0,-0.1)	--	(9.0,0.45)	--	(2.0,0.45)	--	cycle;
	\node[align=center]	at	(5,2)	
		{$\begin{aligned}
			& \nu Y_{\ell}.~\mu X_{\ell-1}.~\nu Y_{\ell-2}.~\mu X_{\ell-3}\ldots ~\mu X_3.~\nu Y_{2}.~\mu X_1. \\
                &\quad(\Bh_1 \cap \apre(Y_2,X_1)) \cup (\Bh_2 \cap \cpre(Y_2))\, \cup \\
                &\quad\quad (\Bh_3\cap \apre(Y_4, X_3)) \cup (\Bh_4\cap \cpre(Y_4))\, \cup \\
                &\quad\quad\quad\ldots\\
                &\quad\quad\quad\quad  (\Bh_{\ell-3}\cap \apre(Y_{\ell-2},X_{\ell-3}))\cup  (\Bh_{\ell-2} \cap \cpre(Y_{\ell-2} ))\,\cup \\
                &\quad\quad\quad\quad\quad (\Bh_{\ell-1}\cap \apre(Y_{\ell},X_{\ell-1}))\cup  (\Bh_{\ell} \cap \cpre(Y_{\ell} )).
		\end{aligned}$};
	\node[align=center]		(a)	at	(11.4,2.2)	{Fixpoint for \\$\mathit{Parity}(\widehat{\mathcal{P}}_{\ell-2})$};
	\node	(b)	at	(11,0.2)	{$T$};
	\draw[->]	(a.west)		--	(9.4,2.2);
	\draw[->]	(b.west)		--	(9,0.2);
\end{tikzpicture}
\end{center}

For fixed values of $Y_{\ell}$ and $X_{\ell-1}$, the value of $T$ (i.e.\ part (I)) is fixed, which is updated subsequently based on the outcome of the fixpoint for $\mathit{Parity}(\widehat{\mathcal{P}}_{\ell-2})$ (i.e.\ part (II)).
Using induction over $\ell$, we show that the above fixpoint computes exactly the set of vertices from where (I) or (II) can be fulfilled almost surely.


The base case of the inductive proof is stated in the following lemma.
Here we consider the case $\ell = 2$ along with the set $T$ that we informally talked about earlier.
This case can be seen as a B\"uchi-or-reachability specification.

\begin{lemma}\label{lem:a.s. safe buchi or reach}
	Let $B,T\subseteq V$ be two sets of vertices of a $\twohalf$-player game graph $\game$, and let $B^c$ denote the complement of $B$.
	Then the fixed point of the following $\mu$-calculus expression is $\p{0}$'s almost sure winning region for the winning condition $\square\lozenge B\lor \lozenge T$:
	\begin{equation}\label{equ:a.s. reach-avoid}
		\nu Y.~\mu X.~(B^c\cap \apre(Y,X)) \cup (B\cap \cpre(Y)) \cup T.
	\end{equation}
\end{lemma}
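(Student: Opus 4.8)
The plan is to show the two inclusions separately: first that every vertex in the fixpoint $Z^\ast$ of \eqref{equ:a.s. reach-avoid} is almost sure winning for $\square\lozenge B\lor\lozenge T$, and then that every almost sure winning vertex lies in $Z^\ast$. For the soundness direction I would fix the outer greatest fixpoint value $Y^\ast = Z^\ast$ and analyze the inner least fixpoint $X^\ast = \mu X.~(B^c\cap\apre(Y^\ast,X))\cup(B\cap\cpre(Y^\ast))\cup T$, which equals $Y^\ast$. The least fixpoint gives a rank/stratification of $Z^\ast\setminus T$: vertices in $B\cap\cpre(Y^\ast)$ have rank $0$, and a vertex $v\in B^c$ has rank $k>0$ iff $v\in\apre(Y^\ast,X_{k-1})$ where $X_{k-1}$ is the set of vertices of rank $<k$. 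I would extract the memoryless $\p{0}$ strategy $\pi_0$ that at a rank-$k$ vertex in $B^c$ moves toward a successor in $X_{k-1}$ (possible since $v\in\apre(Y^\ast,X_{k-1})\subseteq\cpre(X_{k-1})$... more precisely $\apre(S,T)\subseteq\cpre(S)$ keeps us in $Y^\ast$ and there is a positive-probability edge dropping the rank), and at a $B$-vertex in $\cpre(Y^\ast)$ moves to stay in $Y^\ast$. The key invariant is that under $\pi_0$ the play stays in $Y^\ast$ forever (because $\cpre(Y^\ast)$ and $\apre(\cdot,\cdot)\subseteq\cpre(Y^\ast)$ guarantee this for all players, including random vertices whose \emph{all} successors lie in $Y^\ast$), unless it reaches $T$. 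Within $Y^\ast\setminus T$, from any $B^c$-vertex of rank $k$ there is a uniformly positive probability (bounded below since the game is finite) of decreasing the rank within one or two steps; hence almost surely the rank cannot stay positive forever without visiting $B$ — i.e., $\square\lozenge B$ holds — unless $T$ is reached. A Borel–Cantelli / finite-state argument then yields $P^{\pi_0,\pi_1}_v(\square\lozenge B\lor\lozenge T)=1$ for every $v\in Z^\ast$ and every $\pi_1$.

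For the completeness direction I would argue the complement $U := V\setminus Z^\ast$ is a ``trap'' witnessing a spoiling strategy for $\p{1}$: since $Z^\ast$ is a $\nu$-fixpoint, $Z^\ast = (B^c\cap\apre(Z^\ast,Z^\ast))\cup(B\cap\cpre(Z^\ast))\cup T$, so any vertex not in $Z^\ast$ and not in $T$ is either a $B$-vertex not in $\cpre(Z^\ast)$, or a $B^c$-vertex not in $\apre(Z^\ast,Z^\ast)$; unfolding the inner $\mu$ further, a $B^c$-vertex $v\notin Z^\ast$ either has an edge (if $v\in V_1$) or positive chance (if $v\in V_r$) of leaving $Z^\ast$, or stays in $Z^\ast\cap B^c$ but can be kept by $\p{1}$ from ever decreasing the inner rank, hence from ever reaching $B$ or $T$. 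Formalizing this, I would build a memoryless $\p{1}$ strategy $\pi_1$ that, from any vertex of $U$, keeps the play inside $U$ with probability one and visits $B$ only finitely often (and never reaches $T$, as $T\subseteq Z^\ast$). This shows $U$ is contained in $\p{1}$'s almost sure winning region for $\lnot(\square\lozenge B\lor\lozenge T) = \lozenge\square B^c\land\square T^c$, so no vertex outside $Z^\ast$ can be almost sure winning for $\p{0}$. Together with determinacy of $\twohalf$-player $\omega$-regular games (so the winning region is exactly the complement of $\p{1}$'s spoiling region), this gives $\mathcal{W}(\game\models\square\lozenge B\lor\lozenge T) = Z^\ast$.

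The main obstacle I anticipate is the careful treatment of the $\apre$ operator at random vertices, which is exactly where the $\twohalf$-player setting differs from the ordinary two-player Büchi case: $\apre(S,T)$ demands \emph{all} successors stay in $S$ \emph{and} at least one (hence, by uniform randomization, a probability bounded below) enters $T$. The subtlety is that progress toward lower rank is only probabilistic, not forced, so I cannot use a plain attractor argument; instead I need the quantitative step that, because $V$ is finite, there is a global constant $p>0$ lower-bounding the one-shot chance of rank decrease from any positive-rank $B^c$-vertex, and then conclude that almost surely rank decreases happen until $B$ or $T$ is hit. The second delicate point is making the ``$\p{1}$ keeps the rank from dropping'' argument in the completeness direction precise — this requires showing that outside $Z^\ast$, whenever $\p{1}$ controls a vertex it has a successor staying in $U$, and whenever a random vertex is in $U$ it cannot have all successors forced into $Z^\ast$ in a way that helps $\p{0}$, which follows by contraposition from the definitions of $\cpre$ and $\apre$ together with the fixpoint equation. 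I expect the rest (monotonicity of the operators, existence of the fixpoints, the rank stratification) to be routine.
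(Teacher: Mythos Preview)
Your soundness direction is correct and is essentially what the paper does: fix $Y^\ast$, use the inner $\mu$-iterates as ranks, extract a memoryless $\pi_0$ that keeps the play in $Y^\ast$ while guaranteeing a uniformly positive one-step chance of rank decrease, and conclude via a finite-state/ergodicity argument. (The paper phrases the last step as ``every BSCC of the induced Markov chain intersects $B$ or lies in $T$''.)

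Your completeness direction has a genuine gap. You aim to show that $U:=V\setminus Z^\ast$ is contained in $\p{1}$'s \emph{almost sure} winning region for $\lozenge\square B^c\land\square T^c$, by building a $\pi_1$ that ``keeps the play inside $U$ with probability one''. This is not achievable in general: from the fixpoint equation one only gets $Z^\ast\setminus T\subseteq\cpre(Z^\ast)$, hence a \emph{random} vertex $v\in U$ merely has \emph{some} successor in $U$; it may also have successors in $Z^\ast$, and neither player can prevent those transitions. Concretely, take $V=\{a,b,c,d\}$ with $a\in V_r$, $b\in V_1$, $c,d\in V_0$, edges $a\to\{b,c\}$, $b\to\{a,d\}$, $c\to c$, $d\to d$, $B=\{c\}$, $T=\emptyset$. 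Then $Z^\ast=\{c\}$, so $a\in U$; yet from $a$ the play enters $c\in Z^\ast$ with probability $\tfrac12$ regardless of $\pi_1$, and $\p{1}$ wins the complement with probability at most $\tfrac12$. So $U$ is \emph{not} contained in $\p{1}$'s almost-sure region, and the ``trap $U$'' argument collapses. The determinacy you invoke (``winning region is the complement of $\p{1}$'s spoiling region'') is the trivial identity $\text{a.s.}(\p{0},\varphi)=V\setminus\text{pos}(\p{1},\lnot\varphi)$; it does \emph{not} give you $\text{a.s.}(\p{0},\varphi)=V\setminus\text{a.s.}(\p{1},\lnot\varphi)$, which is what your construction targets and which is false here.

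The missing idea is that completeness needs the \emph{iterative} structure of the outer $\nu$-fixpoint, not just the final value $Z^\ast$. The paper argues by induction along the decreasing sequence $Y^0=V\supseteq Y^1\supseteq\cdots\supseteq Y^q=Z^\ast$: assuming every vertex outside $Y^i$ is not almost-surely winning, any $v\in Y^i\setminus Y^{i+1}$ fails the inner $\mu$-fixpoint \emph{with parameter $Y=Y^i$}, so against any $\pi_0$ there is a $\pi_1$ forcing either a positive-probability exit from $Y^i$ in one step (case (a)) or the absence of any positive-probability path to $(B\cap\cpre(Y^i))\cup T$ (case (b)); either way $v$ is not almost-surely winning. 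Your single-shot analysis of $U$ with $Y=Z^\ast$ cannot see case~(a), because the relevant ``leak'' is into $Y^i\setminus Z^\ast$, a set that has already been discarded by the time you look only at $Z^\ast$.
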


For the proof, we will use some concepts which need to be defined first.
Let $\pi_0\in\Pi_0^{\mathrm{DM}}$ and $\pi_1\in\Pi_1^{\mathrm{DM}}$ be a pair of arbitrary deterministic memoryless strategies of the two players.
Fixing $\pi_0$ and $\pi_1$ in the game graph $\game$ induces a Markov chain $M=\tup{V',E',\tup{\emptyset,\emptyset,V_r'}}$ with $V'=V_r'=V$ and $E'\subseteq E$ where every $v\in V_0\cup V_1$ has a single successor in $E'$ governed by the strategies $\pi_0$ and $\pi_1$.
Concretely, for every $v\in V_0$, $(v,v')\in E'$ iff $\pi_0(v)=v'$ and for every $v\in V_1$, $(v,v')\in E'$ iff $\pi_1(v)=v'$.
We call $M$ the \emph{Markov chain induced on $\game$ by $\pi_0$, $\pi_1$}.
We will also use the concept of Bottom Strongly Connected Components (BSCC) of a given Markov chain $M = \tup{V,E,\tup{\emptyset,\emptyset,V_r}}$:
A set of vertices $C\subseteq V$ of $M$ is called a BSCC if (1) the subgraph $\tup{C,E\cap (C\times C)}$ is a strongly connected component of the graph $\tup{V,E}$, and (2) no vertex in $C$ has an outgoing edge to a vertex outside $C$, i.e.\ $E(C)\subseteq C$. 

\begin{proof}[Proof of Lem.~\ref{lem:a.s. safe buchi or reach}]
We prove Lem.~\ref{lem:a.s. safe buchi or reach} in two parts:
	\begin{description}
		\item[Soundness:] 
	Suppose $Y^*$ is the fixed point of the expression \eqref{equ:a.s. reach-avoid}.
	Consider the computation of the inner fixpoint over $X$ when the outer fixpoint has reached the last iteration $Y^*$.
	We get a sequence of $X^i$-s as in the following:
	\begin{align}\label{equ:a.s. reach-avoid mu-calculus iterations}
		X^0 &= \emptyset\notag\\
		X^1 &= (B\cap \cpre(Y^*)) \cup T\notag\\
		X^2 &= (B^c\cap \apre(Y^*,X^1)) \cup (B\cap \cpre(Y^*)) \cup T\notag\\
		X^3 &= (B^c\cap \apre(Y^*,X^2)) \cup (B\cap \cpre(Y^*)) \cup T\notag\\
		&\vdots\notag\\
		Y^* = X^p &= (B^c\cap \apre(Y^*,X^p)) \cup (B\cap \cpre(Y^*)) \cup T,
	\end{align}	 
	where $X^p$ is the fixed point of the inner fixpoint equaton.
	Observe that $X^0\subseteq X^1\subseteq X^2\subseteq \ldots \subseteq X^p$.
	Since we do not care what happens to the runs after they reach $T$, let us assume that every vertex in $T$ is a sink vertex with a self loop on it; clearly every vertex in $T$ forms a BSCC on its own.
	Let $v$ be any arbitrary vertex in $Y^*$.
		From \eqref{equ:a.s. reach-avoid mu-calculus iterations} it follows that there exists a strategy $\pi_0$ of $\p{0}$, such that for every strategy $\pi_1$ of $\p{1}$, there is a run of positive probability of length $\leq p$ that reaches either $B$ or $T$.		
		
		Moreover, $\pi_0$ ensures that no run from $v$ goes outside $Y^*$ even after reaching $B$, so that with positive probability either $B$ can be reached again or $T$ can be eventually reached.
		These imply that every path in the Markov chain, induced on $\game$ by $\pi_0$ and an arbitrary $\p{1}$ strategy $\pi_1$, will eventually reach a BSCC that either intersects with $B$ or is in $T$, almost surely.
		From the ergodicity property of Markov chains (see \citet[Prop.~4.27, 4.28]{kemenydenumerable}) it follows that $P_v^{\pi_0,\pi_1}\left(\game\models \square\lozenge B \lor \lozenge T\right)=1$.
		\item[Completeness:] Let $Y^0=V\supseteq Y^1\supseteq Y^2\supseteq \ldots \supseteq Y^q$ be the sequence of sets of vertices represented by the $Y$ fixpoint variable, where $q$ is the iteration index where the fixpoint converges (i.e.\ the minimum index $q$ such that $Y^q=Y^{q+1}$).
		We show, by induction, that for every iteration index $i\in [0;q]$, the almost sure winning region for the parity specification is contained in $Y^i$.
		
		The base case is when $i=0$, in which case the claim follows trivially by observing that $Y^0=V$.
		Suppose the claim holds for iteration $i$, i.e.\ $Y^i$ contains the almost sure winning region and every vertex outside $Y^i$ is losing with positive probability.
		When the iterations over $Y$ reaches $i+1$, for every vertex $v$ in $Y^{i+1}\setminus Y^{i}$, for every $\p{0}$ strategy there is a $\p{1}$ strategy such that
		either (a) the plays exit $Y^i$ in one step, surely or with positive probability, or (b) no positive probability path exists from $v$ that reaches $B\cap \cpre(Y^i)$ or $T$.
		(Both follow from the definition of $\cpre$ and $\apre$ and the fixpoint expression.)
		
		When (a) is true, $v$ is not almost surely winning because every vertex outside $Y^i$ is not almost surely winning by induction hypothesis.
		When (b) is true, $v$ is not almost surely winning because it is no longer possible to reach a vertex in $B$ from which the play surely does not leave $Y^i$.
		Thus $v$ cannot belong to the almost sure winning region, implying our induction claim.

	\end{description}
\end{proof}

Now we establish the correctness of our proposed parity fixpoint given in \eqref{equ:2.5 parity}.

\begin{theorem}\label{thm:correctness of 2.5 parity fp}
	The fixpoint \eqref{equ:2.5 parity} is a sound and complete algorithm for computing the almost sure winning region for the parity winning condition $\widehat{\mathcal{P}}=\set{\Bh_1,\ldots,\Bh_\ell}$.
\end{theorem}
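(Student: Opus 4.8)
The plan is to prove \REFthm{thm:correctness of 2.5 parity fp} by induction on the number $\ell$ of priorities, following the structure of the correctness proof of \citet{EJ91} for deterministic parity games, but with the two-player predecessor operators replaced by the stochastic operators $\cpre$ and $\apre$. The base case $\ell=2$ is exactly \REFlem{lem:a.s. safe buchi or reach} (with $B = \Bh_2$ and $T = \emptyset$; the parametrized version with arbitrary $T$ is what gets used one level up). For the inductive step I would split the priorities as $R \coloneqq \Bh_\ell \cup \Bh_{\ell-1}$ and $S \coloneqq \bigcup_{i\in[1;\ell-2]}\Bh_i$, and use the ``flattened'' view from the diagram preceding the theorem: the two outermost quantifiers $\nu Y_\ell.\,\mu X_{\ell-1}$ fix a value of the symbolic variable $T$; the inner $\ell-2$ quantifiers then compute a set which, by the induction hypothesis applied to $\mathit{Parity}(\widehat{\mathcal P}_{\ell-2})$ on the game restricted to the current $Y_\ell$ with $T$ treated as an absorbing winning target, equals $\p{0}$'s almost sure winning region for $\mathit{Parity}(\widehat{\mathcal P}_{\ell-2}) \lor \lozenge T$; and the $Y_\ell,X_{\ell-1}$ layer closes the loop so that at the fixpoint $T$ coincides with the whole computed set, resolving the two-way dependence between parts (I) and (II) described in the text.

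For \textbf{soundness} I would argue as follows. Let $Y^\ast$ be the outer fixpoint. Running the $\mu X_{\ell-1}$ iteration at $Y_\ell = Y^\ast$ produces an increasing chain $T^0 \subseteq T^1 \subseteq \cdots \subseteq T^p = Y^\ast$, where $T^{j+1}$ is obtained from the inner $(\ell-2)$-priority fixpoint with parameter $T = T^j$, unioned with the $R$-layer $(\Bh_{\ell-1}\cap \apre(Y^\ast,X_{\ell-1})) \cup (\Bh_\ell \cap \cpre(Y^\ast))$. Assign to each vertex of $Y^\ast$ a \emph{rank} $j$ equal to the least index with $v \in T^j$. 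By the induction hypothesis, from each vertex of rank $j+1$ there is a deterministic memoryless $\p{0}$-strategy that, while never leaving $Y^\ast$, either wins $\mathit{Parity}(\widehat{\mathcal P}_{\ell-2})$ almost surely, or reaches $T^j$ almost surely, or (the $R$-layer) from a vertex of $R$ either revisits $R\cap\cpre(Y^\ast)$ with positive probability or eventually reaches $T^j$; composing these gives a single memoryless $\p{0}$-strategy $\pi_0$ on $Y^\ast$ that never leaves $Y^\ast$ and never strictly increases the rank. Fixing any $\pi_1$ and invoking the BSCC/ergodicity argument of the proof of \REFlem{lem:a.s. safe buchi or reach}: on almost every play either the rank strictly decreases infinitely often (impossible, it is bounded), or the play is eventually absorbed in a region where $\mathit{Parity}(\widehat{\mathcal P}_{\ell-2})$ holds almost surely (so the highest priority seen infinitely often is $\le \ell-2$ and even by the induction hypothesis), or $\Bh_\ell$ is visited infinitely often (highest priority $\ell$, even by the standing assumption). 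In all surviving cases $\mathit{Parity}(\widehat{\mathcal P}_\ell)$ is satisfied, so $P_v^{\pi_0,\pi_1}(\game\models\mathit{Parity}(\widehat{\mathcal P}_\ell))=1$.

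For \textbf{completeness} I would use the outer-fixpoint induction: write $Y^0 = V \supseteq Y^1 \supseteq \cdots \supseteq Y^q = Y^\ast$ and show by induction on $i$ that every vertex outside $Y^i$ is losing with positive probability for $\p{0}$. A vertex $v \in Y^{i}\setminus Y^{i+1}$ is dropped precisely because, against every $\p{0}$-strategy, $\p{1}$ can ensure that from $v$, while staying in $Y^i$, one can reach neither (i) $\p{0}$'s almost sure winning region for $\mathit{Parity}(\widehat{\mathcal P}_{\ell-2})$ inside $Y^i$ with the current target $T$, nor (ii) $T$ itself, nor (iii) $\Bh_\ell\cap\cpre(Y^i)$; by the outer induction hypothesis ``outside $Y^i$ is losing with positive probability'' and by the induction hypothesis on $\ell$ applied to $\mathit{Parity}(\widehat{\mathcal P}_{\ell-2})$ restricted to $Y^i$ (together with positional determinacy of $\twohalf$-player parity games, which lets us pick a single memoryless $\p{1}$-strategy), $\p{1}$ can thus either push the play out of $Y^i$ with positive probability or trap it in a region whose highest priority seen infinitely often is odd and $\le \ell-1$, so $v$ is losing with positive probability. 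Since no vertex of the almost sure winning region is ever dropped, $Y^\ast$ equals that region.

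The step I expect to be the main obstacle is the bookkeeping in soundness: making precise that the parametrized inner fixpoints induce a finite rank function, that the composed $\p{0}$-strategy is still memoryless and monotone in the rank, and that on every BSCC of the induced Markov chain either priority $\ell$ appears or the BSCC lies entirely inside a region certified by the induction hypothesis for $\mathit{Parity}(\widehat{\mathcal P}_{\ell-2})$. Getting the interplay between the outer $\nu Y_\ell$ (a safety-like ``stay in $Y^\ast$'' guarantee) and the inner $\mu$'s (progress/reachability guarantees) exactly right at the fixpoint — so that the mutual dependence of (I) and (II) genuinely collapses — is the delicate point, and is exactly what the flattening via $T$ is designed to keep tractable.
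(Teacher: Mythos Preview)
Your proposal is correct and follows essentially the same route as the paper: induction on $\ell$ with \REFlem{lem:a.s. safe buchi or reach} as the base case, the parametrized inner fixpoint $I_{\ell-2}(T)$ as the induction hypothesis, soundness via the BSCC/ergodicity argument applied to the iterates of $X_{\ell-1}$ at the final $Y_\ell^\ast$, and completeness via induction along the shrinking $Y_\ell^i$. The only point the paper makes explicit that you gloss over is the technicality of making ``the almost sure winning region for $\mathit{Parity}(\widehat{\mathcal P}_{\ell-2})$'' well-defined on $\game$ (since $\widehat{\mathcal P}_{\ell-2}$ no longer partitions $V$): it handles this by passing to a subgame in which all $\Bh_{\ell-1}\cup\Bh_\ell$ vertices are collapsed to a single odd-priority sink, rather than restricting to $Y_\ell$ as you suggest.
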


The proof of Thm.~\ref{thm:correctness of 2.5 parity fp} uses an induction over the number of priorities $\ell$.
The soundness claim uses the same ergodicity property of Markov chains, as in the proof of Lem.~\ref{lem:a.s. safe buchi or reach}, to establish that either the $\Bh_\ell$ set is visited infinitely often or eventually the almost sure winning region for $\widehat{\mathcal{P}}_{\ell-2}$ is reached almost surely.
The completeness claim uses an induction over the indices of the outermost fixpoint variable $Y_\ell$ as in the proof of Lem.~\ref{lem:a.s. safe buchi or reach}, and shows that every almost sure winning vertex will remain included in every iteration over $Y_\ell$.
The complete proof is included in the appendix.

\subsection{Controller Refinement }\label{sec:refine}
Now consider that the abstract $2\sfrac{1}{2}$-player parity game $\tup{\mathcal G,\widehat{\mathcal{P}}}$ constructed from the CMP~$\Sys$ via \REFdef{def:reduced game} and \REFdef{def:abstract parity spec} has been solved as discussed in \REFsec{sec:abstractsynt}. Hence, we know the almost-sure $\p{0}$ winning region $\mathcal{W}(\game\models \paritya)$ and the associated deterministic memoryless almost-sure $\p{0}$ winning policy $\pi_0\in \Pi^\DM$. Then we refine $\pi_0$ to a control policy $\rho\in\Policy$ for the CMP~$\Sys$ under parity condition $\paritya$ as follows. 

\begin{definition}
 Let $\Sys$ be a CMP with parity specification $\parity$ and $\tup{\mathcal G,\widehat{\mathcal{P}}}$ its induced finite $2\sfrac{1}{2}$-player parity game with deterministic memoryless almost sure $\p{0}$ winning strategy $\pi_0\in \Pi^\DM$. Then the control policy $\rho\in\Policy$ is called \emph{the refinement of} $\pi_0$ if and only if
for every $\xs\in \Xs$, if $\xs\in \xh$ for some $\xh\in \Xh$, and if $\pi_0(\xh)=(\xh,u)\in V_1$ for some $u\in \mathcal{U}$, then $\Cont(\xs)\coloneqq u$.
\end{definition}

With the completion of this last step of our ABCD method for stochastic nonlinear systems we can finally state our main theorem providing a solution to Problem~\ref{prob:policy}, which
we prove in \REFsec{sec:soundness}.

\begin{theorem}[Solution of Problem~\ref{prob:policy}]\label{thm:main}
	Let $\Sys$ be a CMP and $\mathit{Parity}(\mathcal{P})$ be a given parity specification.
	Let $\tup{\game,\widehat{\mathcal{P}}}$ be the abstract $2\sfrac{1}{2}$-player game defined in \REFdef{def:reduced game}.
	Suppose, a vertex $\xh\in V_0$ is almost sure winning for $\p{0}$ in the game $\tup{\game,\widehat{\mathcal{P}}}$, and $\pi_0\in \Pi^\DM$ is the corresponding $\p{0}$ winning strategy.
	Then the refinement $\Cont$ of $\pi_0$ ensures that $\xh\subseteq \W(\Sys,\Cont)$.
\end{theorem}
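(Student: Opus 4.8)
The plan is to fix an arbitrary concrete state $s\in\xh$ and prove $P_s^{\Cont}(\Sys\models\parity)=1$; since this holds for every $s\in\xh$ it gives $\xh\subseteq\W(\Sys,\Cont)$, which is the claim. Write $W_{\game}\coloneqq\W(\game\models\paritya)$ for the abstract almost-sure winning region, so $\xh\in W_{\game}$ by hypothesis. Fix a path $w=(s^0,s^1,\ldots)$ of $\Sys$ under $\Cont$ with $s^0=s$, let $\hat x^k\coloneqq Q(s^k)$ be the induced \emph{abstract trace} (so $\hat x^0=\xh$), and let $u^k\coloneqq\Cont(s^k)$, so that $\pi_0(\hat x^k)=(\hat x^k,u^k)$ by definition of $\Cont$. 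Because $\Xh$ refines $\mathcal P$ and $\Bh_i=\set{v_0\mid Q^{-1}(v_0)\subseteq B_i}$, the path $w$ satisfies $\parity$ if and only if the sequence $(\hat x^k)_{k\in\nat}$, read as a run over $V_0$, satisfies $\paritya$. It therefore suffices to show that this abstract trace satisfies $\paritya$ almost surely.

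First I would isolate two facts that hold for \emph{every} such trace. The uniform $\p{0}$-winning strategy $\pi_0$ cannot leave $W_{\game}$: for $\hat a\in W_{\game}\cap V_0$ we have $\pi_0(\hat a)=(\hat a,u_{\hat a})\in W_{\game}$ (otherwise $\pi_0$ moves to a losing vertex), every $v_r\in E((\hat a,u_{\hat a}))=V_r(\hat a,u_{\hat a})$ lies in $W_{\game}$ (otherwise $\p{1}$ escapes to a losing vertex), and every successor of such a $v_r$ lies in $W_{\game}$ (otherwise the random player escapes with positive probability); since $\bigcup V_r(\hat a,u_{\hat a})=\Fo(\hat a,u_{\hat a})$, this yields $\Fo(\hat a,u_{\hat a})\subseteq W_{\game}\cap V_0$. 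Combining this with $\hat x^{k+1}\in\Fo(\hat x^k,u^k)$, which holds by \eqref{eq:def fo}, and with $\hat x^0=\xh\in W_{\game}$, an induction shows that (a) $\hat x^k\in W_{\game}\cap V_0$ for all $k$, and (b) every pair $(\hat x^k,\hat x^{k+1})$ is an edge of the directed graph $H$ on $V_0$ defined by $\hat a\to\hat b$ iff $\hat b\in\Fo(\hat a,u_{\hat a})$.

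The one probabilistic ingredient is a fairness property. By \eqref{eq:def fu}, for every $\hat a$ and every $\hat b\in\Fu(\hat a,u_{\hat a})$ there is a uniform $\varepsilon_{\hat a,\hat b}>0$ with $\Ker(\hat b\mid s',u_{\hat a})\ge\varepsilon_{\hat a,\hat b}$ for all $s'\in\hat a$; let $\varepsilon>0$ be the minimum of these over the finitely many relevant pairs. Thus along $(s^k)$ under $\Cont$, whenever $Q(s^k)=\hat a$ the conditional probability given the past that $Q(s^{k+1})=\hat b$ is at least $\varepsilon$, so by L\'evy's conditional Borel--Cantelli lemma we have, almost surely, that for every $\hat a$ visited infinitely often and every $\hat b\in\Fu(\hat a,u_{\hat a})$ the trace visits $\hat b$ immediately after $\hat a$ infinitely often. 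Writing $C$ for the (random, finite) set of abstract states visited infinitely often, it follows that almost surely $C\subseteq W_{\game}\cap V_0$, the restriction $H|_C$ is strongly connected, and $C$ is \emph{$\Fu$-closed}: $\Fu(\hat a,u_{\hat a})\subseteq C$ for every $\hat a\in C$.

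What remains --- and this is the main obstacle --- is to show that any set $C$ enjoying these three properties is \emph{even-dominated}, i.e.\ $\max\set{i\mid C\cap\Bh_i\neq\emptyset}$ is even; this is exactly what makes the abstract trace, and hence $w$, satisfy the parity condition. I would argue by contradiction: if $C$ were odd-dominated, then, using that $H|_C$ is strongly connected and $C$ is $\Fu$-closed, one constructs a history-dependent $\p{1}$ strategy $\pi_1$ that at each $(\hat a,u_{\hat a})$ with $\hat a\in C$ plays sets $v_r$ satisfying $\Fu(\hat a,u_{\hat a})\subseteq v_r\subseteq\Fo(\hat a,u_{\hat a})\cap C$ --- admissible since $\Fu(\hat a,u_{\hat a})\subseteq C$ and $\Fo(\hat a,u_{\hat a})\cap C\neq\emptyset$ --- cycling through enough such $v_r$'s to realise all edges of $H|_C$, and that steers the play from $\xh$ into $C$ along an $H$-path (one exists, since the concrete trace reaches $C$). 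Under $\pi_0,\pi_1$ the play reaches $C$ with positive probability, stays forever among the $V_0$-vertices of $C$ and the intermediate $V_1,V_r$-vertices, and --- by a recurrence/ergodicity argument for the induced finite process, mirroring the soundness arguments for \REFlem{lem:a.s. safe buchi or reach} and \REFthm{thm:correctness of 2.5 parity fp} --- visits every vertex of $C$ infinitely often with positive probability. Such runs have infinitely-often set exactly $C$ and hence violate $\paritya$, contradicting that $\xh$ is almost-sure winning for $\p{0}$. Therefore $C$ is even-dominated almost surely, so $P_s^{\Cont}(\Sys\models\parity)=1$; as $s$ was an arbitrary point of $\xh$, we conclude $\xh\subseteq\W(\Sys,\Cont)$. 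The construction of $\pi_1$ and the proof that the play gets trapped in and exhaustively revisits $C$ with positive probability are the technical heart, and I expect them to reuse the ergodicity machinery already set up for the abstract soundness proofs.
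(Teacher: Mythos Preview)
Your approach is correct and takes a genuinely different route from the paper's.

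The paper decomposes the parity condition into local sub-specifications indexed by the odd priorities: a safety part $\psi_S=\square\lnot B_i$ and a reachability part $\psi_R=\lozenge(\psi_S\vee\bigvee_j B_j)$. It then proves (i) almost-sure winning in $\tup{\game,\widehat{\mathcal P}}$ implies the abstract $\widehat\psi_R$ holds after every visit to $\Bh_i$ (\REFlem{lem:decompose proof of liveness}); (ii) abstract safety and abstract reachability transfer to the CMP separately, the latter via a ranking function and an explicit $\varepsilon$-decay argument (\REFprop{prop:sound safety}, \REFprop{prop:sound reach}); and (iii) these local guarantees recombine into $\parity$ via \REFlem{lem:sufficient condition for parity in CMP}. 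Your approach instead analyses the limit set $C$ of the abstract trace directly: conditional Borel--Cantelli gives $\Fu$-closure of $C$, path-combinatorics gives strong connectivity of $H|_C$, and a single combinatorial lemma (``every such $C\subseteq W_{\game}$ is even-dominated'') replaces the whole $\psi_R/\psi_S$ machinery.

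Two small points worth tightening. First, your phrasing ``any $C$ with these three properties'' is slightly stronger than what your contradiction argument proves, since you also use $H$-reachability of $C$ from $\xh$; either add that as a fourth property (it holds because $C$ arises from an actual trace) or, more cleanly, start the contradicting play from the max-priority vertex $\hat a^{*}\in C\subseteq W_{\game}$ and use that the memoryless $\pi_0$ wins from every vertex of $W_{\game}$, not just $\xh$. Second, the adversary can be taken \emph{memoryless}: at $(\hat a,u_{\hat a})$ with $\hat a\in C$, play $v_r=\Fu(\hat a,u_{\hat a})\cup\{\hat c\}$ where $\hat c$ is the next step on a fixed shortest $H|_C$-path from $\hat a$ to $\hat a^{*}$ (this is admissible since $\Fu\subseteq C$ and $\hat c\in\Fo\cap C$). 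The induced process on $C$ is then a finite Markov chain in which $\hat a^{*}$ is reachable from every state and hence recurrent, so the ``ergodicity'' you invoke is just the elementary fact that the unique BSCC contains $\hat a^{*}$---no history-dependence or cycling is needed.

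In terms of trade-offs: the paper's decomposition is modular and makes the distinct roles of $\Fo$ (for safety) and $\Fu$ (for liveness) explicit at the level of the sub-proofs, which matches how \REFprop{prop:sound safety} and \REFprop{prop:sound reach} are reused. Your route is shorter and more direct, isolating one clean combinatorial statement about the game graph; it also makes transparent that the concrete argument only ever needs the limit-set structure, not intermediate rank values.
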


\begin{remark}\label{remark:quality of approximation}
An \emph{over}-approximation of the optimal almost sure winning domain $\W^\ast(\Sys)$ of $\Sys$ w.r.t.\ $\parity$ can be computed via 
$\tup{\game,\widehat{\mathcal{P}}}$ as well. 
To obtain an \emph{over}-approximation, we solve this abstract game \emph{cooperatively}. 
That is, we let player $\p{0}$ choose both its own moves and the moves of player $p_1$ to win almost surely with respect to $\paritya$. 
Then the approximation error of Problem~\ref{prob:policy} can be upper-bounded as the Lebesgue measure of the set difference of the over- and the under-approximation, as will be done for the experiments in \REFsec{sec:case_study}. 
\end{remark}

\section{Proof of Theorem~\ref{thm:main}}
\label{sec:soundness}


\smallskip
\noindent{\textbf{Proof outline.}}
%
To prove \REFthm{thm:main}, we first decompose both the original and the abstract parity specifications $\parity$ and $\paritya$ into a combination of more manageable safety and reachability sub-parts. That is, for every state reachable by a finite run in  $\Sys$, we consider a local safety specification $\psiS$  and a local reachability specification $\psiR$ defined by
\begin{equation}
 \psiS: = \square \lnot B_i \text{ and } \textstyle\psiR: = \lozenge \left( \psiS \vee \bigvee_{j\ineven\incone{i+1}{k}} B_j \right).
\end{equation}
Intuitively, $\psiR$ requires that every time an odd priority---say $B_i$---is visited in $\Sys$, eventually either $B_i$ should never occur or an even priority $B_j$ with $j>i$ should occur, almost surely.
Similarly, for the abstract  $2\sfrac{1}{2}$-player parity game $\tup{\mathcal G,\widehat{\mathcal{P}}}$ we consider the local safety specification $\psihS$ and a local reachability specification $\psihR$ defined by
\begin{equation}
 \psihS := \square \lnot \Bh_i \text{ and } \psihR := \textstyle\lozenge \left( \psihS \vee \bigvee_{j\ineven\incone{i+1}{k}} \Bh_j \right).
\end{equation}

While the above decomposition needs to be established both for $\game$ and for $\Sys$, the directions of the respective proof differ.
For $\Sys$ we show that if $\psiR$ holds for a state reachable by a finite run over $\Sys$, then the original specification $\parity$ is satisfied by a continuation of the run using the refined policy $\rho$ (\textbf{Step 1}).
For $\game$ we show that if $\paritya$ is satisfied, then $\psihR$ holds for every state visited by a run compatible with the almost-sure winning strategy $\pi_0$ in $\tup{\mathcal G,\widehat{\mathcal{P}}}$ (\textbf{Step 2}).
Further, we show that satisfaction of $\psihS$ (resp. $\psihR$) in $\game$ implies satisfaction of $\psiS$ (resp. $\psiR$) in $\Sys$ (\textbf{Step 3-5}). With this, we have all ingredients to prove \REFthm{thm:main} (\textbf{Step 6}).

\smallskip
\noindent{\textbf{Step 1: Decomposition of $\parity$.} 
We prove a \emph{sufficient} condition for satisfaction of $\parity$ in $\Sys$ if $\psiR$ holds.

\begin{lemma}\label{lem:sufficient condition for parity in CMP}
	Let $\Sys$ be a CMP, $\parity$ be a parity specification, $\xs^0\in \Xs$ be a given initial state, and $\rho$ be a control policy.
	Suppose the following holds for every finite path $(\xs^0,\ldots, \xs^n)\in \Xs^{n+1}$ of $\Sys$ and every $i\inodd \incone{1}{k}$:
	\begin{align}\label{eq:reachability decomposition a}
		\xs^n\in B_i \implies P_{\xs^n}^{\rho} \left( \Sys\models
		\psiR
		\right) = 1.
	\end{align}
	Then 
		$P_{\xs^0}^{\rho}(\Sys\models \parity) = 1$.
\end{lemma}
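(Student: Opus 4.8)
The plan is to show that, under the policy $\rho$ and from the initial state $\xs^0$, almost every path satisfies $\parity$. Recall that $\parity$ is the conjunction over odd $i$ of the implications $\square\lozenge B_i \implies \bigvee_{j\ineven\incone{i+1}{k}} \square\lozenge B_j$. The key observation is that $\psiR = \lozenge(\square\lnot B_i \vee \bigvee_{j\ineven\incone{i+1}{k}} B_j)$ is a \emph{one-shot} reachability obligation, while $\parity$ is a \emph{recurrence} obligation; the bridge between the two is the Markov property together with the fact that \eqref{eq:reachability decomposition a} is assumed to hold from \emph{every} reachable state, not just $\xs^0$. So first I would fix an odd priority $i$ and define the ``bad'' event $E_i \coloneqq (\Sys\models \square\lozenge B_i) \cap (\Sys\models \bigwedge_{j\ineven\incone{i+1}{k}} \lnot\square\lozenge B_j)$, i.e.\ the set of paths that visit $B_i$ infinitely often but visit every higher even priority only finitely often. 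It suffices to show $P_{\xs^0}^\rho(E_i) = 0$ for each odd $i$, since the complement of $\parity$ is the finite union $\bigcup_{i\inodd} E_i$.

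Next I would argue that on $E_i$, from some time onward no higher even priority $B_j$ ($j\ineven\incone{i+1}{k}$) is ever visited again, and yet $B_i$ is visited infinitely often; so along such a path there are infinitely many times $n$ at which $\xs^n\in B_i$ and after which $\bigvee_{j} B_j$ never holds again, hence after which $\psiR$ can only be satisfied via its $\square\lnot B_i$ disjunct --- which is contradicted by $\xs^n\in B_i$ being visited yet again later. To make this rigorous I would use the following decomposition: for each $n$, let $G_n$ be the event ``$\xs^n\in B_i$''. On $G_n$, by hypothesis \eqref{eq:reachability decomposition a} applied to the finite prefix $(\xs^0,\dots,\xs^n)$ and by the strong Markov property / the shift-invariance of $P^\rho$ under the stationary policy, $P_{\xs^n}^\rho(\Sys\models\psiR) = 1$ conditioned on the prefix; hence $P_{\xs^0}^\rho(\, \Sys\models\psiR\text{ shifted by }n \mid \mathcal F_n\,) = 1$ a.s.\ on $G_n$, where $\mathcal F_n$ is the sigma-algebra generated by $(\xs^0,\dots,\xs^n)$. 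Summing/intersecting over the infinitely many such $n$ on the event $E_i$ (where $G_n$ happens infinitely often), a.s.\ the shifted $\psiR$ holds at each of those times. But the shifted $\psiR$ at time $n$ says: from $\xs^n$ onward, eventually either $B_i$ is never seen again or some $B_j$ ($j>i$ even) is seen. On $E_i$, $B_i$ \emph{is} seen again (infinitely often) and $B_j$ is \emph{not} seen again (for large $n$), so both disjuncts fail for all large $n$ --- contradiction. Therefore $P_{\xs^0}^\rho(E_i)=0$.

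Finally I would assemble: $P_{\xs^0}^\rho(\Sys\models\parity) = 1 - P_{\xs^0}^\rho(\bigcup_{i\inodd\incone{1}{k}} E_i) \ge 1 - \sum_{i\inodd\incone{1}{k}} P_{\xs^0}^\rho(E_i) = 1$, using the union bound over the finitely many odd priorities and $P_{\xs^0}^\rho(E_i)=0$ for each. The main obstacle I anticipate is the careful measure-theoretic bookkeeping in the middle step: making precise the ``conditioning on $\mathcal F_n$ and shift by $n$'' argument so that it applies simultaneously to the (random, infinite) set of times at which $\xs^n\in B_i$ on the event $E_i$. This is standard once one invokes the Markov property of the path measure $P^\rho$ induced by the stationary policy $\rho$ (so that the conditional law of the future given $\mathcal F_n$ is $P_{\xs^n}^\rho$), and the countable subadditivity of the null sets over $n\in\nat$, but it needs to be stated with enough care that the "for every finite path" strength of the hypothesis \eqref{eq:reachability decomposition a} is actually used --- it is exactly what lets us reapply the reachability guarantee at every one of the infinitely many return times to $B_i$.
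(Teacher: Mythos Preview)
Your proposal is correct and follows essentially the same approach as the paper: define the bad event for each odd $i$ (what you call $E_i$ is the paper's $\overline{E_i}$), show it has probability zero via the Markov property and hypothesis \eqref{eq:reachability decomposition a}, then conclude by the union bound over the finitely many odd priorities. The only cosmetic difference is that the paper picks a single explicit stopping time $\tau'$ (the first return to $B_i$ after the last visit to any higher even $B_j$) and conditions on it directly, whereas you speak of applying the hypothesis at all return times to $B_i$; either way the contradiction is reached for the same reason.
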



%
\begin{proof}[Proof of Lem.~\ref{lem:sufficient condition for parity in CMP}]
 Define for any arbitrary $i\inodd \incone{1}{k}$ the event $E_i \coloneqq (\Sys\models\psi_i)$ with the specification $\psi_i \coloneqq \left(\square\lozenge B_{i} \!\implies\! \bigvee_{j\ineven \incone{i+1}{k}} \square\lozenge B_{j}\right)$.
	We want to show that
		$P_{\xs^0}^{\rho}(\Sys\models \parity)
		\textstyle= P_{\xs^0}^{\rho}\left(\bigcap_{i} E_i\right) = 1$
	where $i\inodd \incone{1}{k}$.
	We prove this by showing $P_{\xs^0}^{\rho}(\overline{E_i}) = 0$ for every $i\inodd \incone{1}{k}$. 
	Once we show this, the result follows according to the standard inequalities:
	\begin{align*}
	& P_{\xs^0}^{\rho}\textstyle \left(\bigcap_{i} E_i\right) =
	1- P_{\xs^0}^{\rho}\left(\bigcup_{i} \overline{E_i}\right)
	 \ge 1- \sum_{i}P_{\xs^0}^{\rho}\left(\overline{E_i}\right) = 1\\
	&\hspace{-0.5cm}\text{where}~P_{\xs^0}^{\rho}(\overline{E_i}) = P_{\xs^0}^{\rho}\left((\Sys\models \square\lozenge B_i)\cap (\Sys\models \wedge_{j} \lozenge\square \lnot B_j)\right),
	\end{align*}
	with $i\inodd \incone{1}{k}$ and $j\ineven\inctwo{i+1}{k}$.
	Define the random variable $\tau$ to be the largest time instance when the trajectory visits one of the sets $B_j$. 
	Also define $\tau'>\tau$ to be the first time instance after $\tau$ when the trajectory visits $B_i$ again. 
	Note that for any trajectory satisfying $\square\lozenge B_i$ and $ \wedge_{j} \lozenge\square \lnot B_j$, both $\tau$ and $\tau'$ are well-defined and bounded. According to the assumption \eqref{eq:reachability decomposition a}, we have
	\begin{align*}
	P_{\xs^0}^{\rho}(\overline{E_i}\,|\, & \tau' = n, s^0,s^1,\cdots,s^n) \\
	& = P_{\xs^n}^{\rho}\left((\Sys\models \square\lozenge B_i)\cap (\Sys\models \wedge_{j}\lozenge\square \lnot B_j)\right) = 0.
	\end{align*}
	By taking the expectation with respect to the condition $(\tau',s^0,\cdots,s^n)$, we conclude that 
 	\begin{align*}
	 P_{\xs^0}^{\rho}(\overline{E_i}) \!=\! \mathbb E_{\xs^0}^{\rho} \left[ P_{\xs^0}^{\rho}(\overline{E_i}\,|\, \tau' = n, s^0,s^1,\cdots,s^n)\right] \!=\! \mathbb E_{\xs^0}^{\rho}[0] \!=\! 0,
 	\end{align*}
i.e., $\overline{E_i}$ has a zero probability.
\end{proof}

\smallskip
\noindent{\textbf{Step 2: Decomposition of $\paritya$.} 
We present a \emph{necessary} condition for satisfaction of $\paritya$ in $\game$ if $\psihR$ holds.

\begin{lemma}\label{lem:decompose proof of liveness}
	Let $\tup{\game,\widehat{\mathcal{P}}}$ be a $2\sfrac{1}{2}$-player parity game, and $v^0$ be a given vertex of $\game$.
	Suppose $\pi_0^*\in \Pi_0^\DM$ is a $\p{0}$ strategy such that $\inf_{\pi_1\in \Pi_1} P_{v^0}^{\pi_0^*,\pi_1}(\game\models \paritya) =1$.
	Then given every finite path $v^0\ldots v^n\in V^*$ such that there exists a $\p{1}$ strategy $\pi_1\in \Pi_1$ with $ P_{v^0}^{\pi_0^*,\pi_1}(\game\models v^0\ldots v^n) > 0$, the following holds	for every $i\inodd\inctwo{1}{k}$:
	\begin{align}
	\label{eq:reachability decomposition}
		v^n\in \Bh_i \implies \inf_{\pi_1\in \Pi_1}P_{v^n}^{\pi_0^*,\pi_1} \left( \game\models\lozenge 
\psihR
		\right) = 1.
	\end{align}
\end{lemma}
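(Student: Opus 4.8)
The plan is to reduce \eqref{eq:reachability decomposition} to two facts about the abstract game: \textbf{(A)} \emph{semantically}, $(\game\models\paritya)\subseteq(\game\models\lozenge\psihR)$ for each odd $i$; and \textbf{(B)} almost-sure winning of $\paritya$ under the fixed memoryless strategy $\pi_0^*$ propagates to every vertex reached by a positive-probability finite play. Together these give the claim: by \textbf{(B)}, $\inf_{\pi_1\in\Pi_1}P_{v^n}^{\pi_0^*,\pi_1}(\game\models\paritya)=1$, and by \textbf{(A)} and monotonicity of the infimum under event inclusion this forces $\inf_{\pi_1\in\Pi_1}P_{v^n}^{\pi_0^*,\pi_1}(\game\models\lozenge\psihR)=1$.

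For \textbf{(A)}, I would just unfold the $\mathrm{LTL}$ semantics. Since $\psihR=\lozenge(\psihS\vee\bigvee_{j\ineven\inctwo{i+1}{k}}\Bh_j)$ already begins with $\lozenge$, we have $\lozenge\psihR\equiv\psihR$, and a run fails $\psihR$ precisely when $\Bh_i$ is visited infinitely often while no $\Bh_j$ with $j\ineven\inctwo{i+1}{k}$ is ever visited. In that case the conjunct of $\paritya$ indexed by $i$ is violated — its premise $\square\lozenge\Bh_i$ holds and its conclusion $\bigvee_{j\ineven\inctwo{i+1}{k}}\square\lozenge\Bh_j$ fails — so the run also fails $\paritya$; contrapositively $(\game\models\paritya)\subseteq(\game\models\lozenge\psihR)$. (Recall $\Bh_i\subseteq V_0$, so all these formulas are evaluated consistently on the $\p{0}$-projection of runs, and the relevant events are measurable as already discussed for $\paritya$.)

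For \textbf{(B)}, I would argue by contradiction via a strategy-splicing construction. Suppose $\inf_{\pi_1\in\Pi_1}P_{v^n}^{\pi_0^*,\pi_1}(\game\models\paritya)<1$ for some finite prefix $v^0\ldots v^n$ realizable with positive probability; then there is $\pi_1'\in\Pi_1$ with $P_{v^n}^{\pi_0^*,\pi_1'}(\game\models\lnot\paritya)>0$, and by the hypothesis of the lemma there is $\pi_1\in\Pi_1$ with $P_{v^0}^{\pi_0^*,\pi_1}(\game\models v^0\ldots v^n)>0$. Define $\pi_1''\in\Pi_1$ to coincide with $\pi_1$ on every history that is a strict prefix of (or that deviates from) $v^0\ldots v^n$, and to play the shifted copy of $\pi_1'$ on every history extending $v^0\ldots v^n$. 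On the Markov chain induced by $\pi_0^*,\pi_1''$ the cylinder $\{v^0\ldots v^n\}$ still carries the same positive probability (since $\pi_1''=\pi_1$ there), and — because $\pi_0^*$ is memoryless and $\lnot\paritya$ is a tail (prefix-independent) event — the conditional probability of $\lnot\paritya$ given this cylinder equals $P_{v^n}^{\pi_0^*,\pi_1'}(\game\models\lnot\paritya)$. Hence
\[
P_{v^0}^{\pi_0^*,\pi_1''}(\game\models\lnot\paritya)\ \ge\ P_{v^0}^{\pi_0^*,\pi_1}(\game\models v^0\ldots v^n)\cdot P_{v^n}^{\pi_0^*,\pi_1'}(\game\models\lnot\paritya)\ >\ 0,
\]
contradicting $\inf_{\pi_1\in\Pi_1}P_{v^0}^{\pi_0^*,\pi_1}(\game\models\paritya)=1$. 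This establishes \textbf{(B)} and hence \eqref{eq:reachability decomposition}; note that the hypothesis $v^n\in\Bh_i$ is not actually needed for the conclusion, but it is the only situation in which the lemma is invoked later.

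The delicate point — and the step I would write out in full — is the splicing in \textbf{(B)}: justifying that conditioning $P_{v^0}^{\pi_0^*,\pi_1''}$ on the finite cylinder $\{v^0\ldots v^n\}$ reproduces the measure $P_{v^n}^{\pi_0^*,\pi_1'}$ on the tail $\sigma$-algebra. This is a routine strong-Markov / prefix-decomposition argument, but it is exactly where memorylessness of $\pi_0^*$ and prefix-independence of $\paritya$ are both indispensable; it is the game-theoretic counterpart of the decomposition principle for liveness specifications, instantiated with the liveness language $L=(\game\models\paritya)$.
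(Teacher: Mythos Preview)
Your proposal is correct and follows essentially the same approach as the paper: both argue by contradiction, multiplying the positive prefix probability $p_1$ by the positive suffix failure probability $p_2$ to violate the almost-sure winning hypothesis at $v^0$. Your decomposition into \textbf{(A)} (the semantic inclusion $(\game\models\paritya)\subseteq(\game\models\lozenge\psihR)$) and \textbf{(B)} (propagation of almost-sure winning along positive-probability prefixes) is cleaner than the paper's route, which first rewrites $\paritya$ into the LTL form $\bigwedge_i\square(\Bh_i\Rightarrow\lozenge\psihR)$ and then runs the same $p_1\cdot p_2$ contradiction; in particular, you make explicit the strategy-splicing construction and the role of memorylessness of $\pi_0^*$ and prefix-independence of $\paritya$, which the paper leaves implicit in the single sentence ``This results in satisfaction of the parity specification with a probability of at most $(1-p_1\cdot p_2)<1$.'' Your observation that the hypothesis $v^n\in\Bh_i$ is not actually needed is also correct.
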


The only new factor in Eq.~\eqref{eq:reachability decomposition} is the presence of the adversarial effect of the $\p{1}$ strategies.

\begin{proof}
	It follows from the definition of the parity specification in \eqref{eq:definition of parity} that a vertex $v^0\in V$ is almost sure winning using the strategy $\pi_0^*$ if the following condition is fulfilled:
	\begin{align}\label{eq:parity alternate 1}\textstyle
		\inf_{\pi_1\in \Pi_1}P_{v^0}^{\pi_0^*,\pi_1} \left(\game\models \bigwedge_{i\inodd \inctwo{1}{k}} \square \left( \Bh_i \implies \lozenge \psiR\right)\right) = 1.
	\end{align}
	From the semantics of LTL, \eqref{eq:parity alternate 1} implies:
	\begin{align}\textstyle
		\inf_{\pi_1\in \Pi_1}P_{v^0}^{\pi_0^*,\pi_1} \left(\game\models  \bigwedge_{i} \forall m\in \mathbb{N}\;.\;(v^m\in \Bh_i) \implies \lozenge \psihR\right)	= 1,
		\label{eq:parity alternate}
	\end{align}
	with $i\inodd \inctwo{1}{k}$. We show that \eqref{eq:parity alternate} implies for every finite run $v^0\ldots v^n\in V^*$, occurring with a positive probability $p_1>0$ for some strategy of $\p{1}$, \eqref{eq:reachability decomposition} holds.
	Suppose, for contradiction's sake, there exists some $i\inodd \inctwo{1}{k}$ such that $v^n\in \Bh_i$ and \eqref{eq:reachability decomposition} does not hold, i.e.,
		 $\inf_{\pi_1\in \Pi_1}P_{v^n}^{\pi_0^*,\pi_1} \left( \game\models\lozenge \psihR\right) < 1$,
	implying existence of some $0 < p_2 \leq 1$ with
		 $\sup_{\pi_1\in \Pi_1}P_{v^n}^{\pi_0^*,\pi_1} \left( \game\not\models\lozenge \psihR\right) = p_2$.
	This results in satisfaction of the parity specification with a probability of \emph{at most} $(1- p_1\cdot p_2) < 1$, contradicting \eqref{eq:parity alternate}.
\end{proof}



\smallskip
\noindent{\textbf{Step 3: Refinement of $\psihS$ to $\psiS$.} 
We show that almost sure safety with respect to a given set $U$ in $\game$ implies the same with respect to the set $Q^{-1}(U)$ in $\Sys$; this will later be used to infer $\psihS\implies \psiS$.

\begin{proposition}\label{prop:sound safety}
	Let $\Sys$ be a CMP and $\game$ be a finite $2\sfrac{1}{2}$-player game graph as defined in \REFdef{def:reduced game}.
	Suppose $U\subseteq V_0$ is a given set of vertices of $\game$, and assume that there is a $\p{0}$ vertex $v\in U$ for which there is a strategy $\pi_0\in \Pi_0^{\mathrm{DM}}$ of $\p{0}$ such that $\inf_{\pi_1\in \Pi_1} P_v^{\pi_0,\pi_1}(\game\models \square U) = 1$.
	Then the refinement $\Cont\in\Policy$ of $\pi_0$ ensures that for every state $s\in v$, $P_s^\Cont(\Sys\models \square Q^{-1}(U))=1$.
\end{proposition}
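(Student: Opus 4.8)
The plan is to exhibit a concrete absorbing subset of $Q^{-1}(U)$ for the closed loop under $\Cont$ and conclude by a routine invariance argument. For $\xh\in\Xh=V_0$ write $\pi_0(\xh)=(\xh,u_\xh)\in V_1$ with $u_\xh\in\Us$, so that by definition of the refinement $\Cont(s)=u_\xh$ whenever $s\in\xh$. I would let $U'\subseteq V_0$ be the set of $V_0$-vertices reachable from $v$ along some finite run of $\game$ that is compatible with $\pi_0$ and some $\pi_1\in\Pi_1$ and has positive probability; note $v\in U'$ since the length-one run $(v)$ has probability $1$. The whole argument then rests on two closure properties of $U'$, both consequences of $\inf_{\pi_1\in\Pi_1}P_v^{\pi_0,\pi_1}(\game\models\square U)=1$: \textbf{(i)} $U'\subseteq U$; and \textbf{(ii)} $\Fo(\xh,u_\xh)\subseteq U'$ for every $\xh\in U'$. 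For \textbf{(i)}: if some $\xh\in U'\setminus U$, fix a positive-probability $\pi_0$-compatible finite run from $v$ ending at $\xh$ and a $\p{1}$ strategy $\pi_1$ realising its $\p{1}$ moves (playing arbitrarily afterwards); since $\square U$ is read on $\mathrm{Proj}_{V_0}$ and $\xh\in V_0\setminus U$, the cylinder of this run already forces $P_v^{\pi_0,\pi_1}(\game\models\square U)<1$, contradicting the hypothesis. For \textbf{(ii)}: given $\xh\in U'$ and $\xh'\in\Fo(\xh,u_\xh)$, observe from \REFdef{def:reduced game} that $\xh'$ is an element of at least one random vertex $v_r\in V_r(\xh,u_\xh)$ (namely $\Fu(\xh,u_\xh)$ if $\xh'\in\Fu(\xh,u_\xh)$, and $\Fu(\xh,u_\xh)\cup\{\xh'\}$ otherwise); prolonging a positive-probability run that reaches $\xh$ by the moves $\xh\to(\xh,u_\xh)\to v_r\to\xh'$, where the first is forced by $\pi_0$, the second can be taken by $\p{1}$, and the last has probability $1/|v_r|>0$, produces a positive-probability $\pi_0$-compatible run from $v$ to $\xh'$, so $\xh'\in U'$.

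Next I would transfer this to the CMP. Set $\overline{U'}\coloneqq Q^{-1}(U')$. Take any $t\in\overline{U'}$ and let $\xh\coloneqq Q(t)\in U'$, so $\Cont(t)=u_\xh$. By \eqref{eq:def fo}, every abstract state $\xh'$ with $\Ker(\xh'\mid t,\Cont(t))>0$ lies in $\Fo(\xh,u_\xh)$, hence in $U'$ by \textbf{(ii)}. Since $\Xh$ partitions $\Xs$ and is finite, summing the kernel over the abstract states carrying positive mass gives $\Ker(\overline{U'}\mid t,\Cont(t))=1$; thus $\overline{U'}$ is an absorbing set for the closed loop under $\Cont$. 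As in the proof of \REFthm{thm:absorbing}, the absorbing property yields $P_s^{\Cont}(\Sys\models\square\overline{U'})=1$ for every $s\in\overline{U'}$ (equivalently, $P_s^{\Cont}(\Sys\models\lozenge\lnot\overline{U'})\le\sum_{n\in\nat}P_s^{\Cont}(s^n\notin\overline{U'})=0$, each summand vanishing by induction on $n$). Finally, $v\in U'$ gives $Q^{-1}(v)\subseteq\overline{U'}$, and \textbf{(i)} gives $\overline{U'}\subseteq Q^{-1}(U)$, so for every $s\in v$ we obtain $P_s^{\Cont}(\Sys\models\square Q^{-1}(U))\ge P_s^{\Cont}(\Sys\models\square\overline{U'})=1$, which is the claim.

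The hard part will be establishing closure facts \textbf{(i)}--\textbf{(ii)} rigorously: one must argue that $\inf_{\pi_1}P_v^{\pi_0,\pi_1}(\game\models\square U)=1$ genuinely forbids \emph{every} positive-probability finite excursion out of $U$ (using that $\lozenge\lnot U$ is a countable union of finite-run cylinder events and that $\square U$ is interpreted through $\mathrm{Proj}_{V_0}$), and that \emph{every} element of $\Fo(\xh,u_\xh)$ — not only those in $\Fu(\xh,u_\xh)$ — is reached with positive probability because it sits inside some random vertex $v_r\in V_r(\xh,u_\xh)$ of \REFdef{def:reduced game}. Everything else is the standard absorbing-set/invariance argument together with straightforward bookkeeping through $Q$ and $Q^{-1}$.
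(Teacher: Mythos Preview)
Your proof is correct and, at its core, is the same argument as the paper's: both identify a set of $V_0$-vertices containing $v$ that is contained in $U$ and closed under $\Fo(\cdot,\pi_0(\cdot))$, and then use the over-approximation property \eqref{eq:def fo} of $\Fo$ to conclude that $Q^{-1}$ of this set is absorbing for the closed loop under $\Cont$. The difference is purely one of presentation. The paper keeps the proof short by invoking two external facts: that almost sure safety coincides with \emph{sure} safety in $\twohalf$-player games \citep{dAH00}, which immediately yields a controlled invariant $W\subseteq U$ with $v\in W$, and that over-approximation abstractions transfer safety to the concrete system by ``standard ABCD arguments'' \citep{ReissigWeberRungger_2017_FRR}. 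You instead unfold both appeals into a direct, self-contained argument: your $U'$ (the $V_0$-vertices reachable from $v$ with positive probability under $\pi_0$) \emph{is} the paper's controlled invariant $W$, and your closure facts \textbf{(i)}--\textbf{(ii)} are exactly what the cited results would give. Your version is longer but elementary and does not require the reader to consult outside sources; the paper's version is terse but leans on the literature.
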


\begin{proof}
	It is known that for safety properties, almost sure satisfaction coincides with sure satisfaction, i.e., $\inf_{\pi_1\in \Pi_1} P_v^{\pi_0,\pi_1}(\game\models \square U) = 1$ if and only if for every strategy $\pi_1\in \Pi_1$, every infinite run of $\game$ stays inside $U$ at all time \citep{dAH00}.
	In other words, there must be a controlled invariant set $W$ inside $U$ for the strategy $\pi_0$, and $v\in W$.
	This controlled invariant set can be obtained by considering the $2$-player game, obtained from $\game$ by removing all the random vertices, and redirecting the outgoing transitions of a given $\p{1}$ vertex $v'\in V_1$ to the $\p{0}$ vertices within the set $\Fo(v',\pi_0(v'))\subseteq V_0$.
	Since $\Fo(v',\pi_0(v'))$ overapproximates the set of all the continuous states reachable from $v'$ using the input $\pi_0(v')$, hence if $\p{0}$ can fulfill $\square U$ using the strategy $\pi_0(v')$, then $\rho$ can fulfill $\square Q^{-1}(U)$ from every state $\xs\in v'$ in $\Sys$.
	(This follows from the standard arguments in abstraction-based control using over-approximation based abstractions \citep{ReissigWeberRungger_2017_FRR}.)
\end{proof}

\smallskip
\noindent{\textbf{Step 4: Refinement of $\psihR$ to $\psiR$.} 
We show that almost sure reachability with respect to a given set $U$ in $\game$ implies the same with respect to the set $Q^{-1}(U)$ in $\Sys$; this will be used to infer $\psihR\implies \psiR$.

Let $\Sys$ be a CMP.
Let us consider a reachability specification $\lozenge U$, for a set $U\subseteq V_0$, in the game $\game$ defined in \REFdef{def:reduced game}.
Suppose $\pi_0\in \Pi_0^\DM$ is some strategy of $\p{0}$.

We introduce a ranking function $r\colon V_0\to \mathbb{N}\cup \set{\infty}$ as a certificate for almost sure satisfaction of the specification $\lozenge U$.
The ranking function $r$ is defined inductively as follows:
\begin{align}\label{eq:ranking function}
	r(v) \!=\! \begin{cases}
				0	&	\!\!\!v\in U,\\
				\infty	&  \!\!\!\inf_{\pi_1\in \Pi_1} P_v^{\pi_0,\pi_1}(\game\models \lozenge U) < 1,\\
				i+1	&	\!\!\!\min \set{n\in \mathbb{N} | \inf_{\pi_1\in \Pi_1} \!P_v^{\pi_0,\pi_1}\!(\game\models \bigcirc r^{-1}(n))\!>\!0}\\
				&\quad  =i \wedge \inf_{\pi_1\in \Pi_1} P_v^{\pi_0,\pi_1}(\game\not\models \bigcirc r^{-1}(\infty))=1.	
			\end{cases}
\end{align}

Note that every vertex $v\in V$ gets a rank:
If $r(v)\neq \infty$, then $\inf_{\pi_1\in \Pi_1} P_v^{\pi_0,\pi_1}(\game\models \lozenge U) =1$ by definition of $r$.
In this case, there must exist some path to $U$, i.e., $\inf_{\pi_1\in \Pi_1} P_v^{\pi_0,\pi_1}(\game\not\models \bigcirc r^{-1}(\infty))=1$ must be true, and moreover $ \inf_{\pi_1\in \Pi_1} P_v^{\pi_0,\pi_1}(\game\models \bigcirc r^{-1}(n))>0$ will be true for some $n$.
Thus, $r(v)=n+1$.

From the ranking function $r(\cdot)$ defined in \eqref{eq:ranking function}, it is clear that $\inf_{\pi_1\in \Pi_1} P_v^{\pi_0,\pi_1}(\game\models \lozenge U) = 1$ implies $r(v)\neq \infty$.
We first identify some local structural properties of the abstract transition functions $\Fo$ and $\Fu$ evaluated on some abstract states with finite ranking.

%
%

\begin{lemma}\label{lem:conditions on transitions}
	Suppose $\pi_0\in \Pi_0^\DM$ is some strategy of $\p{0}$.
	For every $v\in V_0$ with $r(v) = i \neq \infty$, $i > 0$, both  $\Fo(v,\pi_0(v))\cap r^{-1}(\infty) = \emptyset$ and either of the following holds:
	\begin{enumerate}
		\item $\Fu(v,\pi_0(v))\cap r^{-1}(i-1)\neq \emptyset$, or \label{item:intersection}
		\item $\Fu(v,\pi_0(v)) = \emptyset$ and $\Fo(v,\pi_0(v))\subseteq r^{-1}(i-1)$. \label{item:inclusion}
	\end{enumerate}
\end{lemma}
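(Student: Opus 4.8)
The plan is to unwind the recursive definition of the rank in \eqref{eq:ranking function} for a vertex $v\in V_0$ with $r(v)=i\neq\infty$ and $i>0$, and to translate the one-step probabilistic conditions appearing there into the claimed combinatorial statements about $\Fo(v,\pi_0(v))$ and $\Fu(v,\pi_0(v))$. The first thing I would establish is a precise description of the possible ``next $\p{0}$-vertex'' of a run: by \REFdef{def:reduced game}, from $v$ the play moves under $\pi_0$ to the $V_1$-vertex $\pi_0(v)=(v,u)$; then $\p{1}$ picks some $v_r\in V_r((v,u))$, i.e.\ a set with $\Fu(v,u)\subseteq v_r\subseteq \Fo(v,u)$ and $1\le|v_r|\le|\Fu(v,u)|+1$; and finally the random player moves uniformly to some element of $v_r$. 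For a target set $S\subseteq V_0$ and a $\p{1}$ strategy that commits to $v_r$, the probability that the next $\p{0}$-vertex lies in $S$ is $|v_r\cap S|/|v_r|$; since only $\pi_1$'s value at the history $v\,\pi_0(v)$ matters for a $\bigcirc$-event and the resulting probability is affine in that distribution, we get $\inf_{\pi_1}P_v^{\pi_0,\pi_1}(\game\models\bigcirc S)=\min_{v_r\in V_r((v,u))}|v_r\cap S|/|v_r|$ and correspondingly a $\max$ for the supremum. Two elementary facts about $V_r((v,u))$ complete the preparation: $\bigcup_{v_r\in V_r((v,u))}v_r=\Fo(v,u)$ in every case (using $\Fu(v,u)\subseteq\Fo(v,u)$); moreover $\Fu(v,u)$ itself is an admissible choice of $v_r$ when $\Fu(v,u)\neq\emptyset$, whereas when $\Fu(v,u)=\emptyset$ the admissible $v_r$ are exactly the singletons $\{\xh'\}$ with $\xh'\in\Fo(v,u)$.

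With this in hand, I would prove the two assertions. For $\Fo(v,\pi_0(v))\cap r^{-1}(\infty)=\emptyset$: the clause defining $r(v)=i$ includes $\inf_{\pi_1}P_v^{\pi_0,\pi_1}(\game\not\models\bigcirc r^{-1}(\infty))=1$, which by the computation above is equivalent to $v_r\cap r^{-1}(\infty)=\emptyset$ for every admissible $v_r$, hence to $\bigl(\bigcup_{v_r}v_r\bigr)\cap r^{-1}(\infty)=\Fo(v,\pi_0(v))\cap r^{-1}(\infty)=\emptyset$. For the disjunction: the same clause forces $\inf_{\pi_1}P_v^{\pi_0,\pi_1}(\game\models\bigcirc r^{-1}(i-1))>0$ (the ``$\min$'' there is realized at $n=i-1$), equivalently $v_r\cap r^{-1}(i-1)\neq\emptyset$ for every admissible $v_r$. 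I then split on $\Fu(v,\pi_0(v))$: if $\Fu(v,\pi_0(v))\neq\emptyset$, then $v_r=\Fu(v,\pi_0(v))$ is admissible, so $\Fu(v,\pi_0(v))\cap r^{-1}(i-1)\neq\emptyset$, which is case~\ref{item:intersection}; if $\Fu(v,\pi_0(v))=\emptyset$, then the admissible $v_r$ are the singletons of $\Fo(v,\pi_0(v))$, and ``every such singleton meets $r^{-1}(i-1)$'' says exactly $\Fo(v,\pi_0(v))\subseteq r^{-1}(i-1)$, which is case~\ref{item:inclusion}.

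I expect the obstacle to be careful bookkeeping rather than conceptual difficulty. The two points that need attention are: fixing the reading of the modality $\bigcirc$ in \eqref{eq:ranking function} as ``the next $\p{0}$-vertex of the projected run'' (three edges ahead in $\game$, through a $V_1$- and a $V_r$-vertex) and keeping that consistent; and handling the degenerate shapes of $V_r((v,u))$ --- in particular $\Fo(v,\pi_0(v))=\Fu(v,\pi_0(v))$, where a single $v_r$ is admissible, and $|\Fo(v,\pi_0(v))|=1$ --- so that the equivalences ``$\inf_{\pi_1}(\cdot)>0$ iff every admissible $v_r$ meets the target'' and ``$\inf_{\pi_1}(\cdot)=1$ iff no admissible $v_r$ meets $r^{-1}(\infty)$'' hold without exception. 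A minor point, already noted above, is the reduction of $\inf_{\pi_1}$ over all (possibly randomized, history-dependent) $\p{1}$ strategies to $\min$ over the finitely many admissible $v_r$; this is immediate because only the first move of $\p{1}$ is relevant to a one-$V_0$-step event.
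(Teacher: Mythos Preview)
Your proposal is correct and follows essentially the same approach as the paper: both arguments unwind the ranking condition at $v$ and exploit the structure of $V_r((v,\pi_0(v)))$ to pin down the required constraints on $\Fu$ and $\Fo$. The only difference is presentational: you first set up the explicit characterization $\inf_{\pi_1}P_v^{\pi_0,\pi_1}(\game\models\bigcirc S)=\min_{v_r}|v_r\cap S|/|v_r|$ and then case-split directly on whether $\Fu(v,\pi_0(v))$ is empty, whereas the paper argues each implication by contradiction, exhibiting for each failure mode a concrete $\p{1}$ choice (namely $v_r=\Fu(v,\pi_0(v))$ or $v_r=\Fu(v,\pi_0(v))\cup\{v'\}$) that makes the one-step probability of hitting $r^{-1}(i-1)$ equal to zero.
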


\begin{proof}
	Firstly, $\Fo(v,\pi_0(v))\cap r^{-1}(\infty) = \emptyset$ should always hold as otherwise $\p{1}$ would have a strategy to reach a state in $r^{-1}(\infty)$ with nonzero probability in the next step.
	
	Suppose \eqref{item:inclusion} does not hold, implying either (a) $\Fu(v,\pi_0(v)) \neq \emptyset$, or (b) the existence of a vertex $v'\in \Fo(v,\pi_0(v))$ with $r(v')\neq i-1$.
	Then $\Fu(v,\pi_0(v))\cap r^{-1}(i-1)\neq \emptyset$ must hold, as otherwise, for case (a) and (b) $\p{1}$ would have strategies $\pi_1$ with $\pi_1(v,\pi_0(v))=(\Fu(v,\pi_0(v)))$ and $\pi_1(v,\pi_0(v))=(\Fu(v,\pi_0(v))\cup \set{v'})$ respectively, such that $P_v^{\pi_0,\pi_1}(\game\models \bigcirc r^{-1}(i-1)) = 0$.
	
	On the other hand, suppose \eqref{item:intersection} does not hold.
	Then $\Fu(v,\pi_0(v)) = \emptyset$  must be true, as otherwise $\p{1}$ would have a strategy $\pi_1$ with $\pi_1(v,\pi_0(v))=(\Fu(v,\pi_0(v)))$ such that $P_v^{\pi_0,\pi_1}(\game\models \bigcirc r^{-1}(i-1)) = 0$.
	Moreover, $\Fo(v,\pi_0(v))\subseteq r^{-1}(i-1)$ must also be true, as otherwise there would exist a vertex $v'\in \Fo(v,\pi_0(v))$ with $r(v')\neq i-1$, and $\p{1}$ would have a strategy $\pi_1$ with $\pi_1(v,\pi_0(v))=(\Fu(v,\pi_0(v))\cup \set{v'}) = (\set{v'})$ such that $P_v^{\pi_0,\pi_1}(\game\models \bigcirc r^{-1}(i-1)) = 0$.
\end{proof}

The following lemma establishes soundness of the reduction with respect to reachability specifications.

\begin{proposition}\label{prop:sound reach}
	Let $\Sys$ be a CMP and $\game$ be a finite $2\sfrac{1}{2}$-player game graph as defined in \REFdef{def:reduced game}.
	Suppose there is a $\p{0}$ vertex $v\in V_0$ in $\game$ and a set of vertices $U\subseteq V_0$ for which there is a strategy $\pi_0\in \Pi_0^{\mathrm{DM}}$ of $\p{0}$ such that $\inf_{\pi_1\in \Pi_1} P_v^{\pi_0,\pi_1}(\game\models \lozenge U) = 1$.
	Then the refinement $\Cont\in \Policy$ of $\pi_0$ ensures that for every state $s\in v$, $P_s^\Cont(\Sys\models \lozenge Q^{-1}(U))=1$.
\end{proposition}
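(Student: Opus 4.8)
The plan is to transport the abstract ranking certificate $r$ of \eqref{eq:ranking function} (defined for the fixed strategy $\pi_0\in\Pi_0^{\mathrm{DM}}$) to the concrete dynamics of $\Sys$ under the refined policy $\Cont$, and to show that $r(Q(\cdot))$ is a finite progress measure towards $Q^{-1}(U)$. First, the hypothesis $\inf_{\pi_1\in\Pi_1}P_v^{\pi_0,\pi_1}(\game\models\lozenge U)=1$ gives $r(v)\neq\infty$ by the observation following \eqref{eq:ranking function}; and since $V_0=\Xh$ is finite, $N\coloneqq\max\set{r(\xh)\mid\xh\in V_0,\,r(\xh)\neq\infty}$ is a finite number. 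It therefore suffices to prove the following claim and apply it to each $s\in v$ (for which $Q(s)=v$): \emph{for every $t\in\Xs$ with $r(Q(t))\neq\infty$ we have $P_t^{\Cont}(\Sys\models\lozenge Q^{-1}(U))=1$.}

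\emph{One-step behaviour under $\Cont$.} Fix $t\in\Xs$ with $0<r(Q(t))=i\neq\infty$ and put $u\coloneqq\Cont(t)$, so that $\pi_0(Q(t))=(Q(t),u)$. By \eqref{eq:def fo}, any abstract state $\xh'\notin\Fo(Q(t),u)$ has $\Ker(\xh'\mid t,u)=0$; summing over the finite partition $\Xh$ shows that from $(t,u)$ the next abstract state lies in $\Fo(Q(t),u)$ with probability $1$, and since $\Fo(Q(t),u)\cap r^{-1}(\infty)=\emptyset$ by Lemma~\ref{lem:conditions on transitions}, the next abstract state again has finite rank with probability $1$. Moreover, by the dichotomy in Lemma~\ref{lem:conditions on transitions}: in case~\ref{item:intersection} there is $\xh^{*}\in\Fu(Q(t),u)\cap r^{-1}(i-1)$, hence by \eqref{eq:def fu} there is $\varepsilon_{Q(t)}>0$ with $\Ker(\xh^{*}\mid t,u)\geq\varepsilon_{Q(t)}$, so the next state lands in $Q^{-1}(r^{-1}(i-1))$ with probability at least $\varepsilon_{Q(t)}$; in case~\ref{item:inclusion}, $\Fo(Q(t),u)\subseteq r^{-1}(i-1)$, so the next state lands in $Q^{-1}(r^{-1}(i-1))$ with probability $1$. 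Letting $\varepsilon_0\in(0,1]$ be the minimum, over the finitely many abstract states of positive finite rank, of these one-step probabilities, we conclude: from every $t$ with $0<r(Q(t))=i<\infty$, the next concrete state has abstract rank exactly $i-1$ with probability at least $\varepsilon_0$, and has finite abstract rank with probability $1$.

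\emph{From a bounded-horizon bound to almost-sure reachability.} Since $\Cont$ is stationary, $(X^{k})_{k}$ is a time-homogeneous Markov chain under $P_t^{\Cont}$. By induction on $j$, using the one-step properties above and the Markov property at time $1$, one gets that for every $t$ with $r(Q(t))=j\neq\infty$, $P_t^{\Cont}\bigl(\exists k\leq j\,.\,X^{k}\in Q^{-1}(U)\bigr)\geq\varepsilon_0^{\,j}\geq\varepsilon_0^{\,N}$; the base case $j=0$ holds because $r^{-1}(0)=U$, so $t\in Q^{-1}(U)$. Finally, along any path that has not yet visited $Q^{-1}(U)$ the current abstract rank is positive and, by the one-step property and induction on the time index, stays finite; hence, conditioning on $X^{mN}$ and invoking the bounded-horizon bound via the Markov property, $P_t^{\Cont}\bigl(X^{k}\notin Q^{-1}(U)\text{ for all }k\leq mN\bigr)\leq(1-\varepsilon_0^{\,N})^{m}\to 0$ as $m\to\infty$, so $P_t^{\Cont}(\Sys\models\lozenge Q^{-1}(U))=1$. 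This proves the claim, and the proposition follows.

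\emph{Main obstacle.} The delicate point is case~\ref{item:intersection} of Lemma~\ref{lem:conditions on transitions}: $r(Q(\cdot))$ is \emph{not} a supermartingale under $\Cont$, because with the residual probability $1-\varepsilon_{Q(t)}$ the successor may fall in $\Fo(Q(t),u)\setminus\Fu(Q(t),u)$ and carry an arbitrary (possibly larger) finite rank. The way around this is exactly the two-step argument above: rather than making the rank decrease monotonically, one bundles a full descending chain of $N$ ``good'' one-step transitions --- each of uniform probability at least $\varepsilon_0$ --- into a single block of length $N$, using the fact that the rank never becomes $\infty$ before $U$ is hit. The uniform lower bound $\varepsilon$ baked into the definition \eqref{eq:def fu} of $\Fu$ is precisely what makes $\varepsilon_0>0$; without it one would be exposed to the phenomenon, discussed right after the definition of $\Fu$, where the escape probability from an abstract state can be strictly below $1$.
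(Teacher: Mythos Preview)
Your proof is correct and shares the same scaffolding as the paper's: both arguments use the ranking function $r$ of \eqref{eq:ranking function}, invoke \REFlem{lem:conditions on transitions} to guarantee that (i) finite rank is preserved under one step of $\Cont$ and (ii) from rank $i>0$ one hits rank $i-1$ with probability at least some uniform $\varepsilon>0$, and then turn these local facts into almost-sure reachability of $Q^{-1}(U)$.

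The organization of the final step, however, is genuinely different. The paper proceeds by \emph{induction on the largest finite rank}: for max rank $2$ it splits the event $\square Q^{-1}(r^{-1}(1)\cup r^{-1}(2))$ into $\lozenge\square Q^{-1}(r^{-1}(2))$ and $\square\lozenge Q^{-1}(r^{-1}(1))$ and kills each piece separately via the $\varepsilon$-bound, then peels off one rank level at a time in the inductive step. Your argument instead bundles $N=\max\{r(\xh):r(\xh)<\infty\}$ consecutive ``good'' one-step descents into a single block, obtains $P_t^{\Cont}(\exists k\le N\,.\,X^k\in Q^{-1}(U))\ge\varepsilon_0^{\,N}$ uniformly over all finite-rank starts, and concludes by the standard geometric tail bound $(1-\varepsilon_0^{\,N})^m\to 0$. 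Your route avoids the $\lozenge\square$/$\square\lozenge$ decomposition and the outer induction, yields an explicit convergence rate, and is closer to the textbook ``uniform positive chance of success in bounded time $\Rightarrow$ almost-sure success'' lemma for Markov chains; the paper's level-by-level peeling, on the other hand, mirrors more closely the structure of the fixpoint iterations and would adapt more readily if one wanted to localize the bound to individual rank strata.
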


\begin{proof}
	It follows from the definition of the ranking function in \eqref{eq:ranking function} that the set of almost sure winning vertices for the specification $\lozenge U$ is given by all the vertices with finite rank.
	We show that for every vertex $v$ with a finite rank, the refinement $\Cont \in\Policy$ of $\pi_0$ ensures that from every state $\xs\in v$, $P_\xs^\Cont(\Sys\models \lozenge Q^{-1}(U))=1$.	
	
	First, trajectories starting from any state $\xs\in v$ with $r(v)\neq \infty$ never go to the region $Q^{-1}(r^{-1}(\infty))$. 
	This follows from the identity $\Fo(v,\pi_0(v))\cap r^{-1}(\infty) = \emptyset$ in \REFlem{lem:conditions on transitions} and because $\Fo(v,\pi_0(v))$ is an overapproximation of the one step reachable set from the states within vertex $v$.
	Hence, every infinite trajectory of $\Sys$ starting from $\xs$ will visit the states in $\Xs\setminus Q^{-1}(r^{-1}(\infty))$ infinitely often.
	
	The rest of the proof shows that if a trajectory visits the states $\Xs\setminus Q^{-1}(r^{-1}(\infty)\cup r^{-1}(0))$ infinitely often, then the trajectory will almost surely satisfy $\lozenge Q^{-1}(r^{-1}(0)) = \lozenge Q^{-1}(U)$.
	The is by induction over the largest rank assigned by $r$.
	For the base case, let the largest rank assigned by $r$ be $2$.
	We show that every state $\xs\in \Xs$ starting from inside a vertex $v$ with $r(v)=1$ or $r(v)=2$ will almost surely reach $Q^{-1}(U)$, i.e., $P_s^\Cont(\mathfrak{S} \models \square Q^{-1}(r^{-1}(1)\cup r^{-1}(2)))  = 0$. 
	Note that the events $\set{\lozenge\square Q^{-1}(r^{-1}(2))}$ and $\set{\square\lozenge Q^{-1}(r^{-1}(1))}$ form a partition of the event of $\set{ \square Q^{-1}(r^{-1}(1)\cup r^{-1}(2))}$. 
	Therefore,
	\begin{align*}
	P_s^\Cont &(\mathfrak{S} \models \square Q^{-1}(r^{-1}(1)\cup r^{-1}(2)))\\
	& = 
	P_s^\Cont(\mathfrak{S} \models \square Q^{-1}(r^{-1}(1)\cup r^{-1}(2))\wedge \lozenge\square Q^{-1}(r^{-1}(2)))\\
	&+P_s^\Cont(\mathfrak{S} \models \square Q^{-1}(r^{-1}(1)\cup r^{-1}(2))\wedge \square\lozenge Q^{-1}(r^{-1}(1))).
		\end{align*}
		The first term is upper bounded by $P_s^\Cont(\mathfrak{S} \models \lozenge\square Q^{-1}(r^{-1}(2)))$ which is zero, because $P_s^\Cont(\mathfrak{S} \models \square Q^{-1}(r^{-1}(2))) = \prod_{n=1}^\infty (1-\varepsilon)^n = 0$. 
		The second term is also zero because the event requires the number of transitions from $Q^{-1}(r^{-1}(1))$ to be infinite. 
		To see this, let $\boldsymbol i_n = (i_0,i_1,\ldots,i_n)$ be the first $(n+1)$ time instances that a trajectory visits $Q^{-1}(r^{-1}(1))$. 
		Then,
	\begin{align*}
	& P_s^\Cont(\mathfrak{S} \models \square Q^{-1}(r^{-1}(1)\cup r^{-1}(2))\wedge \square\lozenge Q^{-1}(r^{-1}(1))) =\\ 
	& \!\sum_{\boldsymbol i_n}\! P_s^\Cont(\mathfrak{S} \models \square Q^{-1}(r^{-1}(1)\!\cup\! r^{-1}(2))\!\wedge\! \square\lozenge Q^{-1}(r^{-1}(1))\,|\, \boldsymbol i_n)P_s^\Cont(\boldsymbol i_n)\\
	& \le \sum_{\boldsymbol i_n} (1-\varepsilon)^n P_s^\Cont(\boldsymbol i_n) = (1-\varepsilon)^n.
		\end{align*}
		The last inequality is due to either Cond.~\eqref{item:intersection} or Cond.~\eqref{item:inclusion} of \REFlem{lem:conditions on transitions} applied to the vertices in $v\in r^{-1}(1)$. 
		Note that this inequality holds for any $n$. 
		By taking the limit when $n$ goes to infinity, we have that this second term is also zero.	
	Hence the base case is established.
	
	For the induction hypothesis, assume that the claim holds when the maximum rank assigned by the function $r$ is $i$.
	Then for the induction step, i.e., when the maximum rank is $i+1$, we can follow same argument, as we did for the states with rank $2$ in the base case, to show that every infinite trajectory inside $\Xs\setminus Q^{-1}(r^{-1}(\infty)\cup r^{-1}(0))$ will never get trapped inside $Q^{-1}(r^{-1}(i+1))$, which will mean that the trajectory will visit the states in $\Xs\setminus Q^{-1}(r^{-1}(\infty)\cup r^{-1}(0) \cup r^{-1}(i+1))$ infinitely often.
	Then it follows from the induction hypothesis that the trajectories will reach $Q^{-1}(U)$ almost surely.
\end{proof}

\smallskip
\noindent{\textbf{Step 5: Refinement of runs.} 
We show that every finite path in $\Sys$ can be mapped to a positive probability finite run in $\game$; this will be used establish a bridge from the universal quantification over finite paths in $\Sys$ to the universal quantification over finite runs in $\game$.

\begin{lemma}\label{lem:existence of abstract trajectory}
	Let $\Sys$ be a CMP, $\game$ be the abstract game graph as defined in \REFdef{def:reduced game}, $\pi_0\in \Pi_0^\DM$ be an arbitrary $\p{0}$ strategy in the game $\game$, and $\xs\in \Xs$ be a state of $\Sys$.
	Suppose $\Cont\in\Policy$ is the refinement of $\pi_0$.
	Then for every finite trajectory $\xs^0\ldots \xs^n\in \Xs^*$ of $\Sys$ in the support of the distribution $P_{\xs^0}^\Cont$, there exists a $\p{1}$ strategy $\pi_1\in \Pi_1$ such that $P_{\xh^0}^{\pi_0,\pi_1}(\game\models \xh^0\ldots \xh^n) > 0$, where $\xh^i = Q(\xs^i)$ for every $i\in [0;n]$. 
\end{lemma}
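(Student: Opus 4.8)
The plan is an induction on $n$, where the gadget structure of \REFdef{def:reduced game} does the work in the inductive step. The base case $n=0$ is immediate: $\xh^0 = Q(\xs^0)$ is by construction the starting vertex of every run of $\game$ initialised at $\xh^0$, so $P_{\xh^0}^{\pi_0,\pi_1}(\game\models \xh^0) = 1$ for an arbitrary $\pi_1\in\Pi_1$. For the inductive step, let $\xs^0\ldots\xs^{n+1}$ lie in the support of $P_{\xs^0}^\Cont$. Its prefix $\xs^0\ldots\xs^n$ then also lies in the support (forgetting the last coordinate carries the support of a finite Borel measure into the support of its pushforward), so the induction hypothesis gives a $\pi_1\in\Pi_1$ with $P_{\xh^0}^{\pi_0,\pi_1}(\game\models \xh^0\ldots\xh^n) > 0$, where $\xh^i=Q(\xs^i)$.

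Next I would prove the one-step fact $\xh^{n+1}\in\Fo(\xh^n,u)$, where $u\coloneqq\pi_0(\xh^n)$. Since $\xs^n\in\xh^n$ and $\Cont$ is the refinement of the deterministic memoryless strategy $\pi_0$, the input applied by $\Cont$ at $\xs^n$ equals $\Cont(\xs^n)=\pi_0(\xh^n)=u$. Because $\xs^0\ldots\xs^{n+1}$ lies in the support of $P_{\xs^0}^\Cont$, the single transition from $\xs^n$ to $\xs^{n+1}$ under the input $u$ happens with positive probability; as $\xs^n$ is a state of the cell $\xh^n$ and $\xs^{n+1}\in\xh^{n+1}$, this exhibits a state in $\xh^n$ from which the cell $\xh^{n+1}$ is reached with positive probability under $u$, hence $\xh^{n+1}\in\Fo(\xh^n,u)$ by the defining inclusion \eqref{eq:def fo}.

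It remains to extend $\pi_1$ by one abstract step. Set $v_r\coloneqq\Fu(\xh^n,u)\cup\set{\xh^{n+1}}$; from $\Fu(\xh^n,u)\subseteq\Fo(\xh^n,u)$ and $\xh^{n+1}\in\Fo(\xh^n,u)$ we get $\Fu(\xh^n,u)\subseteq v_r\subseteq\Fo(\xh^n,u)$ and $1\le|v_r|\le|\Fu(\xh^n,u)|+1$, so $v_r$ is a legal $V_r$-successor of the $\p{1}$ vertex $(\xh^n,u)$ in \REFdef{def:reduced game}. Define $\pi_1'\in\Pi_1$ to coincide with $\pi_1$ on every history whose $V_0$-projection is a proper prefix of $\xh^0\ldots\xh^n$, and to send the token to $v_r$ on every history that ends in $(\xh^n,u)$ and whose $V_0$-projection is $\xh^0\ldots\xh^n$; since $\pi_1'$ has not deviated from $\pi_1$ before reaching such a history, the latter is reached with the same positive probability as under $\pi_1$. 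From $v_r$ the uniformly random successor equals $\xh^{n+1}$ with probability $1/|v_r|>0$, so $P_{\xh^0}^{\pi_0,\pi_1'}(\game\models \xh^0\ldots\xh^{n+1})>0$, which closes the induction.

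The main obstacle is the measure-theoretic step in the middle paragraph, passing from ``$\xs^0\ldots\xs^{n+1}$ is in the support of $P_{\xs^0}^\Cont$'' to ``the one-step probability $\Ker(\xh^{n+1}\mid\xs^n,u)$ is positive''. This requires unwinding the canonical path measure as an iterated integral of the kernel $\Ker$ against the policy $\Cont$ (using crucially that $\Cont$ is constant, equal to $\pi_0(\xh^n)$, on the whole cell $\xh^n$ because it refines a memoryless strategy), plus the minor subtlety that the cell $\xh^{n+1}$ need not be an open set, so one should either argue through an open neighbourhood of $\xs^{n+1}$ contained in $\xh^{n+1}$ (trajectory points on cell boundaries forming a negligible set) or appeal directly to the over-approximating reachable-set computation used to define $\Fo$. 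Everything after that is bookkeeping about the gadget of \REFdef{def:reduced game} and about patching a history-dependent $\p{1}$ strategy along a single finite prefix.
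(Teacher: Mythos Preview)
Your approach is essentially the same as the paper's: both argue that each one-step transition $\xh^i\to\xh^{i+1}$ is realizable in the gadget because $\xh^{i+1}\in\Fo(\xh^i,\pi_0(\xh^i))$, and then compose these into a positive-probability run via a suitable $\p{1}$ strategy (your explicit $v_r=\Fu(\xh^n,u)\cup\{\xh^{n+1}\}$ and history-dependent patching of $\pi_1$ make precise what the paper leaves implicit in a three-line argument). The measure-theoretic subtlety you flag---passing from ``in the support of $P_{\xs^0}^\Cont$'' to $\Ker(\xh^{n+1}\mid\xs^n,u)>0$---is glossed over entirely in the paper's own proof, so it is not a gap specific to your argument.
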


\begin{proof}
	The initial state $\xs^0\in \xh^0$, and for every $0\leq i < n$, from the definition of $\Fo$ it follows that $\xh^{i+1}\in \Fo(\xh^i, \pi_0(\xh^i))$.
	Thus, from every $\p{1}$ vertex $(\xh^i,\pi_0(\xh^i))$, there is a successor vertex in $V_r$ whose successor is $\xh^{i+1}$.
	Hence, for every $0\leq i<n$, there is some move of $\p{1}$ which causes a transition to $\xh^{i+1}$ with some positive probability $p^i$.
	Then $P_{\xh^0}^{\pi_0,\pi_1}(\game\models \xh^0\ldots \xh^n) = \prod_{i=0}^{n-1} p^i > 0$. 
\end{proof}

\smallskip
\noindent{\textbf{Step 6: The final assembly of the proof.} 
Finally, we finish the proof of \REFthm{thm:main} by stitching everything together.
	It is known that memoryless strategies suffice for winning almost surely in $2\sfrac{1}{2}$-player parity games \citep{zielonka2004perfect}.
	Let $\pi_0^*\in \Pi^{\mathrm{DM}}_0$ be the witness strategy of $\p{0}$ to almost surely win from the vertex $\xh^\ast$ in the game $\tup{\game,\widehat{\mathcal{P}}}$, and $\Cont^\ast$ be the refinement of $\pi_0^*$.
	We claim that $\xh^\ast\subseteq \W(\Sys,\Cont^\ast)$.

	We will show that for \emph{every} finite path of $\Sys$ starting within $\xh^\ast$ and ending in some odd priority state $B_i$, eventually either $B_i$ will not be visited any more, or a state of higher even color will be visited.
	Then the claim will follow from \REFlem{lem:sufficient condition for parity in CMP}.
	%
	We know from \REFlem{lem:existence of abstract trajectory} that existence of a finite path $\xs^0\ldots \xs^n$ of $\Sys$ implies existence of an abstract run $\xh^0\ldots \xh^n$ such that $\sup_{\pi_1\in \Pi_1}P_{\xh^0}^{\pi_0,\pi_1} (\game\models \xh^0\ldots\xh^n)>0$.
	Moreover, due to the priority preserving partitions we have $\xh^n\in \Bh_i$, where $i$ is odd.
	Since $\pi_0^*$ is an almost sure winning strategy, hence, by using \REFlem{lem:decompose proof of liveness}, we know that the following holds:
	\begin{align}\label{eq b}
		\inf_{\pi_1\in \Pi_1} P_{\xh^n}^{\pi_0^*,\pi_1} \left( \game\models \lozenge \left( \square \lnot \Bh_i \vee \cup_{j\ineven\inctwo{i+1}{k}} \Bh_j \right)\right) = 1.
	\end{align}
	From \REFprop{prop:sound safety}, we know that the set of abstract states from which the specification $\square \lnot \Bh_i$ is satisfied (almost) surely using the strategy $\pi_0^*$ are also the set of continuous states from which the specification $\square\lnot B_i$ is satisfied almost surely using the controller $\Cont^*$.
	Together with \REFprop{prop:sound reach} and \REFlem{lem:sufficient condition for parity in CMP}, we can infer \REFthm{thm:main} from \eqref{eq b}.

\section{Numerical Example}
\label{sec:case_study}

We consider the controller synthesis problems for a two-dimensional stochastic bistable switch with a couple of different parity specifications; the examples have been adopted, mutatis mutandis, from the work of \citet{Coogan20}.
The dynamics of the system is modeled by the following difference equations:
\begin{align}
	&s_1^{k+1} = s_1^k + \left( -a\cdot s_1^k + s_2^k \right)\cdot \tau + u_1^k + \varsigma_1^k\\
	&s_2^{k+1} = s_2^k + \left( \frac{\left(s_1^k\right)^2}{\left( s_1^k\right)^2 +1} - b\cdot s_2^k \right)\cdot \tau + u_2^k + \varsigma_2^k,
\end{align}
where $s_1,s_2$ are the state variables, $u_1,u_2$ are the control inputs, $a,b$ are constant parameters, $\tau$ is the sampling time, and $\varsigma_1,\varsigma_2$ are stochastic noises.
We assume that the domain of the state variables is $[0.0,4.0]\times [0.0,4.0]$, and we saturate the state trajectories at the boundary of the domain.
We consider a finite set of values for both of the control inputs: for every $k$, $u_1^k,u_2^k\in \set{-0.05,0.0,0.05}$.
The values of the constants are given by: $a=1.3$, $b=0.25$, and $\tau = 0.05$.
Finally, we assume that the stochastic noise samples $(\varsigma_1^k,\varsigma_2^k)$ are drawn from a piecewise continuous density function with the support $D = [-0.4,-0.2]\times [-0.4,-0.2]$.\footnote{\citet{Coogan20} considered the density function to be given by truncated Gaussian distribution with support $D$.
In our work, we disregard the shape of the distribution because we restrict the focus to only the qualitative satisfaction of the specification.}

Let $A,B,C,D$ be sets of states, as shown in Fig.~\ref{fig:state predicates}.
We are interested in synthesizing the almost sure winning controllers for the above system for the following two LTL specifications:
\begin{align*}
	\varphi_1 &\coloneqq \square\left( \left( \lnot A\wedge \bigcirc A\right) \rightarrow \left( \bigcirc\bigcirc A\wedge \bigcirc\bigcirc\bigcirc A\right) \right),\\
	\varphi_2 &\coloneqq \left( \square\lozenge B\rightarrow \lozenge C\right) \wedge \left( \lozenge D\rightarrow \square\lnot C\right).
\end{align*}
The specifications $\varphi_1$ and $\varphi_2$ can be represented using the parity automata \footnote{\citet{Coogan20} modeled $\varphi_1$ and $\varphi_2$ using Rabin automata, and we transformed them into (language-) equivalent parity automata to match our setup.} shown in Fig.~\ref{fig:parity condition 1} and \ref{fig:parity condition 2}.
Firstly, for each of the two specifications, we compute a product with the system model.
Secondly, we apply the algorithm from Sec.~\ref{synthesis_game} on the product system to solve the synthesis problem.
Both of these steps have been implemented on the open-source tool \textsf{Mascot-SDS} \citep{majumdar2020symbolic}.
By symbolically encoding the abstract $2\sfrac{1}{2}$-player game using BDD-s, and by using sophisticated acceleration techniques for solving symbolic fixpoint algorithms from the literature \citep{long1994improved,piterman2006faster}, we achieve significant improvement in performance, in comparison with the implementation of the enumerative algorithm of \citet{Coogan20} in the tool called \textsf{StochasticSynthesis}.\footnote{Repository: \texttt{https://github.com/gtfactslab/StochasticSynthesis}, commit nr.: \texttt{888b9dcf67369a732b8c225d790bd3343e4442e5}}

The tool \textsf{StochasticSynthesis} has a couple of additional features compared to our implementation in \textsf{Mascot-SDS}.
Firstly, it performs an adaptive abstraction refinement procedure for achieving better computational efficiency over uniform abstractions.
This is an orthogonal optimization tool that is known to be effective in discretization-based approaches for controller synthesis \citep{SA13, DBLP:journals/tac/GirardGM16,nilsson2017augmented,hsu2018multi}, and we expect that our uniform abstraction-based synthesis procedure will benefit further from this in the future.
Secondly, \textsf{StochasticSynthesis} also addresses the quantitative aspect of the synthesis problem, which is to maximize the probability of satisfying the given specification.
In all our experiments, we disabled this second feature of \textsf{StochasticSynthesis}, since the quantitative part is not the main algorithmic contribution of our paper.

We performed all the experiments on a Macbook Pro (2015) laptop equipped with a 2.7 GHz Intel Core i5 processor and a 16 GB RAM.

We performed two sets of experiments to compare the performance of \textsf{Mascot-SDS} and \textsf{StochasticSynthesis}, one with the adaptive refinement feature of \textsf{StochasticSynthesis} disabled and one with the same enabled.
The results have been summarized in Tab.~\ref{table:performance summary}, Fig.~\ref{fig:performance plot}, and Fig.~\ref{fig:visualization of winning regions}.
All the experiments empirically show that the approximation error reduces with finer discretization, as already mentioned in Rem.~\ref{remark:quality of approximation}.
We highlight the other main findings in the following:
(A) When the abstraction-refinement feature of \textsf{StochasticSynthesis} was disabled, \textsf{Mascot-SDS} outperformed \textsf{StochasticSynthesis} by a large margin.
In fact, for some levels of discretization, \textsf{StochasticSynthesis} crashed due to memory limitation, whereas \textsf{Mascot-SDS} consumed quite manageable amount of memory and synthesized controllers within reasonable amount of time.
(B) Even when the abstraction-refinement feature of \textsf{StochasticSynthesis} was enabled, for achieving the same level of approximation error, \textsf{Mascot-SDS} was significantly faster for $\varphi_1$ and was competitive for $\varphi_2$, and consumed much less memory for both $\varphi_1$ and $\varphi_2$:
 For $\varphi_1$, at one point \textsf{Mascot-SDS} was more than $150$ times faster and consumed around $150$ times lesser memory.
These findings demonstrate the superior capabilities of our symbolic solution approach, which can be potentially further improved by using adaptive refinement techniques.

\begin{figure}
	\centering
	\subfloat[][The state predicates.]{
	\begin{tikzpicture}[scale=0.8]
		\draw [draw=black!50!white] (0,0) grid  (4,4) rectangle (0,0);
		\draw [fill=red,opacity=0.3]	(1,1)	--	(3,1)	--	(3,3)	--	(2,3)	--	(2,2)	--	(1,2)	--	cycle;
		\draw [fill=green,opacity=0.3]	(0,0)	rectangle	(1,1);
		\draw [fill=blue,opacity=0.3]	(1,1)	rectangle	(2,2);
		\draw [fill=blue,opacity=0.3]	(3,1)	rectangle	(4,3);
		\draw [fill=blue,opacity=0.3]	(0,3)	rectangle	(1,4);
		\draw [fill=red!50!yellow,opacity=0.3]	(2,3)	rectangle	(3,4);
		
		\node at		(1.5,1.5)	{$A,C$};
		\node at		(2.5,1.5)	{$A$};
		\node at		(2.5,2.5)	{$A$};
		\node at		(0.5,0.5)	{$B$};
		\node at		(3.5,1.5)	{$C$};
		\node at		(3.5,2.5)	{$C$};
		\node at		(0.5,3.5)	{$C$};
		\node at		(2.5,3.5)	{$D$};
	\end{tikzpicture}
	\label{fig:state predicates}
	}
	\subfloat[][Parity automaton for $\varphi_1$; the prioritized partition is $\mathcal{P}=\tup{\emptyset,\set{q_2,q_3,q_4},\set{q_0,q_1}}$. $\lnot A$ stands for any element in $\set{\set{},B,C,D}$.]{
	\begin{tikzpicture}[scale=0.8]
		\tikzstyle{every node}=[font=\scriptsize]
		\node[state,initial]	(q0)	at	(0,0)	{$q_0$};
		\node[state]			(q1)	[right=of q0]	{$q_1$};
		\node[state]			(q2)	[below=of q1]	{$q_2$};
		\node[state]			(q3)	[below=of q0]	{$q_3$};
		\node[state]			(q4)	at	($(q0)!0.5!(q3)+(q0)!0.5!(q1)$)	{$q_4$};
		
		\draw[->]	(q0)	edge[loop above]	node	{$A$}		()
								edge						node[above] {$\lnot A$} (q1)
						(q1)	edge[loop above]	node {$\lnot A$}	()
								edge						node[right]	{$A$}		(q2)
						(q2)	edge						node[below]		{$\lnot A$}	(q3)
								edge						node[right,pos=0.8]		{$A$}			(q4)
						(q3)	edge[loop above]		node[left]		{$\lnot A,A$}	()
						(q4)	edge						node[below,pos=0.8]		{$A$}			(q0)
								edge						node[below,pos=0.1]		{$\lnot A$}	(q3);
	\end{tikzpicture}
	\label{fig:parity condition 1}
	}\\
	\subfloat[][Parity automaton for $\varphi_2$; the prioritized partition is $\mathcal{P}=\tup{\set{q_0,q_2,q_4},\set{q_1,q_3,q_5,q_6}}$.]{
	\begin{tikzpicture}[scale=0.5]
		\tikzstyle{every node}=[font=\scriptsize]
		\node[state,initial]	(q0)	at	(0,0)	{$q_0$};
		\node[state]			(q1)	[right=of q0]	{$q_1$};
		\node[state]			(q3)	at	($(q0)!0.5!(q1)+(0,-3)$)	{$q_3$};
		\node[state]			(q2)	[left=of q3]	{$q_2$};
		\node[state]			(q4)	[right=of q3]	{$q_4$};
		\node[state]			(q5)	at	($(q2)!0.5!(q3)+(0,-3)$)		{$q_5$};
		\node[state]			(q6)	at	($(q3)!0.5!(q4)+(0,-3)$)	{$q_6$};
		
		\draw[->]	(q0)	edge[loop above]	node	{$\set{},A$}	()
								edge[bend left]					node[above]	{$B$}		(q1)
								edge						node[pos=0.7,above]	{$C$}		(q4)
								edge						node[left]	{$D$}		(q2)
						(q1)	edge[loop above]	node	{$B$}			()
								edge				node[below]	{$\set{},A$}	(q0)
								edge						node	[right]			{$C$}	(q4)
								edge	[bend left=15]					node[above]				{$D$}	(q2)
						(q2)	edge	[loop left]		node				{$\set{},A,D$}	()
								edge					node	[below,pos=0.8]			{$B$}		(q3)
								edge						node	[left]			{$C$}	(q5)
						(q3)	edge	[bend left]					node[below]				{$\set{},A,D$}		(q2)
								edge [loop right]		node				{$B$}		()
								edge						node	[right]			{$C$}	(q5)
						(q4)	edge	[loop right]		node				{$\set{},A,B,C$}	()
								edge						node	[right]			{$D$}	(q6)
						(q5)	edge	[loop left]		node				{$\set{},A,B,C,D$}	()
						(q6)	edge	[loop right]		node				{$\set{},A,B,C,D$}	();
	\end{tikzpicture}
	\label{fig:parity condition 2}
	}
	\caption{The state predicates and the parity specifications.}
	\label{fig:example_spec}
\end{figure}
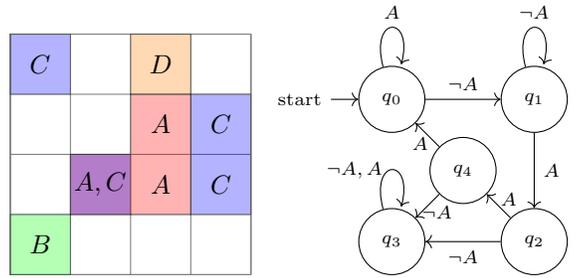
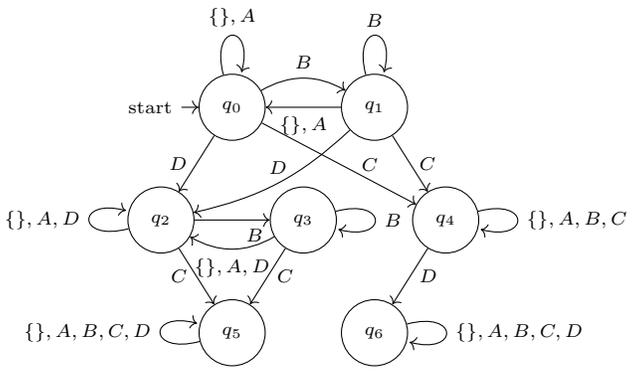

\begin{sidewaystable}
	\centering
	\begin{tabular}{
		|>{\centering\arraybackslash}m{1.5cm}|
		>{\centering\arraybackslash}m{2cm}|
		>{\centering\arraybackslash}m{1cm}|
		>{\centering\arraybackslash}m{1cm}|
		>{\centering\arraybackslash}m{1.4cm}|
		>{\centering\arraybackslash}m{1.4cm}|
		>{\centering\arraybackslash}m{1.4cm}|
		>{\centering\arraybackslash}m{1.4cm}|
		>{\centering\arraybackslash}m{1.4cm}|
		>{\centering\arraybackslash}m{1.4cm}|}
		\hline
		\multirow{3}{*}{\parbox{1.5cm}{\centering Specification}} & \multirow{3}{*}{\parbox{2cm}{\centering Size of \\ abstract states}} & \multicolumn{2}{c|}{Approximation error} & \multicolumn{2}{c|}{Abstraction time} & \multicolumn{2}{c|}{Total synthesis time} & \multicolumn{2}{c|}{Peak memory footprint}\\
		\cline{3-10}
		  & & {\multirow{2}{*}{\parbox{1cm}{\centering \textsf{Mascot-SDS}}}} & {\multirow{2}{*}{\parbox{1cm}{\centering \textsf{SS} }}}   & \multirow{2}{*}{\parbox{1.5cm}{\centering \textsf{Mascot-SDS}}} & \multirow{2}{*}{\parbox{1.5cm}{\centering \textsf{SS}}} & \multirow{2}{*}{\parbox{1.5cm}{\centering \textsf{Mascot-SDS}}} & \multirow{2}{*}{\parbox{1.5cm}{\centering \textsf{SS} }} & \multirow{2}{*}{\parbox{1.5cm}{\centering \textsf{Mascot-SDS}}} & \multirow{2}{*}{\parbox{1.5cm}{\centering \textsf{SS} }} \\
		  & & &  & & &  & & & \\
		\hline 
		\multirow{5}{*}{$\varphi_1$} & $\sfrac{1}{2}\times \sfrac{1}{2}$ & \tablenum{7.0} & \tablenum{9.0}  & \num{0.003}\,\si{\second} & \num{0.4}\,\si{\second} & \num{0.02}\,\si{\second}  & \num{8}\,\si{\second} & \num{21}\,\si{\mebi\byte} & \num{125}\,\si{\mebi\byte} \\
		& $\sfrac{1}{4}\times \sfrac{1}{4}$ & \tablenum{6.6} & \tablenum{5.9}  & \num{0.04}\,\si{\second}  & \num{13}\,\si{\second} & \num{0.2}\,\si{\second}  & \num{18}\,\si{\second} & \num{23}\,\si{\mebi\byte} & \num{1}\,\si{\gibi\byte}  \\
		& $\sfrac{1}{8}\times \sfrac{1}{8}$ & \tablenum{4.0} & \tablenum{3.0}  & \num{0.2}\,\si{\second} & \num{35}\,\si{\minute}\,\num{5}\,\si{\second} &  \num{0.7}\,\si{\second} & \num{9}\,\si{\minute}\,\num{18}\,\si{\second} & \num{26}\,\si{\mebi\byte} & \num{81}\,\si{\gibi\byte} \\
		& $\sfrac{1}{16}\times \sfrac{1}{16}$  & \tablenum{1.7} & OoM & \num{0.9}\,\si{\second}  & OoM  & \num{5}\,\si{\second}  & OoM & \num{50}\,\si{\mebi\byte} & \num{127}\,\si{\gibi\byte} \\
		& $\sfrac{1}{32}\times \sfrac{1}{32}$ & \tablenum{0.8} & OoM & \num{5}\,\si{\second}  & OoM &  \num{37}\,\si{\second} & OoM & \num{171}\,\si{\mebi\byte} & \num{127}\,\si{\gibi\byte} \\
		 \hline
		 \multirow{5}{*}{$\varphi_2$} & $\sfrac{1}{2}\times \sfrac{1}{2}$ & \tablenum{11.0} & \tablenum{11.8}  & \num{0.004}\,\si{\second} & \num{0.6}\,\si{\second} & \num{5}\,\si{\second} & \num{30}\,\si{\second} & \num{864}\,\si{\mebi\byte} & \num{156}\,\si{\mebi\byte} \\
		& $\sfrac{1}{4}\times \sfrac{1}{4}$ & \tablenum{6.8}  & \tablenum{3.4}  & \num{0.02}\,\si{\second} & \num{15}\,\si{\second} & \num{13}\,\si{\second}  & \num{55}\,\si{\second} & \num{866}\,\si{\mebi\byte} & \num{1}\,\si{\gibi\byte} \\
		& $\sfrac{1}{8}\times \sfrac{1}{8}$ & \tablenum{2.0} & \tablenum{1.8} & \num{0.2}\,\si{\second} & \num{34}\,\si{\minute}\,\num{20}\,\si{\second} & \num{1}\,\si{\minute}\,\num{10}\,\si{\second} & \num{16}\,\si{\minute}\,\num{1}\,\si{\second} & \num{890}\,\si{\mebi\byte} & \num{81}\,\si{\gibi\byte} \\
		& $\sfrac{1}{16}\times \sfrac{1}{16}$ & \tablenum{1.1} & OoM & \num{0.9}\,\si{\second} & OoM & \num{4}\,\si{\minute}\,\num{37}\,\si{\second} & OoM & \num{994}\,\si{\mebi\byte} & \num{126}\,\si{\gibi\byte} \\
		& $\sfrac{1}{32}\times \sfrac{1}{32}$ & \tablenum{0.6} & OoM & \num{5}\,\si{\second}   & OoM  & \num{45}\,\si{\minute}\,\num{39}\,\si{\second} & OoM & \SI{1}{\gibi\byte} & \SI{127}{\gibi\byte} \\
		\hline
	\end{tabular}
	\caption{Performance comparison between \textsf{Mascot-SDS} and \textsf{StochasticSynthesis} (abbreviated as \textsf{SS}) \citep{Coogan20} when both tools are restricted to use \emph{uniform grid-based abstraction} only.
	Col.~$1$ shows the specification considered, Col.~$2$ shows the size of each individual abstract state, Col.~$3$ and $4$ compare the approximation errors, Col.~$5$ and $6$ compare the abstraction computation times, Col.~$7$ and $8$ compare the total synthesis times (combined time for computing the over- and the under-approximations of the almost sure winning regions), and Col.~$9$ and $10$ compare the peak memory footprint (as measured using the ``time'' command) for both tools.
	``OoM'' stands for out-of-memory.
	The length of the sides of the abstract states is measured in units, and the approximation error is measured in sq.\ units (the area covered by the abstract states which are in the over-approximation but not in the under-approximation).}
	\label{table:performance summary}
\end{sidewaystable}

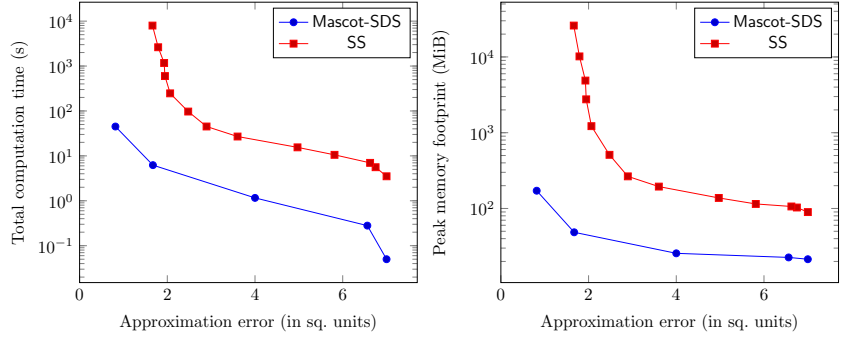
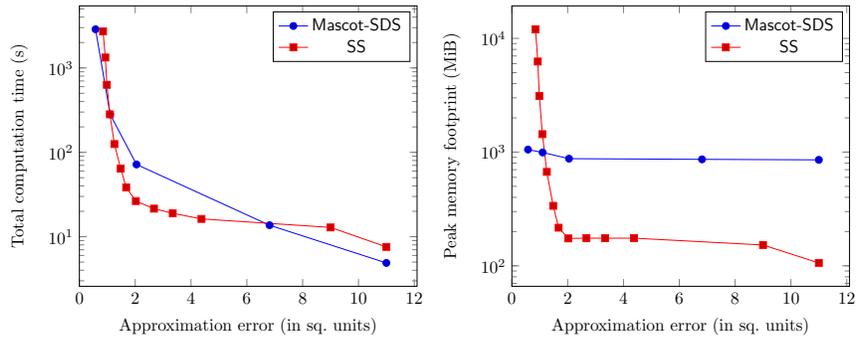
\begin{figure}
	\centering
	\subfloat[][Specification $\varphi_1$]{
		\begin{tikzpicture}[scale=0.65]
			\begin{axis}[
						width=0.7\columnwidth,
						xlabel=Approximation error (in sq.\ units),
						ylabel=Total computation time (\si{\second}),
						xmin=0, ymin=0,
						ymode=log, log basis y={10}]
			\addplot table [y=tM_S1, x=VM_S1]{FIGURES/data_mascot.txt};
			\addlegendentry{\textsf{Mascot-SDS}}
			\addplot table [y=tSS_S1, x=VSS_S1]{FIGURES/data_ss.txt};
			\addlegendentry{\textsf{SS}}
			\end{axis}
		\end{tikzpicture}
		\begin{tikzpicture}[scale=0.65]
			\begin{axis}[
						width=0.7\columnwidth,
						xlabel=Approximation error (in sq.\ units),
						ylabel=Peak memory footprint (\si{\mebi\byte}),
						xmin=0, ymin=0,
						ymode=log, log basis y={10}]
			\addplot table	[y=mM_S1, x=VM_S1]{FIGURES/data_mascot.txt};			
			\addlegendentry{\textsf{Mascot-SDS}}
			\addplot table	[y=mSS_S1, x=VSS_S1]{FIGURES/data_ss.txt};
			\addlegendentry{\textsf{SS}}
			\end{axis}
		\end{tikzpicture}
	}\\
	\subfloat[][Specification $\varphi_2$]{
		\begin{tikzpicture}[scale=0.65]
			\begin{axis}[
						width=0.7\columnwidth,
						xlabel=Approximation error (in sq.\ units),
						ylabel=Total computation time (\si{\second}),
						xmin=0, ymin=0,
						ymode=log, log basis y={10}]
			\addplot table [y=tM_S2, x=VM_S2]{FIGURES/data_mascot.txt};
			\addlegendentry{\textsf{Mascot-SDS}}
			\addplot table [y=tSS_S2, x=VSS_S2]{FIGURES/data_ss.txt};
			\addlegendentry{\textsf{SS}}
			\end{axis}
		\end{tikzpicture}
		\begin{tikzpicture}[scale=0.65]
			\begin{axis}[
						width=0.7\columnwidth,
						xlabel=Approximation error (in sq.\ units),
						ylabel=Peak memory footprint (\si{\mebi\byte}),
						xmin=0, ymin=0,
						ymode=log, log basis y={10}]
			\addplot table	[y=mM_S2, x=VM_S2]{FIGURES/data_mascot.txt};			
			\addlegendentry{\textsf{Mascot-SDS}}
			\addplot table	[y=mSS_S2, x=VSS_S2]{FIGURES/data_ss.txt};
			\addlegendentry{\textsf{SS}}
			\end{axis}
		\end{tikzpicture}
	}
	\caption{Performance comparison between \textsf{Mascot-SDS} and \textsf{StochasticSynthesis} (abbreviated as \textsf{SS}) \citep{coogan2015efficient} when the latter \emph{was allowed to use its inbuilt abstraction refinement process for better performance}.
	The different points on the plots were obtained by running \textsf{Mascot-SDS} using different sizes of abstract states, and running \textsf{SS} using different numbers of refinement stages.}
	\label{fig:performance plot}
\end{figure}

\begin{figure*}
	\centering
	\subfloat[][$\varphi_1$ using \textsf{StochasticSynthesis}]{
		\includegraphics[scale=0.35]{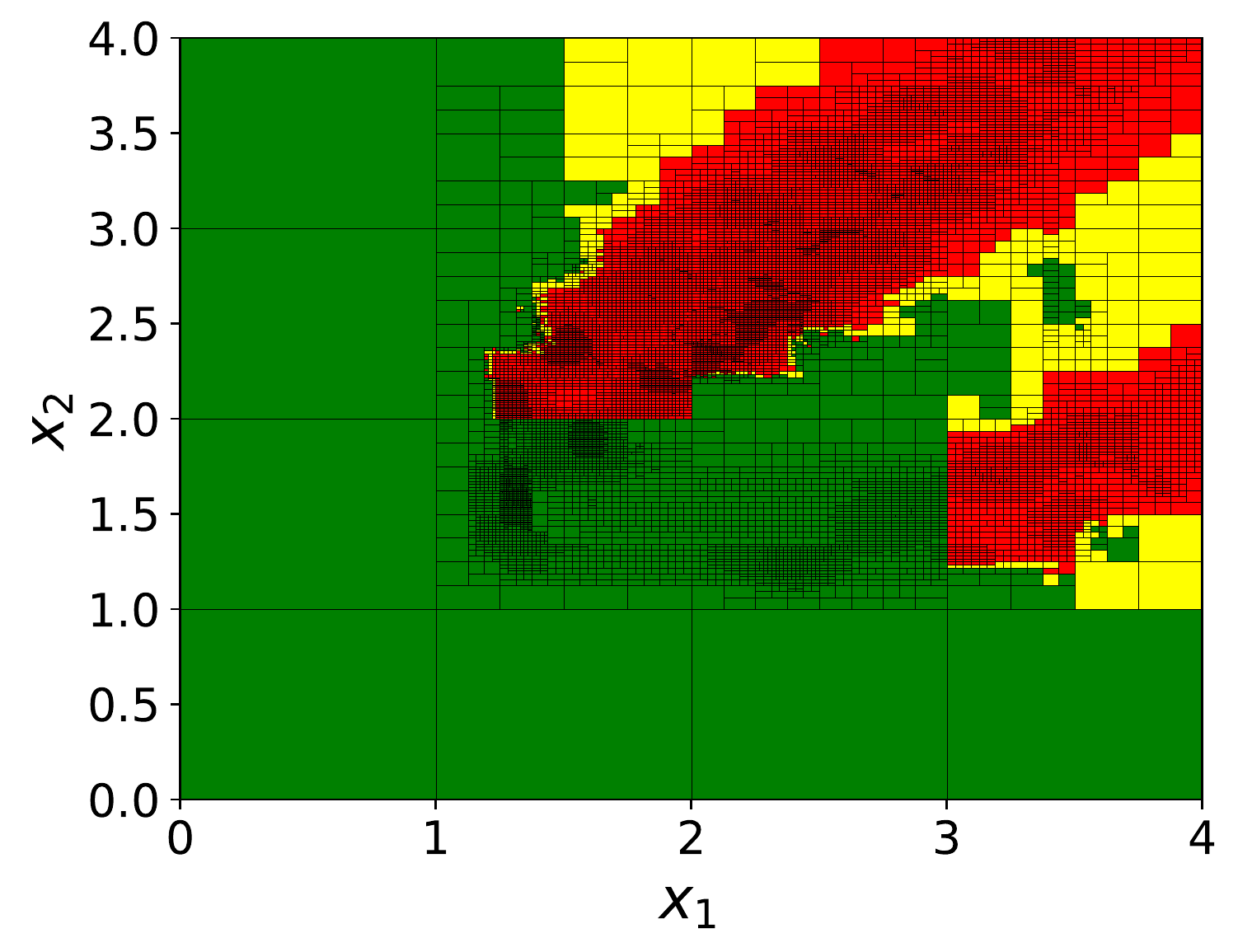}
	}
	\subfloat[][$\varphi_1$ using \textsf{Mascot-SDS}.]{
		\includegraphics[trim={0 6.75cm 0 0},scale=0.294]{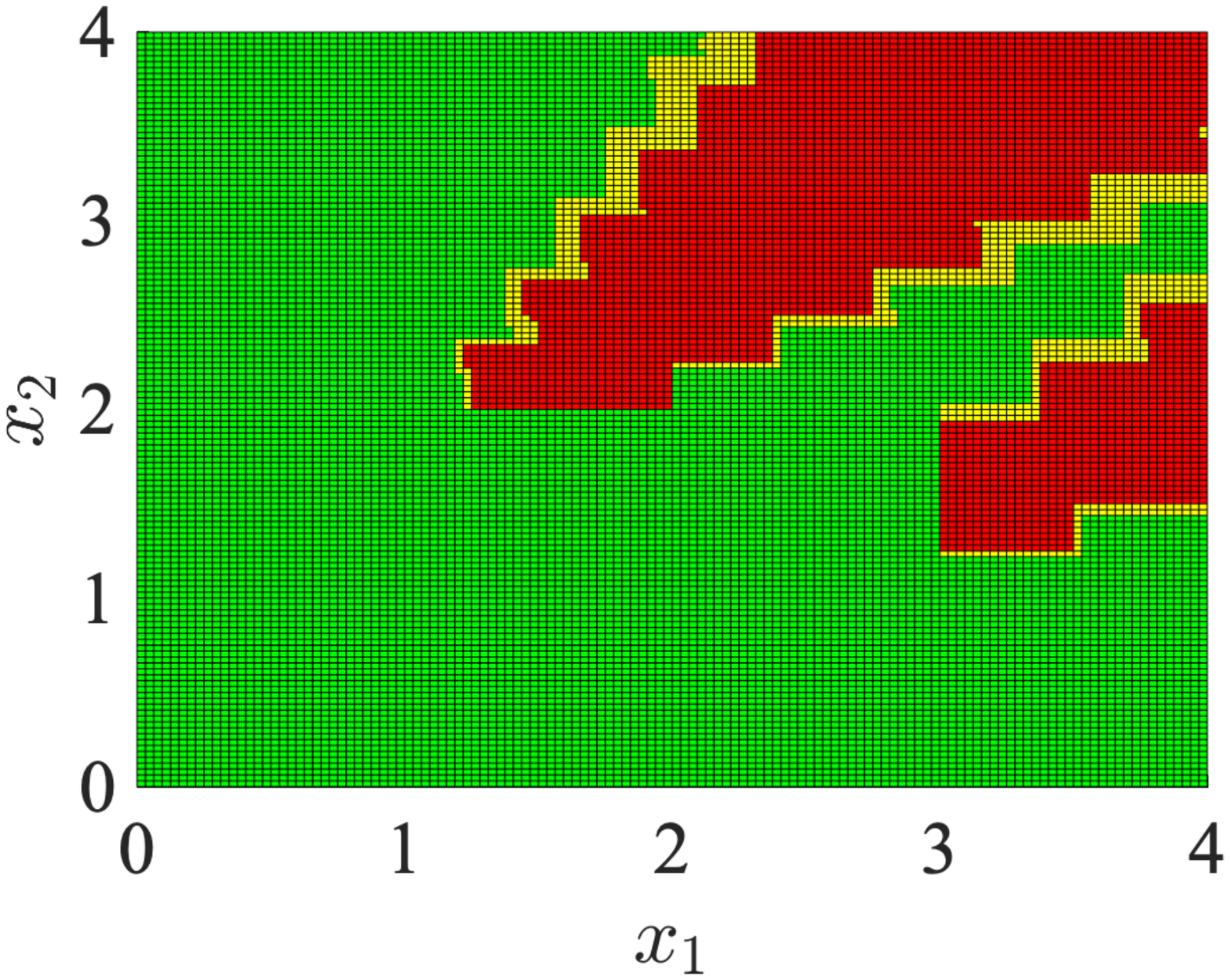}
	}\\
	\subfloat[][$\varphi_2$ using \textsf{StochasticSynthesis}.]{
		\includegraphics[scale=0.35]{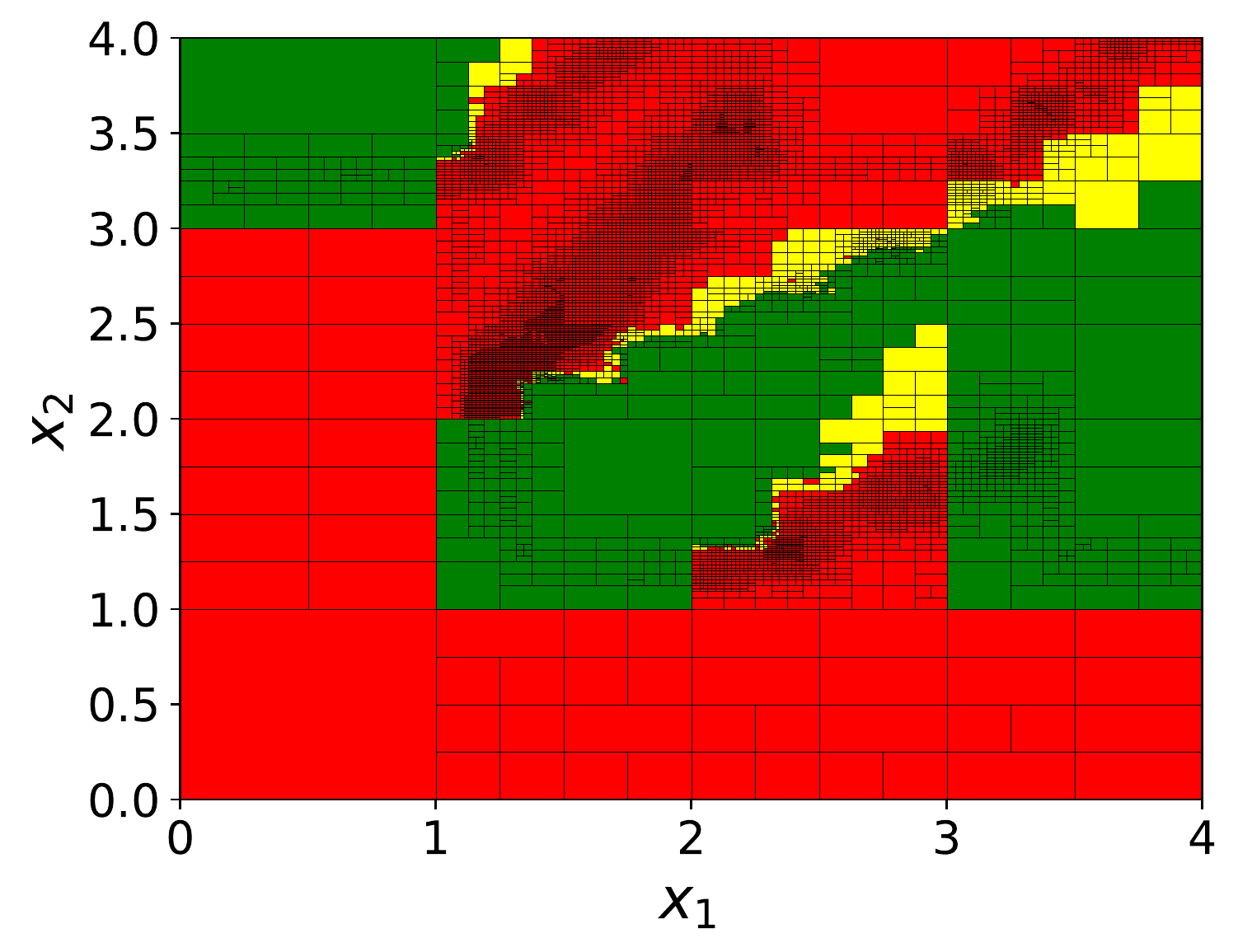}
	}
	\subfloat[][$\varphi_2$ using \textsf{Mascot-SDS}.]{
		\includegraphics[trim={0 6.75cm 0 0},scale=0.294]{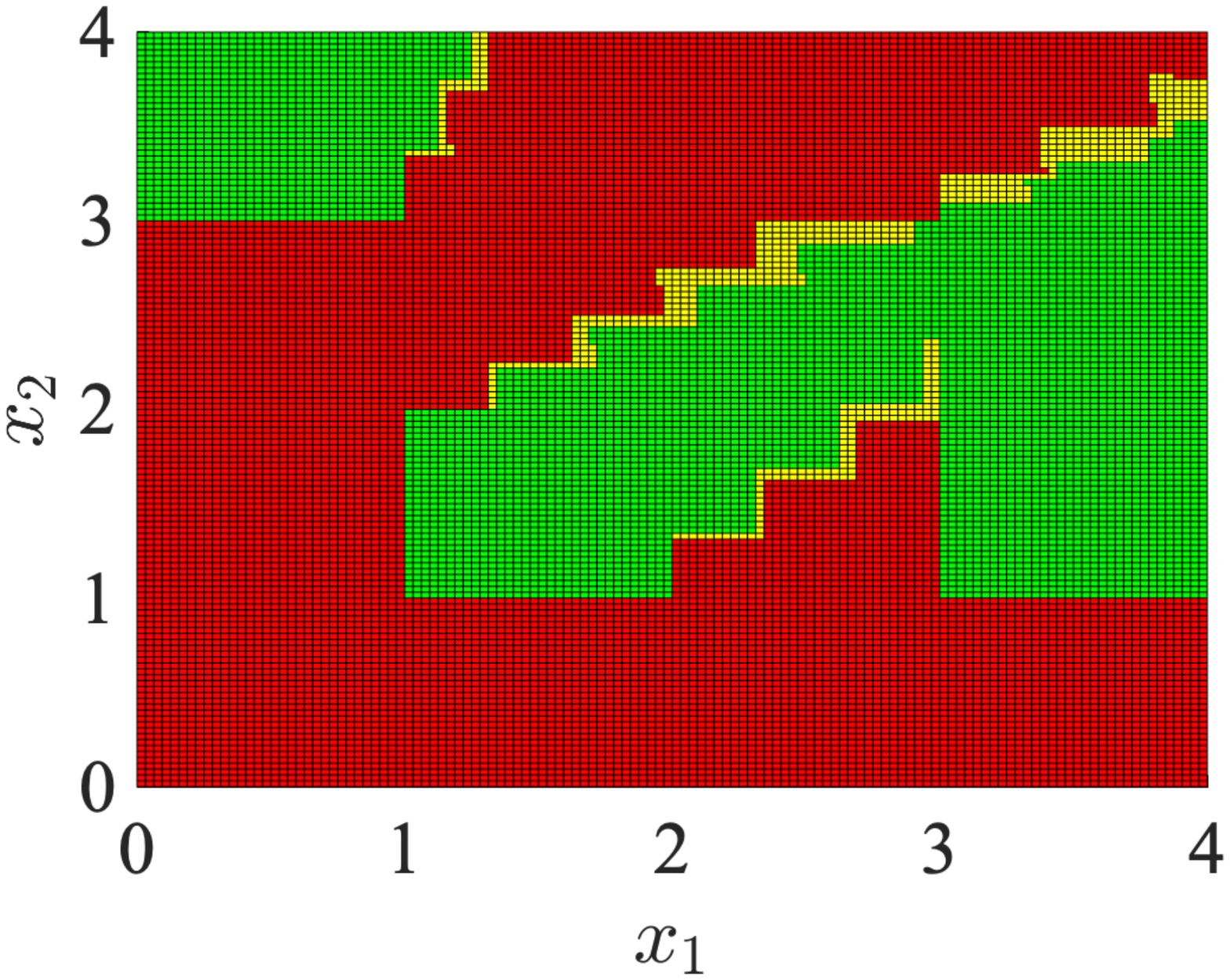}
	}
	\caption{Visualization of the under-approximations---in green---and the over-approximations---in yellow and green, combined---of the almost sure winning regions for the specifications $\varphi_1$ and $\varphi_2$ as computed using the tools \textsf{Mascot-SDS} and \textsf{StochasticSynthesis}; in red are the complements of the over-approximations.
	For \textsf{Mascot-SDS}, the abstract states were chosen of the size $\sfrac{1}{32}\times \sfrac{1}{32}$.
	For \textsf{StochasticSynthesis}, initially the abstracts states were chosen of the size $1 \times 1$, and then the tool was made to execute $14$ refinement steps to improve the approximation of the solution.}
	\label{fig:visualization of winning regions}
\end{figure*}

\section{Acknowledgements}
R. Majumdar and K. Mallik are funded through the DFG project 389792660 TRR 248--CPEC, A.-K. Schmuck is funded through the DFG project (SCHM 3541/1-1), and S. Soudjani is funded through the EPSRC New Investigator Award CodeCPS (EP/V043676/1).

\bibliography{reportbib}


\section{Appendix}

\subsection{Proof of Statements} 

\begin{proof}[Proof of \REFthm{thm:the problem can be broken down}]
	By the definition of the winning set, we already know that $P_s^{\rho}(\mathfrak S\models \parity) = 1$ for all $s\in \W(\Sys,\rho)$. 
	Take any $s\notin W\coloneqq \W(\Sys,\rho)$. 
	Define $\tau$ to be the first time step when the path visits the set $W$. 
	Note that $\tau$ is a random variable taking values in $\nat\cup\set{\infty}$.
	We use the law of total probability by making the event $(\mathfrak S\models \parity )$ conditional on $\tau$. 
	Then we have
	\begin{align*}
	P_s^{\rho}& (\mathfrak S\models \parity)  \\
	&  = \sum_{n=0}^{\infty} P_s^{\rho}(\mathfrak S\models \parity\,|\, \tau=n)P_s^{\rho}(\tau=n)\\
	& + P_s^{\rho}(\mathfrak S\models \parity\,|\, \tau=\infty)P_s^{\rho}(\tau=\infty)\\
	& = \mathbb E_s^{\rho}\left[P_{s^n}^{\rho}(\mathfrak S\models \parity\,|\, s^1,s^2,\ldots,s^n, \tau=n)\right]\\
	& + P_s^{\rho}(\mathfrak S\models \parity \wedge \tau=\infty)\\
	& =^\ast \sum_{n=0}^{\infty} P_s^{\rho}(s^1,s^2,\ldots,s^{n-1}\in\mathcal S\backslash W, s^n\in W)\\
	& + P_s^{\rho}(\mathfrak S\models \parity \,\wedge\, \mathfrak S\models \square \mathcal S\backslash W)\\
        &\ge P_s^{\rho}(\mathfrak S\models \lozenge W).
	\end{align*}
	The equality (*) holds due to $s^n\!\in\! W$ and $P_{s^n}^{\rho}(\mathfrak S\!\models\! \parity)\!=\!1$.
\end{proof}

\begin{proof}[Proof of \REFthm{thm:absorbing}]
For any $s\in W$, we have
\begin{align*}
&P_s^{\rho}(\mathfrak S\models \parity)\\
& 
= \int_{\mathcal S}P_{s_1}^{\rho}(\mathfrak S\models \parity)\Ker(ds_1|s,\rho(s))\\
&=  \int_{W} \!\!\Ker(ds_1|s,\rho(s)) + \int_{\mathcal S\backslash W} \!\!\!\!P_{s_1}^{\rho}(\mathfrak S\models \parity)\Ker(ds_1|s,\rho(s)).
\end{align*}
This means 
\begin{align*}
&\int_{\mathcal S\backslash W} (1-P_{s_1}^{\rho}(\mathfrak S\models \parity))\Ker(ds_1|s,\rho(s))=0\Rightarrow\\
& \forall \epsilon>0,P_s^{\rho}\left[(1\!-\!P_{s_1}^{\rho}(\mathfrak S\models \parity))\mathbf{1}_{\mathcal S\backslash W}(s_1)\ge \epsilon\right]\!\le\! \frac{0}{\epsilon} \!=\! 0,
\end{align*}
where the last inequality is a consequence of Markov's inequality for non-negative random variables. By taking the union over a monotone positive sequence $\set{\epsilon_n\rightarrow 0}$, we get
\begin{align*}
&P_s^{\rho}\left[(1-P_{s_1}^{\rho}(\mathfrak S\models \parity))\mathbf{1}_{\mathcal S\backslash W}(s_1)>0 \right] = 0,\\
&P_s^{\rho}\left[s_1\in \mathcal S\backslash W \text{ and } P_{s_1}^{\rho}(\mathfrak S\models \parity)<1\right] = 0,\\
&P_s^{\rho}\left[ s_1\in \mathcal S\backslash W\right] = 0.
\end{align*}
\end{proof}

\begin{proof}[Proof of Thm.~\ref{thm:correctness of 2.5 parity fp}]
	Consider the following expression:
	\begin{align}\label{equ:2.5 parity or reach}
	I_\ell(T) \coloneqq 
	& \nu Y_{\ell}.~\mu X_{\ell-1}\ldots ~\mu X_3.~\nu Y_{2}.~\mu X_1.\notag \\
	                &\quad(\Bh_1 \cap \apre(Y_2,X_1)) \cup (\Bh_2 \cap \cpre(Y_2)) \cup \notag\\
	                &\quad\quad (\Bh_3\cap \apre(Y_4, X_3)) \cup (\Bh_4\cap \cpre(Y_4)) \cup \notag\\
	                &\quad\quad\quad\ldots\notag\\ 
	                &\quad\quad\quad\quad (\Bh_{\ell-1}\cap \apre(Y_\ell,X_{\ell-1}))\cup  (\Bh_{\ell} \cap \cpre(Y_{\ell} )) \cup\notag \\
	                &\quad\quad\quad\quad\quad T.
	\end{align}
	We prove the stronger claim that the fixed point of $I_\ell(T)$ is the set of vertices from which $\p{0}$ can almost surely enforce the specification $\paritya\lor \lozenge T$.
	Then the claim will follow, by observing that \eqref{equ:2.5 parity} can be rewritten as $I_\ell(\emptyset)$.
	
	We prove the claim by induction over the number of priories $\ell$.
	The base case of our inductive proof is the case $\widehat{\mathcal{P}}_2=\set{\Bh_1,\Bh_2}$, where
	\begin{align}\label{equ:2 color 2.5 parity or reach}
		I_2(T) = \nu Y_2.~\mu X_1.~(\Bh_1\cap\apre(Y_2,X_1)) \cup (\Bh_2\cap \cpre(Y_2)) \cup T,
	\end{align}
	which is \emph{exactly} equal to the almost sure winning region for the winning condition $\varphi \coloneqq \square\lozenge\Bh_2 \lor  \lozenge T$ (by Lem.~\ref{lem:a.s. safe buchi or reach} with the observation that $\Bh_1 = \Bh_2^c$ since the priorities partition the set of vertices).
	The base case is established by observing that $\varphi=\mathit{Parity}(\widehat{\mathcal{P}}_2) \lor  \lozenge T$.
	
	As the induction hypothesis, suppose when the set of priorities is $\widehat{\mathcal{P}}_{\ell-2}$, then the fixed point of $I_{\ell-2}(T)$ is exactly equal to the almost sure winning region for the specification $\mathit{Parity}(\widehat{\mathcal{P}}_{\ell-2})\lor  \lozenge T$.
	We show that the claim will also hold when the set of priorites is $\widehat{\mathcal{P}}_{\ell}$.
	Our proof of the induction claim uses the observation that almost sure satisfaction of $\mathit{Parity}(\widehat{\mathcal{P}}_{\ell})\lor  \lozenge T$ is achieved by either almost surely reaching the almost sure winning region for $\mathit{Parity}(\mathcal{P}_{\ell-2})$ or almost surely reaching a BSCC where $\Bh_{\ell}$ or $T$ can be repeatedly visited.
	
	Suppose $\game=\tup{V,E,\tup{V_0,V_1,V_r}}$ is the game graph and $\widehat{\mathcal{P}}_{\ell}$ is the set of priorities.
	Observe that in $\game$, the almost sure winning region for $\mathit{Parity}(\widehat{\mathcal{P}}_{\ell-2})$ is not well-defined, because the priorities $\widehat{\mathcal{P}}_{\ell-2}$ no longer create a partition of the set of vertices as required by the parity condition (assuming $\Bh_{\ell-1}$ and $\Bh_\ell$ to be nonempty).
	For this subtle technical reason, we need to consider a subgame constructed by removing the $\Bh_{\ell-1}$ and $\Bh_\ell$ vertices from the vertices in $\game$.
	Formally, we will consider the subgame $\widetilde{\game}=\tup{\widetilde{V},\widetilde{E},\tup{\widetilde{V}_0,\widetilde{V}_1,\widetilde{V}_r}}$, where $\widetilde{V}\subset V$ is obtained by replacing all the $\Bh_{\ell}$ and $\Bh_{\ell-1}$ vertices in $V$ by a special sink vertex $v_{\mathit{sink}}$ in $\widetilde{V}$, and by setting 
	$\widetilde{E} \coloneqq (E \cap \widetilde{V}\times\widetilde{V}) \cup \set{(v,v_{\mathit{sink}}) \mid \exists v'\in \Bh_{\ell-1}\cup \Bh_\ell\;.\;(v,v')\in E} \cup \set{(v_{\mathit{sink},v_{\mathit{sink}}})}$,
	$\widetilde{V}_0 \coloneqq (V_0 \cap \widetilde{V}) \cup v_{\mathit{sink}}$,
	$\widetilde{V}_1 \coloneqq V_1\cap \widetilde{V}$, and
	$\widetilde{V}_r \coloneqq V_r \cap \widetilde{V}$.
	Let $v_{sink} \in \Bh_{\ell-3}$ (any odd priority will work).
	We define the almost sure winning region for the specification $\mathit{Parity}(\mathcal{\widehat{P}}_{\ell-2})$ in the game graph $\game$ as the almost sure winning region for the same specification in the subgame graph $\widetilde{\game}$.
	
	Now we rewrite $I_{\ell}(T)$ as in the following:
	\begin{align}
		I_{\ell}(T) =& \notag\\
			&\nu Y_{\ell}.~\mu X_{\ell-1}.~\notag\\
			&\quad I_{\ell-2}\left((\Bh_{\ell-1}\cap \apre(Y_{\ell},X_{\ell-1})) \cup (\Bh_{\ell}\cap \cpre(Y_{\ell}))\right) \cup T.\label{eq:parity l+2 color as a function of l color}
	\end{align}

	We establsih the soundness and the completeness claims separately:
	\begin{description}
		\item[Soundness:] Let $Y_{\ell}^*$ be the fixed point of the outer $\mu$-calculus iterations over $Y_{\ell}$ in the expression \eqref{eq:parity l+2 color as a function of l color}.
		For the last outer iteration with $Y_{\ell} = Y_{\ell}^*$, we obtain a growing sequence of $X_{\ell-1}$-s which represent the almost sure winning regions of the following specifications (applying the definition of $I_{\ell-2}(\cdot)$):
		\begin{align}
			X_{\ell-1}^0 &: \emptyset\notag\\
			X_{\ell-1}^1 &: \mathit{Parity}(\widehat{\mathcal{P}}_{\ell-2}) \lor \lozenge \left(\Bh_{\ell}\cap \cpre(Y_{\ell}^*)\right) \lor T\notag\\
			X_{\ell-1}^2 &: \mathit{Parity}(\widehat{\mathcal{P}}_{\ell-2}) \lor  \lozenge\left((\Bh_{\ell-1}\cap \apre(Y_{\ell}^*,X_{\ell-1}^{1})) \cup (\Bh_{\ell}\cap \cpre(Y_{\ell}^*))\right) \lor T\notag\\
			\vdots\notag\\
			X_{\ell-1}^p &: \mathit{Parity}(\widehat{\mathcal{P}}_{\ell-2}) \lor \lozenge \left((\Bh_{\ell-1}\cap \apre(Y_{\ell}^*,X_{\ell-1}^{p-1})) \cup (\Bh_{\ell}\cap \cpre(Y_{\ell}^*))\right) \lor T, \label{eq:parity outermost x sequence}
		\end{align}
		where $p\geq 1$ is the index at which the fixpoint over $X_{\ell-1}$ converges.
		We use a similar argument as in Lem.~\ref{lem:a.s. safe buchi or reach}:
		First, let us assume that the all the vertices in the almost sure winning region of $\mathit{Parity}(\widehat{P}_{\ell-2})$ and the set $T$ are transformed into sink states with self-loops.
		(We do not care anymore once the runs reach these sets.)
		Consider any vertex $v\in Y_{\ell}^*$. 
		From the sequence \eqref{eq:parity outermost x sequence}, we can infer that there is a $\p{0}$ strategy $\pi_0$ such that for every strategy $\pi_1$ of $\p{1}$, there is a run of positive probability of length $\leq p$ that \emph{either} (a) reaches $\Bh_{\ell}\cap \cpre(Y_{\ell}^*)$ \emph{or} (b) reaches the almost sure winning region for $\mathit{Parity}(\widehat{\mathcal{P}}_{\ell-2})$ \emph{or} (c) reaches $ T$.
		Moreover, $\pi_0$ ensures that no run from $v$ goes outside $Y_{\ell}^*$ even after reaching $\Bh_{\ell}$, so that (a) can be repeated or (b) or (c) can be eventually executed with positive probability.
		These imply that every path in the Markov chain, induced on $\game$ by $\pi_0$ and any arbitrary $\p{1}$ strategy $\pi_1$, will eventually reach a BSCC that \emph{either} (a) intersects with $\Bh_{\ell}$ \emph{or} (b) is in the almost sure winning region for $\mathit{Parity}(\widehat{\mathcal{P}}_{\ell-2})$ \emph{or} (c) is in $ T$.
		Then the induction claim follows from the ergodicity property of Markov chains (see \citet[Prop.~4.27, 4.28]{kemenydenumerable}).
		\item[Completeness:]
		Let $Y_{\ell}^0=V\supseteq Y_{\ell}^1\supseteq Y_{\ell}^2\supseteq \ldots \supseteq Y_{\ell}^q$ be the sequence of sets of vertices represented by the $Y_{\ell}$ fixpoint variable, where $q$ is the iteration index where the fixpoint converges (i.e.\ the minimum index $q$ such that $Y_{\ell}^q=Y_{\ell}^{q+1}$).
		We show, by induction, that for every iteration index $i\in [0;q]$, the almost sure winning region for the parity specification is contained in $Y_{\ell}^i$.
		
		The base case is when $i=0$, in which case the claim follows trivially by observing that $Y_{\ell}^0=V$.
		Suppose the claim holds for iteration $i$, i.e.\ $Y_{\ell}^i$ contains the almost sure winning region and every vertex outside $Y_{\ell}^i$ is losing with positive probability.
		When the iterations over $Y_{\ell}$ reaches $i+1$, for every vertex $v$ in $Y_{\ell}^{i+1}\setminus Y_{\ell}^{i}$, for every $\p{0}$ strategy there is a $\p{1}$ strategy such that
		either (a) the plays eventually exit $Y_{\ell}^i$, surely or with positive probability, or (b) no positive probability path exists from $v$ that eventually reaches $\Bh_{\ell}\cap \cpre(Y_{\ell}^i)$ or $T$ or the almost sure winning region for $\mathit{Parity}(\widehat{\mathcal{P}}_{\ell-2})$.
		(Both follow from the definition of $\cpre$ and $\apre$ and the fixpoint expression.)
		Both (a) and (b) result in failure to satisfy the parity winning condition $\mathit{Parity}(\widehat{\mathcal{P}}_{\ell})$, which imply that every vertex $v$ in $Y_{\ell}^{i+1}\setminus Y_{\ell}^{i}$ does not belong to the almost sure winning region of $\mathit{Parity}(\widehat{\mathcal{P}}_{\ell})$.
		This proves our induction claim.
	\end{description}

\end{proof}
 
\end{document}